\newcommand{\abs}[1]{\lvert #1 \rvert}
\newcommand{\floor}[1]{\lfloor #1 \rfloor}
\DeclareMathOperator{\rank}{rank}
\DeclareMathOperator{\conv}{conv}
\DeclareMathOperator{\lcm}{lcm}
\newcommand{\cI}{\mathcal{I}}
\newcommand{\cB}{\mathcal{B}}
\newcommand{\nfrac}{\nicefrac}
\newcommand{\supp}{\mathrm{supp}}
\newcommand{\poly}{\mathrm{poly}}
\title{\bf Isolating a Vertex via Lattices:  \\ Polytopes with Totally Unimodular Faces}
\author[1]{ Rohit Gurjar\thanks{
The research leading to these results has received funding from the European Community's 
Seventh Framework Programme (FP7/2007-2013) under grant agreement number 257575 and 
from the Israel Science Foundation (grant number 552/16).}}
\author[2]{Thomas Thierauf\thanks{Supported by DFG grant TH 472/4.}}
\author[3]{Nisheeth K. Vishnoi}
\affil[1]{\small California Institute of Technology, USA}
\affil[2]{\small Aalen University, Germany}
\affil[3]{\small \'{E}cole Polytechnique F\'{e}d\'{e}rale de Lausanne (EPFL), Switzerland}
\date{}
\newcommand{\B}{\mathcal{B}}
\newcommand{\C}{\mathcal{C}}
\renewcommand{\P}{\mathcal{P}}
\newcommand{\I}{\mathcal{I}}
\newcommand{\comment}[1]{}
\newcommand{\twosum}{\oplus_2}
\newcommand{\wprime}{w}
\newcommand{\set}[2]{\left\{\,#1 \mid #2\,\right\}}
\newcommand{\innerprod}[2]{\langle #1 , #2 \rangle}
\newcommand{\R}{\mathbb{R}}
\newcommand{\N}{\mathbb{N}}
\newcommand{\Z}{\mathbb{Z}}
\newcommand{\F}{\mathbb{F}}
\newcommand{\GF}{\rm{GF}}
\newcommand{\powerset}[1]{2^{#1}}
\newcommand{\norm}[1]{\left\lVert#1\right\rVert}
\newtheorem{theorem}{Theorem}[section]
\newtheorem{lemma}[theorem]{Lemma}
\newtheorem{definition}[theorem]{Definition}
\newtheorem{claim}[theorem]{Claim}
\newtheorem{observation}[theorem]{Observation}
\newtheorem{fact}[theorem]{Fact}
\newtheorem*{theorem*}{Theorem}
\newtheorem{remark}[theorem]{Remark}
\crefname{theorem}{Theorem}{Theorems}
\crefname{observation}{Observation}{Observations}
\crefname{proposition}{Proposition}{Propositions}
\crefname{claim}{Claim}{Claims}
\crefname{condition}{Condition}{Conditions}
\crefname{example}{Example}{Examples}
\crefname{fact}{Fact}{Facts}
\crefname{lemma}{Lemma}{Lemmas}
\crefname{corollary}{Corollary}{Corollaries}
\crefname{definition}{Definition}{Definitions}
\crefname{remark}{Remark}{Remarks}
\begin{document}
\maketitle

\begin{abstract}
We present a geometric approach towards derandomizing the  {Isolation Lemma} by Mulmuley, Vazirani, and Vazirani. 
In particular, our approach produces a quasi-polynomial family of weights, where each weight is an integer and  quasi-polynomially bounded, that 
can isolate a vertex in any $0/1$ polytope
for which each face lies in an affine space defined by a totally unimodular matrix.
This includes the
 polytopes given by totally unimodular constraints and generalizes the recent derandomization of the Isolation Lemma for {bipartite perfect matching}
and {matroid intersection}.
We prove our  result  by  associating a 
{lattice} to each face of the polytope and showing that if there is a totally unimodular kernel matrix for this lattice, then the number of  vectors of length within $3/2$ of the shortest vector in  it is polynomially bounded.
The proof of this latter geometric fact is combinatorial and follows from a polynomial bound on  the number of circuits of size within $3/2$ of the shortest circuit in a regular matroid. 
This is the technical core of the paper and relies on a variant of Seymour's decomposition theorem for regular matroids.
It generalizes an influential result by Karger  on the number of minimum cuts in a graph to regular matroids.

\end{abstract}

\newpage

\tableofcontents

\newpage

\section{Introduction}
The  Isolation Lemma by Mulmuley, Vazirani, and Vazirani~\cite{MVV87}
states that for any given family of subsets of a ground set $E$, if we assign random 
weights (bounded in magnitude by poly($\abs{E}$)) to the elements of $E$ then, with  high probability, the minimum weight
set in the family is unique. 
Such a weight assignment is called an \emph{isolating weight assignment}.
The lemma was introduced in the context of randomized parallel algorithms for the matching problem.
Since then it has found numerous other applications, in both algorithms and complexity: e.g., a reduction from CLIQUE to UNIQUE-CLIQUE~\cite{MVV87},
NL/poly $\subseteq \oplus$L/poly~\cite{Wig94},
NL/poly $=$ UL/poly~\cite{RA00}, an RNC-algorithm for linear matroid intersection~\cite{NSV94}, and
an RP-algorithm for disjoint paths~\cite{BH14}.
In all of these results, the Isolation Lemma is the only place where they need randomness.
Thus, if the Isolation Lemma can be derandomized, i.e., if a polynomially bounded isolating weight assignment can be 
deterministically constructed, then the aforementioned results that rely on it can also be derandomized.
In particular, it will give a deterministic parallel algorithm for matching.

A simple counting argument shows that a single weight assignment with polynomially bounded weights 
cannot be  isolating  for all possible families
of subsets of~$E$.
We can relax the question and ask if we
 can construct a poly-size list of poly-bounded weight assignments such that for each family $\B \subseteq 2^E$,
 one of the weight assignments in the list is isolating. 
 Unfortunately, even this can be shown to be impossible
 via  arguments involving the polynomial identity testing (PIT) problem. 
The PIT problem asks if an implicitly  given multivariate polynomial is identically zero. 
 Derandomization of PIT is another important consequence of derandomizing the Isolation Lemma. 
Here, the Isolation Lemma is applied to the family of monomials present in the polynomial. 
In essence, if we have a small list of weight assignments that works for all families, 
then we will have a small hitting-set for all small degree polynomials, which is impossible (see~\cite{AM08}).
Once we know that a deterministic isolation is not possible for all families,
 a natural question is to solve the isolation question for families~$\B$,
that have a succinct representation, for example, the family of perfect matchings of a graph.

For the general setting of families with succinct representations, no deterministic isolation is known, other than the trivial construction with exponentially large weights.
In fact, derandomizing the isolation lemma in this setting will imply circuit lower bounds~\cite{AM08}.
Efficient deterministic isolation is known only for very special kinds of families,
for example, perfect matchings in some special classes of graphs~\cite{DKR10, DK98, GK87, AHT07},
$s$-$t$ paths in directed graphs~\cite{BTV09,KT16,MP17}. Recently, there has been significant progress on deterministic isolation for perfect matchings in bipartite graphs~\cite{FGT16} and subsequently, in general graphs~\cite{ST17},
and matroid intersection~\cite{GT17}, which implied quasi-NC algorithms for these problems. 

Motivated by these recent works, we give a generic approach towards derandomizing the Isolation Lemma.
We show that the approach works for a large class of combinatorial polytopes
and  conjecture that it works for a significantly larger class.
For a family of sets $\B \subseteq 2^E$, define the polytope~$P(\B) \subseteq \R^E$
to be the convex hull of the indicator vectors of the sets in~$\B$.
Our main result shows that for $m:= \abs{E}$, there exists an $m^{O(\log m)}$-sized family of weight assignments on $E$ 
with weights bounded by $m^{O(\log m)}$ that is isolating for any family~$\B$ whose corresponding polytope~$P(\B)$ satisfies the following property: 
\emph{the affine space spanned by any face of~$P(\B)$ is parallel to 
 the null space of {\bf some} 
totally unimodular (TU) matrix}; see \cref{thm:TUMisolation}.
This is a black-box weight construction in the sense that it does not need the description of the family or the polytope. 

A large variety of polytopes  satisfy this property and, as a consequence, have been extensively studied in combinatorial optimization. 
The simplest such class is when the polytope $P(\cB)$ has a description $Ax \leq b$ with $A$ being a TU matrix.
Thus,  a simple consequence of our main result is a resolution to the problem of derandomizing the isolation lemma for  polytopes with TU constraints, as raised in a recent work~\cite{ST17}.
This generalizes
the isolation result
for perfect matchings in a bipartite graph~\cite{FGT16}, 
since the perfect matching polytope of a bipartite graph can be described by the incidence matrix of the graph, which is TU.
Other examples of families whose polytopes are defined by TU constraints
are 
vertex covers of a bipartite graph, independent sets of a bipartite graph, 
and, edge covers of a bipartite graph.
Note that these three problems are computationally equivalent to bipartite matching and thus, already have quasi-NC algorithms due to \cite{FGT16}.
However, the isolation results for these families are not directly implied by isolation for bipartite matchings. 

Our work also generalizes the isolation result
   for the family of common bases of two matroids~\cite{GT17}.
In the matroid intersection problem,
the constraints of the common base polytope are a rank bound on every subset of the ground set.
These constraints, in general, do not form a TUM.
However,
for every face of the polytope there exist two laminar families of subsets that form a basis for the tight constraints of the face.
The incidence matrix for the union of two laminar families is TU
(see~\cite[Theorem 41.11]{Sch03B}).

Since our condition on the polytope~$P(\B)$ does not require the constraint matrix defining the polytope itself (or any of its faces) to be TU, 
it is quite weak and is also well studied. 
Schrijver~\cite[Theorem 5.35]{Sch03A} shows that this condition is sufficient to prove that the polytope is \emph{box-totally dual integral}.
The second volume of Schrijver's book~\cite{Sch03B}
gives an excellent overview of polytopes that satisfy the condition required in \cref{thm:TUMisolation} such as 
\begin{itemize}
\item $R-S$ bibranching polytope \cite[Section 54.6]{Sch03B}
\smallskip
\item directed cut cover polytope \cite[Section 55.2]{Sch03B}
\smallskip
\item submodular flow polyhedron \cite[Theorem 60.1]{Sch03B}
\smallskip
\item lattice polyhedron \cite[Theorem 60.4]{Sch03B}
\smallskip
\item submodular base polytope \cite[Section 44.3]{Sch03B}
\smallskip
\item  many other polytopes defined via submodular and supermodular set functions
\cite[Sections 46.1,  48.1, 48.23, 46.13, 46.28, 46.29, 49.3, 49.12, 49.33, 49.39, 49.53]{Sch03B}.
\end{itemize}
We would like to point out that it is not clear if our isolation results in the above settings lead to any new derandomization of algorithms.
Finding such algorithmic applications of our isolation result would be quite interesting.

To derandomize the Isolation Lemma, we 
abstract out ideas from the bipartite matching and 
matroid intersection isolation~\cite{FGT16,GT17}, 
and give a geometric approach in terms of certain {\em lattices} associated to polytopes.
For each face $F$ of $P(\B)$, we  consider the lattice $L_F$ of all integer vectors parallel to~$F$.
We show that, if for each face $F$ of $P(\B)$, the number of near-shortest vectors in $L_F$ is polynomially bounded then 
we can construct an isolating weight assignment for $\B$ with quasi-polynomially bounded weights; see \cref{thm:isolation}.
Our main technical contribution is to give a polynomial bound on the number of near-shortest vectors in~$L_F$
(whose $\ell_1$-norm is less than $ \nfrac{3}{2} $ times the smallest $\ell_1$-norm of any vector in~$L_F$),
when this lattice is the set of integral vectors in the null space of a TUM;  see \cref{thm:tum-LA}.

The above lattice result is in contrast to general lattices where the number of such near-shortest vectors could be exponential in the dimension.

Our  result on lattices can  be  reformulated using the  language  of matroid theory:
 the number of  near-shortest circuits in a regular matroid is polynomially bounded; see \cref{thm:regular}.
In fact, we show how \cref{thm:tum-LA} can be deduced from \cref{thm:regular}.
One crucial ingredient in the proof of \cref{thm:regular} is Seymour's remarkable decomposition theorem for regular matroids~\cite{Sey80}. 
\cref{thm:regular} answers a question raised by Subramanian~\cite{Sub95} and 
is a generalization of (and builds on) known results in the case of graphic and cographic matroids,
that is, the number of near-minimum length cycles in a graph is polynomially bounded  (see~\cite{TK92,Sub95})  
and the result of Karger \cite{Kar93} that states that the number of near-mincuts in a graph is polynomially bounded.

Thus, not only do our results make  progress in derandomizing the isolation lemma for combinatorial polytopes, they make interesting connections between lattices (that are geometric objects) and combinatorial polytopes.
Our structural results about the number of near-shortest vectors in lattices 
and near-shortest circuits in matroids should be of independent interest and raise the question: 
to what extent are they generalizable?

A natural conjecture would be that for any $(0,1)$-matrix, the lattice formed by its integral null vectors has a small number of near-shortest vectors. 
In turn, this would give us the isolation result for any polytope which is defined by  a $(0,1)$-constraint matrix.
Many combinatorial polytopes have this property.
One such interesting example is the perfect matchings polytope for general graphs.
The recent result of \cite{ST17}, which showed a quasi-NC algorithm for perfect matchings, 
 does not actually go via a bound on the number of near-shortest vectors in the associated lattice. 
Obtaining a polynomial bound on this number would give a proof for their quasi-NC result in our unified framework  
and with improved parameters. 
Another possible generalization is for $(0,1)$-polytopes that have this property that the integers occurring in the description of each supporting hyperplane are bounded by a polynomial in the dimension of the polytope.
Such polytopes generalize almost all combinatorial polytopes and yet seem to have enough structure -- they have been recently studied in the context of optimization \cite{SinghV14, SV17Entropy}.

\section{Our Results}

\subsection{Isolating a vertex in a polytope}
\label{sec:TUMisolation}

For a set~$E$ and a weight function $w \colon E \to \Z$, 
we define the extension of~$w$ to any set $S \subseteq E$ by
$$w(S) := \sum_{e \in S} w(e). $$
Let $\B \subseteq 2^{E}$ be a family of subsets of $E$.
A weight function $w \colon E \to \Z$ is called \emph{isolating for}~$\B$ 
if the minimum weight set in~$\B$ 
is unique. 
In other words, the set $ \arg \min_{S \in \cB} w(S)$ is unique.
The Isolation Lemma of Mulmuley, Vazirani, and Vazirani~\cite{MVV87} asserts that a uniformly random weight function is isolating with a good probability for any~$\cB$.

\begin{lemma}[Isolation Lemma]
Let $E$ be a set, $\abs{E} = m$,
and
let $w\colon E \to \{1,2,\dots, 2m \}$ be a random weight function, 
where for each $e \in E$, 
the weight~$w(e)$ is chosen uniformly and independently at random.
Then for any family $\B \subseteq 2^{E}$, ~$w$ is isolating with probability at least~$ \nfrac{1}{2} $.
\end{lemma}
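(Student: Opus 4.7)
The plan is to reduce the isolation question to a union bound over elements of~$E$. For each $e \in E$, call $e$ \emph{ambiguous} (with respect to~$w$) if there exist $S, S' \in \B$ with $w(S) = w(S') = \min_{T \in \B} w(T)$ and $e \in S \setminus S'$. The first step is a purely combinatorial observation: if no element is ambiguous, then the minimum weight set is unique. Indeed, if $S_1 \neq S_2$ are two distinct minimum weight sets, the symmetric difference $S_1 \triangle S_2$ is nonempty, and any $e$ in it witnesses an ambiguity. Therefore it suffices to show that $\Pr[\text{some } e \text{ is ambiguous}] \leq \nfrac{1}{2}$.

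The key step is to bound $\Pr[e \text{ is ambiguous}]$ for a single fixed $e$ by $\nfrac{1}{2m}$, using the principle of deferred decisions. Fix the weights $w(e')$ for all $e' \neq e$ arbitrarily, and let $w(e)$ remain random. Define
\[
\alpha := \min_{S \in \B,\, e \in S} \sum_{e' \in S \setminus \{e\}} w(e'), \qquad \beta := \min_{S \in \B,\, e \notin S} w(S),
\]
with the usual convention that an empty minimum is $+\infty$. Then the minimum weight over sets containing $e$ equals $\alpha + w(e)$, while the minimum over sets not containing $e$ equals $\beta$, and neither quantity depends on $w(e)$ except through the explicit summand. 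The element $e$ is ambiguous precisely when $\alpha + w(e) = \beta$, i.e., when $w(e)$ takes the single value $\beta - \alpha$. Since $w(e)$ is uniform over $\{1,\ldots,2m\}$ and independent of the other weights, this probability is at most $\nfrac{1}{2m}$.

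Finally, a union bound over the $m$ elements of~$E$ gives $\Pr[\text{some } e \text{ ambiguous}] \leq m \cdot \nfrac{1}{2m} = \nfrac{1}{2}$, and combining with the combinatorial observation yields $\Pr[w \text{ is isolating}] \geq \nfrac{1}{2}$. The only subtle point is the use of deferred decisions in the single-element analysis: one must be careful that $\alpha$ and $\beta$ are determined by the weights of $E \setminus \{e\}$ alone, so that the condition $w(e) = \beta - \alpha$ pins down a single value in a distribution that is still uniform at the moment this constraint is evaluated. This is the main (and essentially only) conceptual step; everything else is a direct counting argument.
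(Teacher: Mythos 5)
The paper does not give its own proof of the Isolation Lemma; it simply states it and cites Mulmuley, Vazirani, and Vazirani. Your proof is correct and is precisely the standard argument from that paper (the ``singular element'' argument via deferred decisions): conditioning on $w|_{E\setminus\{e\}}$ makes the thresholds $\alpha$ and $\beta$ deterministic, so $e$ is ambiguous only if $w(e)$ hits the single value $\beta-\alpha$, giving probability at most $1/(2m)$ per element and $\leq 1/2$ overall by a union bound. One minor point worth a sentence in a write-up: the equivalence ``$e$ ambiguous $\iff \alpha + w(e) = \beta$'' should be read with the convention that the equation fails whenever either side is $+\infty$ (i.e., every set contains $e$ or no set contains $e$), which is consistent since in those cases $e$ can never lie in $S\setminus S'$ for two minimizers.
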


\noindent The task of derandomizing the Isolation Lemma
requires the deterministic construction of an isolating weight function with weights polynomially bounded in $m = \abs{E}$.
Here, we view the isolation question for~$\B$
as an isolation over a corresponding polytope~$P(\B)$, as follows.  	
For a set $S \subseteq E$, 
 its indicator vector
$x^S := (x^S_e)_{e \in E}$ is defined as
$$x^S_e := \begin{cases} 
		1, & \text{if } e \in S,\\
		0, & \text{otherwise.}
\end{cases}
$$
For any family of sets $\B \subseteq \powerset{E}$, 
the polytope $P(\B) \subseteq \R^m$ is defined as
 the convex hull of the indicator vectors of the sets in $\B$,
i.e., 
$$P(\B) := \conv \set{x^S}{S \in \B}.$$
Note that $P(\B)$ is contained in the $m$-dimensional unit hypercube.

The isolation question for a family $\B$ is equivalent to 
constructing a weight vector $w \in \Z^E$ such that $\langle w,  x \rangle$ has a unique minimum over $P(\B)$.
The property we need for our isolation approach is in terms of  total unimodularity of a  matrix.

\begin{definition}[Totally unimodular matrix] 
A matrix $A \in \mathbb{R}^{n \times m}$ is said to be  \emph{totally unimodular (TU)}, if every square
submatrix has determinant~$0$ or~$\pm 1$. 
\end{definition}
\noindent Our main theorem gives an efficient quasi-polynomial isolation for a family $\B$
when each face of the polytope $P(\B)$ lies in the affine space defined by a TU matrix.

\begin{theorem}[{\bf Main Result}]
\label{thm:TUMisolation}

Let $E$ be a set with $\abs{E}=m$.
Consider a class $\mathcal{C}$ of families $\B \subseteq 2^E$ that have the following property: 
 for any face~$F$ of the polytope~$P(\B)$,
there exists a TU matrix~$A_F \in \R^{n \times m}$ such that the affine space spanned by~$F$ is given by $A_Fx = b_F$
for some $b_F \in \R^n$. 
We can construct a set $W$ of $m^{O(\log m)}$ weight assignments on~$E$ with weights bounded by $m^{O(\log m)}$
such that for any family~$\B$ in the class $\mathcal{C}$, one of the weight assignments in~$W$  is isolating.
\end{theorem}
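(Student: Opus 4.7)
The plan is to derive this theorem by assembling the two deep ingredients already highlighted in the introduction, namely the lattice-to-isolation reduction (\cref{thm:isolation}) and the polynomial bound on near-shortest vectors in the integer null space of a totally unimodular matrix (\cref{thm:tum-LA}). The overall argument is a short three-step composition; all of the real content is absorbed into those two cited results.

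The first step is to identify, for each face $F$ of $P(\B)$, the correct lattice. Since $F$ is contained in the affine space $\{x : A_F x = b_F\}$ with $A_F$ totally unimodular, the linear subspace parallel to $F$ is exactly $\ker(A_F)$. Hence the lattice $L_F$ of integer vectors parallel to $F$ is
\[
L_F \;=\; \ker(A_F) \cap \Z^{E},
\]
which is precisely the object covered by \cref{thm:tum-LA}. That theorem then yields a polynomial-in-$m$ bound on the number of vectors of $L_F$ whose $\ell_1$-norm is within a factor of $\nfrac{3}{2}$ of the shortest.

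The second step is to feed this bound into \cref{thm:isolation}. It is crucial to observe that the weight construction produced by \cref{thm:isolation} is \emph{black-box}: it depends only on the ambient size $m$ and on a universal polynomial bound $p(m)$ on the number of near-shortest vectors in each face lattice, and not on the specific family $\B$ or on any description of $P(\B)$ or its faces. Therefore, since every $\B \in \mathcal{C}$ satisfies the same uniform bound coming from Step~one, a single quasi-polynomial-size list $W$ of quasi-polynomially bounded weight assignments produced by \cref{thm:isolation} is simultaneously isolating for every family in $\mathcal{C}$. Tracking parameters through the standard $O(\log m)$-round refinement one expects in the proof of \cref{thm:isolation} gives $\abs{W} = m^{O(\log m)}$ and $\linf{w} \le m^{O(\log m)}$ for each $w \in W$, as required.

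The genuine obstacles are hidden inside the two cited theorems rather than in the composition. The routine piece is the affine-geometry identification of $L_F$ with $\ker(A_F) \cap \Z^E$; the technical heart, which I would simply cite here, is \cref{thm:tum-LA} (equivalently \cref{thm:regular}), whose proof requires Seymour's decomposition theorem for regular matroids together with an inductive count of near-shortest circuits. The one mildly subtle verification in between is that \cref{thm:isolation} really delivers a class-uniform list rather than a per-family construction; this is what makes the whole package a truly black-box weight family, and it is the only place where a careful reading of the statement (as opposed to a completely formal citation) is needed.
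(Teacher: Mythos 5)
Your proposal is correct and follows essentially the same route as the paper: identify $L_F$ with $\ker(A_F)\cap\Z^E = L(A_F)$, invoke \cref{thm:tum-LA} to get the polynomial bound on near-shortest vectors, and plug this into \cref{thm:isolation}. The one point you elaborate beyond the paper's terse proof---that the weight family from \cref{thm:isolation} is constructed black-box (via \cref{lem:weights}) and hence is class-uniform rather than per-$\B$---is indeed the implicit reason the statement holds for all of $\mathcal{C}$ at once, and you are right to flag it.
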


\subsection{Short vectors in lattices associated to polytopes}
\label{sec:polytopeLattice}
Our starting point towards proving \cref{thm:TUMisolation}
is a reformulation of the isolation approach for bipartite perfect matching and matroid intersection~\cite{FGT16,GT17}.
For a set~$E$ and a family $\B \subseteq 2^{E}$,
we define a lattice corresponding to each face of the polytope~$P(\B)$.
The isolation approach works when this lattice has a small number of near-shortest vectors.
For any face $F$ of $P(\B)$, consider the lattice of all integral vectors parallel to~$F$,
$$L_F := \set{ v \in \Z^E }{ v = \alpha (x_1-x_2) \text{ for some } x_1,x_2 \in F \text{ and } \alpha \in \R }.$$
The length of the shortest nonzero vector of a lattice $L$ is denoted by
$$\lambda(L) := \min \set{ \|v\| }{ 0 \neq v \in L},$$
where $\norm{\cdot}$ denotes the $\ell_1$-norm.
We prove  that if, for all faces~$F$ of~$P(\B)$ the number  of near-shortest vectors in~$L_F$
is small, then we can efficiently isolate a vertex in~$P(\B)$.

\begin{theorem}[{\bf Isolation via Lattices}]
\label{thm:isolation}
Let $E$ be a set with $|E| = m$ and let $\B \subseteq 2^{E}$ be a family such that there exists a constant $ c >1$, 
such that for any face~$F$ of polytope~$P(\B)$, 
we have
$$\left\lvert \set{v \in L_F }{ \norm{v} < c \, \lambda(L_F)} \right\rvert \leq m^{O(1)}.$$
Then one can construct a set of $m^{O(\log m)}$ weight functions with weights bounded by $m^{O(\log m)}$
such that at least one of them is isolating for~$\B$.
\end{theorem}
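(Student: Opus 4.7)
The plan is to build an isolating weight by iterative refinement over $k = O(\log m)$ rounds, combined lexicographically. At round $i$ I would produce a weight function $w_i \colon E \to \{0,1,\ldots,p-1\}$ for a prime $p = \poly(m)$ to be fixed shortly, and define the running combined weight $W_i := \sum_{j=1}^{i} N^{i-j} w_j$ for $N = \poly(m)$ chosen large enough that each $w_j$ only breaks ties left by its predecessors (concretely $N > m(p-1)$ suffices). Writing $F_i$ for the face of $P(\B)$ minimized by $W_i$, one then has $F_0 \supseteq F_1 \supseteq \cdots \supseteq F_k$, and the target is to reach $F_k$ a single vertex, at which point $W_k$ is isolating.

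The heart of the argument is a monotonicity claim on the shortest vector of the face lattice: if $w_i \cdot v \neq 0$ for every nonzero $v \in L_{F_{i-1}}$ with $\norm{v} < c\,\lambda(L_{F_{i-1}})$, then $\lambda(L_{F_i}) \geq c\,\lambda(L_{F_{i-1}})$. The proof is short: any $v \in L_{F_i}$ lies in $L_{F_{i-1}}$, and because $v$ is proportional to the difference of two minimizers of $w_i$ over $F_{i-1}$, one has $w_i \cdot v = 0$; the hypothesis then forces $\norm{v} \geq c\,\lambda(L_{F_{i-1}})$. Since $P(\B) \subseteq [0,1]^E$, any face of positive dimension contains two distinct $0/1$ vertices whose integer difference is a nonzero lattice vector of $\ell_1$-norm at most $m$, so $\lambda(L_F) \leq m$ whenever $F$ is not a vertex. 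Starting from $\lambda(L_{F_0}) \geq 1$ and applying the claim $k = \lceil \log_c m \rceil + 1$ times yields $\lambda(L_{F_k}) > m$, forcing $F_k$ to be a single vertex.

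It remains to construct $w_i$ without knowing $F_{i-1}$. By the hypothesis of the theorem, the relevant set $V_{i-1}$ of near-shortest vectors in $L_{F_{i-1}}$ has size at most $K = m^{O(1)}$ and its elements have $\ell_\infty$-norm at most $c m$. Fixing an ordering $e_1, \ldots, e_m$ of $E$ and a prime $p = \Theta(K m) = \poly(m)$, I would consider the family of $p-1$ weight functions $w^{(t)}(e_j) := t^{\,j} \bmod p$ indexed by $t \in \{1,\ldots,p-1\}$. For any single nonzero $v$ with $\linf{v} < p$, the polynomial $\sum_j v_{e_j}\, t^{\,j} \in \F_p[t]$ is nonzero of degree at most $m$ and hence has at most $m$ roots in $\F_p$; a union bound over $V_{i-1}$ shows that at most $K m < p$ values of $t$ are bad for $F_{i-1}$, so some $t$ works. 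Since $F_{i-1}$ is unknown, I simply try all $p-1$ choices in every round, producing $(p-1)^k = m^{O(\log m)}$ combined weight functions with entries bounded by $N^k(p-1) = m^{O(\log m)}$; at least one branch makes the good choice in every round, and for that branch $W_k$ is isolating. The main obstacle is the monotonicity step above, which is where the hypothesis on near-shortest vectors is genuinely used; the per-round construction is then a routine Vandermonde-style derandomization enabled by the polynomial bound on $\abs{V_{i-1}}$.
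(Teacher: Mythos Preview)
Your proposal is correct and follows essentially the same approach as the paper: iterative refinement over $O(\log_c m)$ rounds, the key monotonicity step that $w_i \cdot v = 0$ for $v \in L_{F_i}$ forces $\lambda(L_{F_i}) \geq c\,\lambda(L_{F_{i-1}})$, and termination once $\lambda$ exceeds~$m$. The only difference is cosmetic: for the per-round derandomization you use a Vandermonde family $t \mapsto (t,t^2,\dots,t^m)$ over $\F_p$, whereas the paper uses the FKS-style family $j \mapsto (1,t,\dots,t^{m-1}) \bmod j$ via an $\mathrm{lcm}$ argument; both yield $\poly(m)$ candidate weight vectors with $\poly(m)$-bounded entries and are interchangeable here.
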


 \noindent
The main ingredient of the proof of \cref{thm:TUMisolation} is to show that 
the hypothesis of \cref{thm:isolation} 
is true when the lattice $L_F$ is the set of all integral vectors in the nullspace of a TU matrix.
For any $n \times m$ matrix~$A$ we define a lattice:   
$$L(A) := \set{v \in \Z^m }{ A v =0 }.$$
\begin{theorem}[{Near-shortest vectors in TU lattices}]
\label{thm:tum-LA}
For an $n \times m$ TU matrix $A$, let $\lambda := \lambda(L(A))$. Then
$$\left\lvert \set{ v \in L(A) }{ \norm{v} < \nfrac{3}{2}\,\lambda }  \right\rvert ~=~ O(m^5).$$
\end{theorem}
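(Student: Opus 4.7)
The plan is to reduce the lattice statement to the matroid statement of \cref{thm:regular} via the standard correspondence between TU matrices and regular matroids. Since $A$ is totally unimodular, its column matroid $M(A)$ on ground set $E$ (with $|E|=m$) is regular, and the circuits of $M(A)$ are precisely the minimal supports of nonzero vectors in the real null space of $A$. The structural fact I would rely on is the \emph{conformal decomposition} for TU kernels: every $v \in L(A)$ can be written as $v = \sum_{i} v_i$ where each $v_i$ is a $\{0,\pm 1\}$-vector supported on a circuit of $M(A)$, and signs agree wherever two supports overlap. Consequently the decomposition is $\ell_1$-additive, so
\[
\norm{v}_1 \;=\; \sum_i \norm{v_i}_1 \;=\; \sum_i |\supp(v_i)|.
\]
This is classical; it follows from total unimodularity because the elementary vectors of $L(A)$ (the kernel vectors of minimal support) can be chosen with entries in $\{0,\pm1\}$.

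Given the decomposition, I would first observe that $\lambda := \lambda(L(A))$ equals the girth of $M(A)$: every elementary vector has $\ell_1$-norm equal to the size of its supporting circuit, and no nonzero lattice vector can be shorter than its shortest elementary component in a conformal decomposition. Then, given any $v \in L(A)$ with $\norm{v}_1 < \tfrac{3}{2}\,\lambda$, the conformal decomposition $v = \sum_i v_i$ satisfies $\sum_i \norm{v_i}_1 < \tfrac{3}{2}\,\lambda$ while each $\norm{v_i}_1 \geq \lambda$. This forces the sum to have exactly one term, so $v$ is itself a $\{0,\pm1\}$-vector supported on some circuit $C$ with $|C| < \tfrac{3}{2}\,\lambda$. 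For a fixed circuit $C$ of a regular matroid, the $\{0,\pm1\}$ kernel vector supported on $C$ is determined up to a global sign flip, so $C$ contributes at most two vectors to the count.

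To finish I would apply \cref{thm:regular}, which guarantees that the number of circuits of size strictly less than $\tfrac{3}{2}\,\lambda$ in the regular matroid $M(A)$ is $O(m^5)$. Doubling to account for the two possible sign choices yields the stated $O(m^5)$ bound on near-shortest vectors of $L(A)$.

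The main obstacle in this reduction is pinning down that the conformal decomposition can be taken with $\{0, \pm 1\}$ entries rather than arbitrary real coordinates; this is precisely where total unimodularity of $A$ is used, and without it one would only obtain a decomposition into supports of circuits with uncontrolled magnitudes. Once this lemma is invoked, the rest of the argument is essentially bookkeeping, and the entire combinatorial depth of \cref{thm:tum-LA} is delegated to \cref{thm:regular}, whose proof via Seymour's decomposition is the real work.
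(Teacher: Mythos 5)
Your reduction is correct and follows essentially the same route as the paper: argue via conformality that any vector in $L(A)$ of $\ell_1$-norm below $\tfrac{3}{2}\lambda$ must itself be a $\{0,\pm1\}$-circuit of $A$, identify such circuits (up to sign) with circuits of the regular matroid represented by $A$, and invoke \cref{thm:regular}. The only cosmetic difference is that you invoke the full conformal decomposition into a sum of $\{0,\pm1\}$ circuits, whereas the paper uses just the existence of a single conformal circuit (\cref{lem:conformal-circuit}) and gets the same conclusion — that any $v$ with $\norm{v}<2\lambda$ is already a circuit — by a one-step subtraction argument (\cref{lem:tum-small-circuit}); the two are equivalent since the full decomposition follows from the single-circuit lemma by iteration.
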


\noindent A similar statement can also be shown with any $\ell_p$-norm for $p\geq 2$, but with an appropriate multiplicative constant. 
\cref{thm:tum-LA} together with \cref{thm:isolation}  implies \cref{thm:TUMisolation}.

\begin{proof}[Proof of \cref{thm:TUMisolation}]
Let $F$ be a face of the polytope~$P(\B)$ and let~$A_F$ be the TU matrix associated with~$F$.
Thus $A_F x = b_F$ defines the affine span of~$F$. 
In other words, 
the set of vectors parallel to~$F$ is precisely the solution set of $A_F x =0 $
and the lattice~$L_F$ is given by~$L(A_F)$.
\cref{thm:tum-LA} implies the hypothesis of \cref{thm:isolation}
for any $L_F = L(A_F)$,
when the matrix~$A_F$ is TU. 
\end{proof}

\subsection{Near-shortest circuits in regular matroids}
The proof of  \cref{thm:tum-LA} is combinatorial and uses the language and results from matroid theory.
We refer the reader to Section~\ref{sec:matroids} for preliminaries on matroids; 
here we just recall a few basic definitions. 
A matroid is said to be \emph{represented by a matrix}~$A$, 
if its ground set is the column set of~$A$ and its independent sets
are the sets of linearly independent columns of~$A$. 
A matroid represented by a TU matrix  is said to be a \emph{regular matroid}.
A \emph{circuit} of a matroid is a minimal dependent set. 
The following is one of our main results which gives a bound on the number of near-shortest circuits in a regular matroid,
which, in turn, implies \cref{thm:tum-LA}.
Instead of the circuit size, we consider the weight of a circuit and present a more general result.

\begin{theorem}[{Near-shortest circuits in regular matroids}]
\label{thm:regular}
Let $M=(E,\I)$ be a regular matroid with $m  = \abs{E} \geq 2$ 
and let $w\colon E \to \mathbb{N}$ be a weight function.
Suppose~$M$ does not have any circuit~$C$ with  $w(C)< r$ for some number $r$.
Then 
\[
 \abs{\set{C}{C \text{ circuit in $M$ and } w(C) < \nfrac{3r}{2}}} ~\leq~ 240\, m^{5}.
\]
\end{theorem}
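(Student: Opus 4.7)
The plan is to prove \cref{thm:regular} by a structural induction on $\abs{E}$ based on Seymour's decomposition theorem, which writes any regular matroid as a $1$-, $2$-, or $3$-sum of smaller pieces and, at the leaves of the decomposition, yields only graphic matroids, cographic matroids, and copies of the sporadic matroid $R_{10}$. I would first set up the three base cases. For $R_{10}$ the bound is immediate because $\abs{E}=10$. For a graphic matroid $M(G)$, a circuit is a simple cycle of $G$, and the polynomial bound on the number of cycles of length less than $\nfrac{3}{2}$ times the girth (with arbitrary positive weights on edges) due to Subramanian and Thomassen--Karger-style arguments gives the required estimate. For a cographic matroid $M^\ast(G)$, a circuit is an inclusion-minimal edge cut, and Karger's bound on the number of $\alpha$-approximate minimum cuts (with $\alpha = \nfrac{3}{2}$) yields a polynomial count, extended to weighted edges by a standard reduction.

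For the inductive step I would let $M = M_1 \oplus_k M_2$ for $k \in \{1,2,3\}$ with common ``sum'' elements $S$ of size $k-1$ (interpreted as loops for $k=1$, a parallel pair for $k=2$, and a triangle for $k=3$). I would classify each circuit $C$ of $M$ according to whether it lies inside $E(M_1) \setminus S$, inside $E(M_2) \setminus S$, or crosses the sum; in the crossing case one uses the standard fact that $C = C_1 \triangle C_2$ for circuits $C_i$ of $M_i$ each meeting~$S$ in a prescribed way, so a crossing circuit of $M$ is determined by a pair $(C_1, C_2)$ where each $C_i$ is a circuit of~$M_i$ containing a specified subset of~$S$. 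The key point is to lift the weight function $w$ on $E$ to weights $w_1, w_2$ on $M_1, M_2$ by assigning the sum elements weights that make $w(C) = w_1(C_1) + w_2(C_2)$ for every crossing pair (using a minimum ``cost to traverse the sum'' for the elements of $S$), while keeping the weights of non-sum elements intact. With this lift, any circuit of $M$ of weight below $\nfrac{3r}{2}$ pulls back to circuits of $M_1$ and $M_2$ that are also short relative to the girth $r_i$ of the respective piece, and one can appeal inductively to the bound for $M_1$ and $M_2$.

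The main obstacle, and the step that requires the ``variant'' of Seymour's theorem alluded to in the introduction, is ensuring that this weight lift behaves correctly: one needs (i) that $r_i$ is not much smaller than $r$, so that the inductive hypothesis with factor $\nfrac{3}{2}$ still applies, and (ii) that the number of ``shapes'' $C_i$ may take across the sum (the possible intersections with $S$, which for $k=3$ is a triangle with several nonzero patterns) is a constant. Both points require that the decomposition produce pieces whose girths are comparable to~$r$ and whose sum elements can be assigned weights realizing shortest crossing paths; Seymour's decomposition can be refined so that each $\oplus_k$ is ``internally $k$-connected'' in a sense that forces the matroid pieces $M_i$ to contain witnesses of these shortest crossings. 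Assuming this, the induction yields a recurrence of the form $N(m) \leq N(m_1) + N(m_2) + c\cdot \max\{N(m_1), N(m_2)\}$ with $m_1 + m_2 \leq m + O(1)$, which solves to a polynomial $O(m^5)$ bound with an explicit constant $240$.

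Finally, I would record \cref{thm:tum-LA} as a direct consequence: the columns of a TU matrix $A$ form a regular matroid, every vector $v \in L(A)$ decomposes as a signed sum of circuits of $M(A)$ with supports of the constituent circuits totaling $\norm{v}_1$, and this reduction (via the standard decomposition of integer null vectors into circuits of the underlying matroid, using the unit-coefficient property of regular matroids) turns the circuit count of \cref{thm:regular} into a count of near-shortest lattice vectors, up to the $O(m^5)$ factor coming from the number of circuits that can be combined to produce a given near-shortest vector.
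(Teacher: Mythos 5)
Your skeleton is right (Seymour decomposition, induction, graphic/cographic base cases via Subramanian and Karger, $R_{10}$ by inspection), but the recurrence you propose does not close, and you are missing the two ideas that make the paper's inductive step actually work.

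First, the recurrence $N(m)\leq N(m_1)+N(m_2)+c\cdot\max\{N(m_1),N(m_2)\}$ with $m_1+m_2\leq m+O(1)$ does \emph{not} solve to a polynomial. Seymour's decomposition tree can have linear depth (think $m_1=O(1)$, $m_2=m-O(1)$ at each node), in which case this recurrence yields $(1+c)^{\Omega(m)}$. The paper instead establishes the bound $T_0+T_1+s\,T_2T_3\leq 240\,m^5$, where $T_0,T_1$ are additive contributions from the non-crossing circuits and $s\,T_2T_3$ is a \emph{product} of two bounds for the two sides of the sum. This is made to fit under $240\,m^5$ by arranging, with $a=m_1-2s$ and $b=m_2$, the shape of a binomial expansion: $a^5 + b^5 + \binom{5}{2}a^3b^2 \leq (a+b)^5$. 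To get the product piece to have the right exponents, the paper needs $M_1$ (via Truemper's variant of Seymour's theorem) to be graphic, cographic, or $R_{10}$ — so its circuit count is bounded by the base lemma with a small exponent $3$ rather than the full inductive exponent $5$ — while the $M_2$ side uses a separate quadratic bound. Your uniform recurrence misses all of this.

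Second, your ``weight lift'' — assigning sum elements the ``minimum cost to traverse the sum'' so that $w(C)=w_1(C_1)+w_2(C_2)$ — does not work for all crossing circuits. It works only when, for each common element $e\in S$, there is a \emph{unique} circuit $C_1$ of $M_1$ through $e$ of weight below $r/2$ (so the encoding $w'(e):=w(C^*_e)$ is well-defined and the map $C\mapsto C_2$ is injective). The paper proves exactly this uniqueness (Claim~\ref{cla:unique}), and restricts the weight-shift trick to the case $w(C_1)<r/2$. When $w(C_1)\geq r/2$, the circuit $C_1$ is not unique, and the argument switches entirely: it bounds the $C_1$'s directly via the graphic/cographic lemma and, for $C_2$, needs a \emph{stronger} inductive statement — a bound on circuits that contain a designated element $\widetilde e$ under the hypothesis that all circuits avoiding $\widetilde e$ have weight $\geq r$ (\cref{lem:circuitsR}). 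You have not formulated that auxiliary lemma, and without it the induction cannot even get off the ground in the crossing case: you have no lower bound on the girth of $M_2$ itself, only on circuits of $M_2$ avoiding $S$. Relatedly, the constant $\nfrac{3}{2}$ in \cref{thm:regular} is not arbitrary — it is chosen so that this auxiliary lemma only needs to track a single designated element (one ``halving'' of the slack), which is exactly what Truemper's theorem can guarantee lands on the $M_2$ side. Your remark that this is ``a parameter'' overlooks why the proof stops at $\nfrac{3}{2}$.

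Your final paragraph on deducing \cref{thm:tum-LA} from \cref{thm:regular} is essentially the paper's argument, except that the reduction is cleaner than you describe: one shows (via conformal decomposition for TU matrices) that any lattice vector of $\ell_1$-norm below $2\lambda$ is itself (up to sign) a single circuit of the regular matroid, so no combinatorial ``recombination'' factor is needed.
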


\begin{remark} An extension of this result would be to give a polynomial bound on
 the number of circuits of weight at most $\alpha r$
for any constant $\alpha$. Our current proof technique does not extend to this setting. 
\end{remark}

\section{Isolation via the Polytope Lattices: Proof of \cref{thm:isolation}}
\label{sec:isolation}

This section is dedicated to a proof of \cref{thm:isolation}.
That is, we give a construction of an isolating weight assignment for a family $\B \subseteq \powerset{E}$
assuming that for each face~$F$ of the corresponding polytope~$P(\B)$, 
the lattice~$L_F$ has small number of near-shortest vectors.
First, let us see how the isolation question for a family~$\B$ translates in the polytope setting.
For any weight function $w \colon E \to \Z$,
we view~$w$ as a vector in $\Z^E$ and consider the function $\langle w,x \rangle$ over the points in~$P(\B)$. 
Note that $ \langle w , x^B \rangle = w(B)$, for any $B \subseteq E$. 
Thus, 
a weight function $w \colon E \to \Z$ is isolating for a family $\B$ if and only if 
$\langle w,x \rangle$ has a unique minimum over the polytope~$P(\B)$.

Observe that for any $w \colon E \to \Z$, the points that minimize $\langle w,  x \rangle $ in $P(\B)$ 
will form a face of the polytope~$P(\B)$. 
The idea is to build the isolating weight function in rounds.
In every round,
we slightly modify the current weight function to get a smaller minimizing face. 
Our goal is to significantly reduce the dimension of the minimizing face in every round.
We stop when we reach a zero-dimensional face, i.e., we have a unique minimum weight point in~$P(\B)$. 

The following claim asserts  that if we modify the current weight function on a small scale, 
then the new minimizing face will be a subset of the current minimizing face.
In the following, we will denote the size of the set $E$ by~$m$.
\begin{claim}
\label{cla:subface}
Let $w\colon E \to \Z$ be a weight function and~$F$ be the face of~$P(\B)$ that minimizes~$w$.
Let $w'\colon E \to \{0,1,\dots, N-1\}$ be another weight function
and let~$F'$ be the face that minimizes the combined weight function $mN \, w+ w'$.
Then $F'\subseteq F$. 
\end{claim}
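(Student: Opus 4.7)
The plan is to exploit a scale-separation between the two weight contributions. Intuitively, the coefficient $mN$ in front of $w$ is chosen large enough that $w'$ is a strict ``tie-breaker'': the component $w'$ can only reshuffle points that are already $w$-optimal, because the maximum possible value of $w'$ over $P(\B)$ is strictly less than the smallest positive gap that $mNw$ creates between $F$ and everything else.

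First I would reduce the face containment $F' \subseteq F$ to a statement about vertices. The minimum of a linear functional over a polytope is attained on a face, and the vertices of that face are exactly the vertices of the polytope that attain the minimum. Since every vertex of $P(\B)$ is an indicator vector $x^S$ with $S \in \B$, it suffices to show that every vertex of $F'$ is also a vertex of $F$, because then $F'$ is the convex hull of vertices lying in $F$ and $F$ is convex.

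Second, I would prove the key gap inequality. Let $\mu := \min_{x \in P(\B)} \langle w, x\rangle$, so that $F$ is exactly the set of points in $P(\B)$ with $\langle w, x\rangle = \mu$. For any vertex $x^{S_1} \in F$ we have $w(S_1) = \mu$, and hence
\[
\langle mN\,w + w',\, x^{S_1}\rangle \;=\; mN\mu + w'(S_1) \;\leq\; mN\mu + m(N-1),
\]
since $|S_1|\leq m$ and each entry of $w'$ is at most $N-1$. On the other hand, for a vertex $x^{S_2}$ not in $F$, integrality gives $w(S_2) \geq \mu + 1$, so
\[
\langle mN\,w + w',\, x^{S_2}\rangle \;\geq\; mN(\mu+1) + 0 \;=\; mN\mu + mN \;>\; mN\mu + m(N-1).
\]
Therefore the combined objective strictly prefers every vertex of $F$ to every vertex outside $F$.

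Third, I would conclude. Any vertex of $F'$ minimizes $mN\,w + w'$, so by the inequality above it cannot lie outside $F$; hence every vertex of $F'$ is a vertex of $F$, and by convexity $F' \subseteq F$. There is no real obstacle here — the only subtlety to state carefully is that the upper bound $w'(S_1) \leq m(N-1)$ uses $|S_1| \leq m$ and that $w(S_2) \geq \mu + 1$ uses integrality of $w$ on vertices of $P(\B)$, which are $\{0,1\}$-vectors.
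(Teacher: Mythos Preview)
Your proof is correct and follows essentially the same scale-separation argument as the paper. The paper phrases it by taking a vertex $x \in F'$ and an arbitrary vertex $y$, then bounding $|\langle w', x-y\rangle| < mN$ against $|\langle mN\,w, x-y\rangle| \geq mN$ when the latter is nonzero; this is the same computation you carry out with the explicit thresholds $mN\mu + m(N-1)$ and $mN\mu + mN$.
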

\begin{proof}
Consider any vertex $x \in F'$. We show that $x \in F$.
By definition of~$F'$, for any vertex $y \in P(\B)$ we have 
$$\langle mN \, w+ w' , x \rangle \leq  \langle mN \, w+ w' ,  y \rangle.$$
In other words, 
\begin{equation}
\label{eq:w'}
 \langle mN \, w+ w',   x -y \rangle \leq 0 .
\end{equation}
Since $x$ and $y$ are vertices of $\P(\B)$, we have $x, y \in \{0,1\}^m$.
Thus, $\abs{ \langle w' , x-y \rangle} < mN.$
On the other hand, if $\abs{\langle m N\, w ,  x-y \rangle }$ is nonzero then it is at least $mN$ and thus dominates $\abs{\langle w' ,  x-y \rangle}$.
Hence, for (\ref{eq:w'}) to hold, it must be that 
$$\langle m N \,  w , x-y \rangle  \leq 0.$$
It follows that $\langle w,x \rangle \leq  \langle w, y \rangle$,
and therefore $x \in F$.
\end{proof}

\noindent
Thus, in each round, we will add a new weight function to the current function using a smaller scale
and try to get a sub-face with significantly smaller dimension.
Henceforth, $N$ will be a sufficiently large number bounded by $\poly(m)$.
The following claim gives a way to go to a smaller face. 
\begin{claim}
\label{cla:parallel}
Let $F$ be the face of $P(\B)$ minimizing~$\innerprod{w}{x}$ and 
let $v \in L_F$.
Then $\innerprod{w}{v}  =0$.
\end{claim}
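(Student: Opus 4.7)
The plan is to unpack the definition of $L_F$ and use the characterization of $F$ as the set of minimizers of $\langle w, \cdot \rangle$ over $P(\B)$. By definition, any $v \in L_F$ can be written as $v = \alpha(x_1 - x_2)$ for some $x_1, x_2 \in F$ and some $\alpha \in \R$. So the only thing to verify is that $\langle w, x_1 - x_2 \rangle = 0$, after which $\langle w, v \rangle = \alpha \langle w, x_1 - x_2 \rangle = 0$ follows immediately.

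To establish this, I would invoke the fact that $F$ is the face of $P(\B)$ that minimizes $\langle w, x \rangle$. In particular, every point of $F$ achieves the same (minimum) value of $\langle w, \cdot \rangle$. This is a standard fact about linear optimization over polytopes: the set of optimal solutions to $\min_{x \in P(\B)} \langle w, x \rangle$ is exactly a face, and on that face the objective is constant. Hence both $x_1$ and $x_2$ give the same value of $\langle w, \cdot \rangle$, so $\langle w, x_1 \rangle - \langle w, x_2 \rangle = 0$.

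Putting the two observations together, $\langle w, v \rangle = \alpha (\langle w, x_1 \rangle - \langle w, x_2 \rangle) = 0$, which is exactly what we wanted. There is no real obstacle here; the only thing one might want to spell out more carefully is the claim that the objective is constant on $F$, but this is a textbook property of the minimizing face, and in this context it is essentially the definition of $F$ as the minimizing face of $\langle w, \cdot \rangle$ over $P(\B)$. So the proof should be a couple of lines.
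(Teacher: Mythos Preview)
Your proposal is correct and matches the paper's proof essentially line for line: write $v = \alpha(x_1 - x_2)$ with $x_1, x_2 \in F$, observe that $\langle w, x_1 \rangle = \langle w, x_2 \rangle$ since $F$ is the minimizing face, and conclude. The paper's version is just as short and uses the same reasoning.
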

\begin{proof}
Since $v \in L_F$, 
we have
$v = \alpha(x_1 - x_2)$, for some $x_1,x_2 \in F$ and $\alpha \in \R$.
As 
$x_1, x_2 \in F$,
we have $\langle w,  x_1 \rangle = \langle w, x_2 \rangle$. The claim follows.
\end{proof}

\noindent
Now, let $F_0$ be the face that minimizes the current weight function $w_0$. 
Let $v$ be in $L_{F_0}$. 
Choose a new weight function $w' \in \{0,1,\dots, N-1\}^E$ such that $\innerprod{w'}{v} \neq 0.$
Let  $w_1 := m N\, w_0 + w'$ and let $F_1$ be the face that minimizes~$w_1$.  
Clearly, $\innerprod{w_1}{v} \neq 0$ and thus, by \cref{cla:parallel}, $v \not\in L_{F_1}$. 
This implies that $F_1$ is strictly contained in $F_0$.
To ensure that $F_1$ is \emph{significantly} smaller than~$F_0$,
 we choose many vectors in $L_{F_0}$, say $v_1,v_2,\dots, v_k$,
and construct a weight vector $w'$ such that for all $i \in [k]$, we have $\innerprod{w'}{v_i} \neq 0 $.
The following well-known lemma actually constructs a list of weight vectors 
such that one of them has the desired property (see \cite[Lemma 2]{FKS84}).
\begin{lemma}
\label{lem:weights}
Given $m,k,t$, let $q = mk \log t$.
In time $\poly(q)$ one can construct a 
set of weight vectors $w_1,w_2,\dots, w_q \in \{0,1,2, \dots, q\}^m$ 
such that for any set of nonzero vectors $v_1,v_2, \dots, v_k $ in $\{-(t-1), \dots, 0,1,\dots, t-1\}^m$
there exists a $j \in [q]$ such that for all $i \in [k]$ we have
$\langle w_j ,  v_i \rangle \neq 0$.
\end{lemma}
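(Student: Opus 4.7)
The plan is to construct the weight vectors via a classical prime-fingerprinting scheme based on a base-$t$ encoding.

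First, to each vector $v \in \{-(t-1), \ldots, t-1\}^m$ I would associate the integer fingerprint
\[V(v) \;:=\; \sum_{e=1}^{m} v(e)\, t^{e-1}.\]
Since $|v(e)| < t$ for every $e$, this is effectively a signed base-$t$ expansion, so $V(v) = 0$ if and only if $v = 0$, and $|V(v)| < t^m$. In particular, whenever $v_i \neq 0$ the integer $V(v_i)$ is nonzero and has at most $\log_2 t^m = O(m \log t)$ distinct prime divisors.

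Next, I would index the weight vectors by primes: for each prime $p$, define $w_p \in \{0, 1, \ldots, p-1\}^m$ by $w_p(e) := t^{e-1} \bmod p$. By construction
\[\langle w_p, v \rangle \;\equiv\; V(v) \pmod{p},\]
so $\langle w_p, v_i \rangle \neq 0$ as an integer whenever $p$ does not divide $V(v_i)$. Taking $q$ to be a suitable constant multiple of $mk\log t$, the union of the prime divisors of the $k$ nonzero fingerprints $V(v_1), \ldots, V(v_k)$ has size at most $k \cdot O(m \log t) = O(q)$. Hence among the first $q$ primes, at least one is ``good'' for all $i \in [k]$ simultaneously, and the corresponding $w_p$ is the desired weight function.

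For efficiency and the size bound, Bertrand's postulate (or the prime number theorem) places the first $q$ primes in $\{2, \ldots, O(q \log q)\}$, so the weight entries $t^{e-1} \bmod p$ lie in $\{0, \ldots, O(q \log q)\}$ and are computable by repeated squaring in $\poly(\log q, \log t)$ time; sieving up to $O(q \log q)$ produces the full family in $\poly(q)$ total time.

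The main subtlety, which would need to be verified carefully, is matching the stated bounds of the lemma exactly: the prime-counting function and Bertrand's postulate each cost an $O(\log q)$ factor, so the weights and the number of primes are bounded by $\tilde{O}(mk \log t)$ rather than exactly $mk \log t$. This polylog slack is absorbed by redefining $q$, since the lemma is only used asymptotically. I do not expect any deeper obstacle---the argument is a clean combination of base-$t$ fingerprinting, a distinct-prime-divisor count, and Bertrand's postulate.
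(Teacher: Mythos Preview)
Your approach is correct and shares the same core idea as the paper: encode each $v$ by the base-$t$ fingerprint $\langle (1,t,\dots,t^{m-1}),v\rangle$ and then reduce modulo small integers. The difference is in the modular step. You reduce modulo the first $q$ \emph{primes} and argue via a count of distinct prime divisors; the paper instead takes $w_j := (1,t,\dots,t^{m-1}) \bmod j$ for \emph{all} integers $j=1,\dots,q$ and uses the bound $\lcm(2,\dots,q) > 2^q = t^{mk}$ (from \cite{Nai82}) to conclude that some $j\le q$ fails to divide the product $\prod_i \langle w,v_i\rangle$. This buys exactly the tightness you flagged as missing: the paper's construction gives precisely $q = mk\log t$ vectors with entries in $\{0,\dots,q\}$, with no extra $\log q$ factor from the prime number theorem. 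Your version is perfectly fine for the application (the slack is indeed absorbed asymptotically), but if you want to match the lemma as stated, replacing ``first $q$ primes'' by ``all moduli $2,\dots,q$'' and swapping the prime-divisor count for the lcm bound closes the gap with no additional work.
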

\begin{proof}
First define $w := (1,t, t^2, \dots, t^{m-1})$.
Clearly, $\langle w ,  v_i \rangle \neq 0$ for each $i$, because each coordinate of $v_i$ is less than~$t$ in absolute value.
To get a weight vector with small coordinates, we go modulo small numbers.
We consider the following weight vectors $w_j$ for $1 \leq j \leq q$:
$$ w_j := w \bmod j .$$
 We claim that this set of weight vectors has the desired property.
We know that 
$$W = \prod_{i=1}^k \langle w , v_i \rangle  \neq 0.$$
Note that the product $W$ is bounded by $t^{mk}$.
On the other hand, it is known that 
$\lcm(2,3,\dots, q) > 2^q = t^{mk}$ for all $q \geq 7$ \cite{Nai82}.
Thus, there must exist a $2 \leq j \leq q$ such that $j$ does not divide $W$.
In other words, for all $i \in [k]$
$$\langle w , v_i \rangle \not\equiv 0 \pmod{j}$$
which is the desired property.
\end{proof}

\noindent
There are two things to note about this lemma: (i) It is black-box in the sense that
we do not need to know the set of vectors $\{v_1,v_2,\dots, v_k\}$. (ii) 
We do not know a priori which function will work in the given set of functions. So, one has to
try all possibilities. 

The lemma tells us that we can ensure that $\langle w' , v \rangle \neq 0$ for polynomially many vectors~$v$ whose 
coordinates are polynomially bounded. 
Below, we formally present the weight construction.

To prove \cref{thm:isolation},
let~$c$ be the constant in the assumption of the theorem.
Let $N = m^{O(1)}$ be a sufficiently large  number and $p = \floor{\log_c (m+1)} $.
Let~$w_0\colon E \to \Z$ be a weight function such that $ \innerprod{w_0}{v} \neq 0$ for 
all nonzero $v \in \Z^E$ with $\norm{v} < c$.
For $i = 1,2, \dots, p$, define
\begin{itemize}
\item[$F_{i-1}$:] the face of $P(\B)$ minimizing $w_{i-1}$
\item[$w'_i$:] a weight vector in $\{0,1,\dots,N-1\}^E$ such that $\innerprod{w'_i}{v} \neq 0$ for 
all nonzero $v \in L_{F_{i-1}}$ with  $ \norm{v} < c^{i+1}$. 
\item[$w_{i}$:]  $m N w_{i-1} + w'_i$. 
\end{itemize}

\noindent
Observe that $F_{i} \subseteq F_{i-1}$, for each~$i$ by \cref{cla:subface}.
Hence, also for the associated lattices we have $L_{F_{i}} \subseteq L_{F_{i-1}}$.
As we show in the next claim, the choice of $w'_i$ together with \cref{cla:parallel} ensures that there are no vectors in $L_{F_i}$ with norm less than $c^{i+1}$.
\begin{claim}
\label{cla:vci}
For $i =0,1,2, \dots, p$, 
we have $\lambda(L_{F_i}) \geq c^{i+1}$.
\end{claim}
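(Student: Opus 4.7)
The plan is to prove \cref{cla:vci} by induction on $i$, using \cref{cla:parallel} as the main engine and relying on the nesting $F_i \subseteq F_{i-1}$ (hence $L_{F_i} \subseteq L_{F_{i-1}}$) guaranteed by \cref{cla:subface}. The base case $i=0$ is essentially by fiat: $w_0$ was chosen so that $\langle w_0, v\rangle \neq 0$ for every nonzero $v \in \Z^E$ with $\|v\| < c$, while \cref{cla:parallel} applied to $F_0$ forces $\langle w_0, v\rangle = 0$ for every $v \in L_{F_0}$. So no nonzero vector of $L_{F_0}$ can have norm smaller than $c$, i.e., $\lambda(L_{F_0}) \geq c$.

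For the inductive step, assume $\lambda(L_{F_{i-1}}) \geq c^i$ and suppose, for contradiction, that there exists a nonzero $v \in L_{F_i}$ with $\|v\| < c^{i+1}$. The key observation is that this $v$ simultaneously lives in both lattices: since $L_{F_i} \subseteq L_{F_{i-1}}$, we have $v \in L_{F_{i-1}}$, and since $c^{i+1} = c \cdot c^i \leq c\,\lambda(L_{F_{i-1}})$ by the inductive hypothesis, $v$ lies in the polynomially-bounded set of near-shortest vectors of $L_{F_{i-1}}$ against which $w'_i$ was designed via \cref{lem:weights}. Hence, by the defining property of $w'_i$, we have $\langle w'_i, v\rangle \neq 0$.

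The contradiction comes from a short ``telescoping'' computation: \cref{cla:parallel} applied to the pair $(w_i, F_i)$ gives $\langle w_i, v\rangle = 0$, and the same claim applied to $(w_{i-1}, F_{i-1})$ (legitimate because $v \in L_{F_{i-1}}$) gives $\langle w_{i-1}, v\rangle = 0$. Expanding $w_i = mN\, w_{i-1} + w'_i$ then yields $\langle w'_i, v\rangle = \langle w_i, v\rangle - mN\, \langle w_{i-1}, v\rangle = 0$, contradicting the previous paragraph. There is no real obstacle here: the only subtlety worth flagging is that the construction of $w'_i$ is well-defined in the first place, which is exactly why the inductive lower bound $\lambda(L_{F_{i-1}}) \geq c^i$ must be maintained round-by-round, so that the set of vectors to be separated remains polynomially bounded by the hypothesis of \cref{thm:isolation} and \cref{lem:weights} applies with polynomially bounded parameters.
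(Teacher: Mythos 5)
Your proof is correct and uses the same central mechanism as the paper: apply \cref{cla:parallel} to both $(w_i, F_i)$ and $(w_{i-1}, F_{i-1})$, expand $w_i = mN\,w_{i-1} + w'_i$ to conclude $\innerprod{w'_i}{v} = 0$, and then invoke the defining property of $w'_i$. The paper presents this as a direct derivation rather than a contradiction, and does not explicitly phrase the argument as an induction; your explicit induction is harmless and actually makes the $i=0$ base case (which the paper glosses over) clearer. One small clarification: the inductive hypothesis $\lambda(L_{F_{i-1}}) \geq c^i$ is not strictly needed for the logical validity of the claim itself once $w'_i$ is taken as given with its stated separating property; the paper uses that bound only later (in \cref{cla:weight-bound}) to argue that the set of vectors $w'_i$ must separate is polynomially small, so that \cref{lem:weights} yields polynomially bounded weights. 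Your last paragraph correctly flags this, but it belongs to the weight-bound analysis rather than to the proof of the claim proper.
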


\begin{proof}
Consider a nonzero vector $v \in L_{F_i}$.
By \cref{cla:parallel},  we have
\begin{equation} 
 \innerprod{w_i}{v} = m N \innerprod{w_{i-1}}{v} + \innerprod{w'_i}{v} = 0.
\label{eq:wiv0}
\end{equation}
Since $v$ is in $L_{F_i}$, it is also in $L_{F_{i-1}}$ and 
again by \cref{cla:parallel}, we have
$ \innerprod{w_{i-1}}{v}  = 0$.
Together with~(\ref{eq:wiv0}) we conclude that
$
\innerprod{w'_i}{v} = 0.
$
By the definition of~$w'_{i}$,
this implies that 
$\norm{v} \geq c^{i+1}$.
\end{proof}

\noindent
Finally we argue that $w_p$ is isolating.
\begin{claim}
\label{cla:Fp-point}
The face $F_p$ is a point.
\end{claim}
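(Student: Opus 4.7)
The plan is to combine the lower bound on $\lambda(L_{F_p})$ from \cref{cla:vci} with the trivial upper bound on the $\ell_1$-diameter of any face of $P(\B)$, which sits inside the unit cube $[0,1]^m$.

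First, I would unpack the choice of $p = \floor{\log_c(m+1)}$. By definition of the floor, $p+1 > \log_c(m+1)$, hence $c^{p+1} > m+1 > m$. Applying \cref{cla:vci} at index $i = p$ then gives $\lambda(L_{F_p}) \geq c^{p+1} > m$, so every nonzero vector in the lattice $L_{F_p}$ has $\ell_1$-norm strictly greater than $m$.

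Next, I would exploit the fact that $F_p$, being a face of $P(\B)$, is itself a bounded polytope and hence equals the convex hull of its vertices; moreover, every vertex of $F_p$ is also a vertex of $P(\B)$, and therefore lies in $\{0,1\}^m$. Suppose for contradiction that $F_p$ is not a single point. Then it has at least two distinct vertices $x_1, x_2 \in \{0,1\}^m$. Taking $\alpha = 1$ in the definition of $L_{F_p}$, the vector
\[
v \ := \ x_1 - x_2 \ \in \ \{-1,0,1\}^E
\]
is a nonzero element of $L_{F_p}$ with $\norm{v} \leq m$, contradicting the bound $\lambda(L_{F_p}) > m$ established above. Hence $F_p$ has a unique vertex, and therefore $F_p$ is a point.

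There is essentially no obstacle here: the work has already been done in \cref{cla:subface}, \cref{cla:parallel}, and \cref{cla:vci}, which together guarantee that after $p$ rounds every nonzero lattice vector in $L_{F_p}$ is too long to be a difference of two $0/1$-vectors. The only quantitative point to verify is that $p = \floor{\log_c(m+1)}$ rounds suffice, which is exactly why the exponent in the definition of $p$ is chosen this way.
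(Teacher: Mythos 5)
Your proof is correct and follows exactly the same route as the paper's: take two vertices of $F_p$, note they lie in $\{0,1\}^m$ so their difference is a lattice vector of $\ell_1$-norm at most $m$, and invoke \cref{cla:vci} together with $m < c^{p+1}$ (from the choice of $p$) to force that difference to be zero. The only difference is that you spell out the arithmetic behind $m < c^{p+1}$ in more detail than the paper does.
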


\begin{proof}
Let $y_1,y_2 \in F_p$ be vertices and thus belong to $\{0,1\}^m$.
Then $y_1-y_2 \in L_{F_p}$ and $\norm{y_1-y_2} \leq m < c^{p+1}$.
By \cref{cla:vci}, we have that $y_1-y_2$ must be zero, i.e., $y_1 = y_2$.
\end{proof}
\noindent
The following claim, which gives bounds on the number of weight vectors we need to try
and the weights involved, finishes the proof of \cref{thm:isolation}.
\begin{claim}
\label{cla:weight-bound}
The number of possible choices for $w_p$ such that one of them is isolating for~$\B$
is $m^{O(\log m)}$. 
The weights in each such weight vector are bounded by $m^{O( \log m)}$. 
\end{claim}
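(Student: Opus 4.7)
The plan is to track two quantities through the $p = O(\log m)$ rounds of the construction: the number of candidate weight vectors we are forced to enumerate at each step, and the magnitude of the weights produced. Since $p = \lfloor \log_c(m+1) \rfloor = O(\log m)$, if I can show that each round contributes only a polynomial blow-up in both quantities, then multiplying across the $p$ rounds gives the claimed $m^{O(\log m)}$ bounds.

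First I would bound the weights. The base vector $w_0$ needs to satisfy $\langle w_0, v\rangle \neq 0$ for every nonzero $v \in \Z^E$ with $\|v\| < c$; since $c$ is an absolute constant, one may take $w_0$ to have entries bounded by a constant (for example, via the Fischer--Salzberg trick from \cref{lem:weights} applied to the finite list of candidate short $v$). Each subsequent round sets $w_i = mN w_{i-1} + w'_i$ with $w'_i \in \{0,\dots,N-1\}^E$, so by induction $\|w_i\|_\infty \leq (mN)^i \|w_0\|_\infty + \sum_{j=1}^{i} (mN)^{i-j} N \leq 2(mN)^{i+1}$. Plugging in $i = p = O(\log m)$ and $N = m^{O(1)}$ gives $\|w_p\|_\infty \leq (mN)^{O(\log m)} = m^{O(\log m)}$, as required.

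Second I would bound the number of candidate vectors. The vector $w'_i$ must satisfy $\langle w'_i, v\rangle \neq 0$ for every nonzero $v \in L_{F_{i-1}}$ with $\|v\| < c^{i+1}$. By the hypothesis of \cref{thm:isolation} together with \cref{cla:vci} (which gives $\lambda(L_{F_{i-1}}) \geq c^i$, so we are looking at vectors of length less than $c \cdot \lambda(L_{F_{i-1}})$), the number $k$ of such vectors is $m^{O(1)}$. Moreover, these vectors have coordinates bounded by their $\ell_1$-norm, hence by $t := c^{i+1} \leq c^{p+1} = O(m)$. Invoking \cref{lem:weights} with these parameters yields a list of $q = mk\log t = m^{O(1)}$ candidate vectors with entries in $\{0,1,\dots,q\}$, one of which is guaranteed to work as $w'_i$. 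Choosing $N$ to be this polynomial upper bound validates the earlier parameter choice.

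Multiplying the per-round counts: since each $w'_i$ ranges over an $m^{O(1)}$-size list and there are $p = O(\log m)$ rounds, the total number of candidate final weight vectors $w_p$ is $(m^{O(1)})^{O(\log m)} = m^{O(\log m)}$. By \cref{cla:Fp-point}, at least one choice of $(w'_1,\dots,w'_p)$ makes $F_p$ a single vertex and hence $w_p$ isolating for $\B$. The main subtlety I expect is making sure the hypothesis of \cref{lem:weights} is applicable at every round — which requires the upper bound $t$ on coordinate magnitudes of near-shortest lattice vectors and the polynomial bound $k$ on their number to both hold uniformly for every face encountered — but these are both guaranteed by \cref{cla:vci} combined with the $\ell_1$-norm assumption in \cref{thm:isolation}.
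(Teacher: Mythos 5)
Your proof is correct and follows essentially the same route as the paper: bound the per-round blow-up in candidate count and weight magnitude using \cref{cla:vci}, the near-shortest-vector hypothesis of \cref{thm:isolation}, and \cref{lem:weights}, then multiply across $p = O(\log m)$ rounds. The only (inconsequential) imprecision is the remark that $w_0$ can be taken with constant entries: this is literally true for $c < 2$ (the only short integral vectors are $\pm e_i$, so $w_0 = (1,\dots,1)$ works), but for a general constant $c$ one would again invoke \cref{lem:weights} to get a polynomial-size list of $w_0$ candidates with $m^{O(1)}$ entries, which is still absorbed into the final bound.
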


\begin{proof}
To bound the weights of~$w_p$, we bound $w'_i$ for each~$i$.
By \cref{cla:vci}, 
we have $\lambda(L_{F_{i-1}}) \geq c^{i}$, for each $1\leq i \leq p$.
The hypothesis of \cref{thm:isolation} implies
$$\left\lvert \set{v \in L_{F_{i-1}}}{\norm{v} < c^{i+1}} \right\rvert \leq m^{O(1)}.$$ 
Recall that we have to ensure $ \innerprod{w'_i}{v} \neq 0$ for all nonzero vectors~$v$ in the above set.
We apply \cref{lem:weights} with $k = m^{O(1)}$.
For parameter~$t$,
note that as $\norm{v} < c^{i+1} \leq c^{p+1} \leq c(m+1)$, each coordinate of $v$ is  
less than $c(m+1)$
and therefore $t \leq c(m+1)$.
Thus,
we get  $w'_i$ with weights bounded by~$m^{O(1)}$.
Therefore the weights in $w_p$ are bounded by $m^{O(p)} = m^{O( \log m)}$.

Recall that \cref{lem:weights} actually gives a set of $m^{O(1)}$ weight vectors for possible choices of $w'_i$
and one of them has the desired property. Thus, we try all possible combinations for each $w'_i$.
This gives us a set of $m^{O(\log m)}$ possible choices for $w_p$ such that one of them is isolating for~$\B$.
\end{proof}

\section{Number of Short Vectors in Lattices: Proof of \cref{thm:tum-LA}}
\label{sec:tum-regular}
In this section, we show that \cref{thm:tum-LA} follows from \cref{thm:regular}.
We define a circuit of a matrix and show that 
to prove \cref{thm:tum-LA}, it is sufficient to upper bound the number of near-shortest circuits of a TU matrix. 
We argue that this, in turn, is implied by a bound on the number of near-shortest circuits of a regular matroid.
Just as a circuit of a matroid is a minimal dependent set, a circuit of matrix is 
a minimal linear dependency among its columns. 
Recall that for an $n \times m$ matrix~$A$,
the lattice~$L(A)$ is defined as the set of integer vectors in its kernel,
$$L(A) := \set{v \in \Z^m }{ A v =0 }.$$

\begin{definition}[Circuit] 
\label{def:tum-circuit}
For an $n\times m$ matrix~$A$, a vector $u \in L(A)$ is a \emph{circuit of}~$A$ if
\begin{itemize}
\item 
there is no nonzero $v \in L(A)$ with $\supp(v) \subsetneq \supp(u)$, and
\item 
$\gcd(u_1,u_2, \dots, u_m) =1$.
\end{itemize}
\end{definition}
\noindent
Note that if $u$ is a circuit of~$A$, then so is $-u$.
The following property of the circuits of a TU matrix is well known (see~\cite[Lemma 3.18]{Onn10}).

\begin{fact}
\label{fac:matrix-circuit}
Let $A$ be a TU matrix.
Then every circuit of~$A$ has its coordinates in $\{-1,0,1\}$.
\end{fact}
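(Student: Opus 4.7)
The plan is to exploit the minimality of the circuit's support to reduce to a $(|S|-1) \times |S|$ TU submatrix, and then apply Cramer's rule together with the defining property of TU matrices to control the coordinates of the unique (up to scaling) generator of the kernel.

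More concretely, let $u$ be a circuit of $A$, set $S = \mathrm{supp}(u)$, and let $A_S$ denote the column submatrix of $A$ on columns indexed by $S$. The first step is to show that $\ker A_S$ has dimension exactly $1$. The vector $u_S$ (the restriction of $u$ to $S$) is a nonzero element of $\ker A_S$, so the dimension is at least one. For the upper bound I would argue by contradiction: if $\dim \ker A_S \geq 2$, then for any chosen index $i \in S$ one can take a rational linear combination of two independent kernel vectors to force the $i$-th coordinate to vanish while keeping the combination nonzero. Clearing denominators and extending by zeros on $E \setminus S$ yields a nonzero element of $L(A)$ whose support is strictly contained in $S$, contradicting the minimality clause of \cref{def:tum-circuit}.

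Next, I would pick $|S|-1$ linearly independent rows of $A_S$ to form a $(|S|-1) \times |S|$ submatrix $B$. Since the remaining rows of $A_S$ are linear combinations of the chosen ones, $\ker B = \ker A_S$ is still the same one-dimensional space. Cramer's rule (equivalently, cofactor expansion of a bordered determinant) then shows that $\ker B$ is generated by the integer vector $g$ whose $i$-th coordinate equals $(-1)^i \det(B^{(i)})$, where $B^{(i)}$ denotes $B$ with its $i$-th column removed. Each $B^{(i)}$ is a square submatrix of the TU matrix $A$, so total unimodularity gives $\det(B^{(i)}) \in \{-1, 0, +1\}$, and hence $g \in \{-1, 0, +1\}^S$.

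To conclude, $u_S$ must be a scalar multiple of $g$; since $u_S$ is nowhere zero on $S$ by the definition of support, neither is $g$, so in fact $g \in \{-1,+1\}^S$, and the scalar $c$ with $u_S = c \, g$ is a nonzero integer. The gcd condition of \cref{def:tum-circuit} then forces $|c| = 1$, so $u_S \in \{-1,+1\}^S$ and, extending by zeros on $E \setminus S$, $u \in \{-1, 0, +1\}^m$. I do not foresee a substantial obstacle, since every step is standard once the setup is in place; the only point requiring slight care is the first step, where one must verify that a kernel of dimension at least two always yields a nonzero integer vector of strictly smaller support via the clearing-of-denominators trick.
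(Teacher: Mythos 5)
Your proof is correct, and it is essentially the standard argument (via Cramer's rule) that establishes this fact; the paper itself does not spell out a proof but simply cites it as well known from Onn's book (Lemma~3.18 of~\cite{Onn10}), so there is no paper proof to compare against in detail. Your reduction has all the right ingredients: circuit minimality gives $\dim \ker A_S = 1$ (your clearing-of-denominators step to get a strictly smaller integral support is fine, using that $A$ is integral so the kernel admits a rational basis); passing to a full-row-rank $(|S|-1)\times|S|$ submatrix $B$ preserves the kernel; the generator $g$ with $g_i = (-1)^i\det(B^{(i)})$ has entries in $\{-1,0,1\}$ because each $B^{(i)}$ is a square submatrix of a TU matrix; $u_S$ is a nowhere-zero scalar multiple of $g$, so $g \in \{-1,+1\}^S$ and the scalar is a nonzero integer, which the gcd normalization forces to be $\pm 1$. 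One could also argue directly from the fact that TU matrices have the property that every vertex of the polyhedron $\{x : A_S x = 0,\; x_j = 1\}$ (for a chosen $j\in S$) is integral, but that buys nothing over your cleaner linear-algebra route. The only small thing worth saying explicitly in a final write-up is why $g$ itself is nonzero: $\rank B = |S|-1$ guarantees some maximal minor $\det(B^{(i)})$ is nonzero, hence $g\neq 0$, which you implicitly use when writing $u_S = c\,g$.
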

\noindent
Now, we define a notion of conformality among two vectors. 

\begin{definition}[Conformal \cite{Onn10}]
Let $u,v \in \R^m$. 
We say that~$u$ is \emph{conformal to}~$v$, denoted by
$u \sqsubseteq v $, 
if $u_iv_i \geq 0$ and $\abs{u_i} \leq \abs{v_i}$, for each $1\leq i \leq m$.
\end{definition}
\begin{observation}
\label{obs:conformal}
For vectors~$u$ and~$v$ with $u \sqsubseteq v$, we have
$\norm{v-u} = \norm{v} - \norm{u}$.
\end{observation}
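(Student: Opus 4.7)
The plan is to verify the identity coordinate by coordinate and then sum, exploiting the fact that the $\ell_1$-norm decomposes as a sum of absolute values. Fix an index $i \in \{1,\dots,m\}$. The hypothesis $u \sqsubseteq v$ gives two pieces of information at position $i$: the signs agree in the weak sense $u_i v_i \geq 0$, and the magnitudes are nested as $|u_i| \leq |v_i|$. I would show, for each $i$, the coordinate-wise identity
\[
|v_i - u_i| \;=\; |v_i| - |u_i|,
\]
and then sum over $i$ to conclude $\|v-u\| = \|v\| - \|u\|$.

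To establish the coordinate identity, I would split into the trivial case $v_i = 0$ (which forces $u_i = 0$ by $|u_i| \leq |v_i|$, making both sides zero) and the case $v_i \neq 0$. In the latter case, write $s := \operatorname{sign}(v_i) \in \{+1,-1\}$; then $s v_i = |v_i|$ and, from $u_i v_i \geq 0$ together with $|u_i| \leq |v_i|$, we get $s u_i = |u_i| \in [0, |v_i|]$. Hence $s(v_i - u_i) = |v_i| - |u_i| \geq 0$, so $|v_i - u_i| = s(v_i - u_i) = |v_i| - |u_i|$, as desired.

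Summing the coordinate identity across $i = 1,\dots,m$ yields
\[
\|v - u\| \;=\; \sum_{i=1}^{m} |v_i - u_i| \;=\; \sum_{i=1}^{m} \bigl(|v_i| - |u_i|\bigr) \;=\; \|v\| - \|u\|,
\]
which is the claim. There is no real obstacle here: the statement is a direct unpacking of the definition of conformality combined with the fact that subtracting a same-sign, smaller-magnitude number decreases absolute value by exactly that magnitude. The only thing to be careful about is handling the boundary case $v_i = 0$, which the sign-analysis above absorbs cleanly.
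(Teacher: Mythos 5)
Your proof is correct, and since the paper states this as an observation without supplying any argument, there is no official proof to compare against; your coordinate-by-coordinate case analysis (splitting on whether $v_i = 0$ and otherwise multiplying by $\operatorname{sign}(v_i)$) is exactly the straightforward verification the authors intended the reader to fill in.
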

\noindent
The following lemma follows from \cite[Lemma 3.19]{Onn10}.

\begin{lemma}
\label{lem:conformal-circuit}
Let $A$ be a TU matrix. 
Then for any nonzero vector $v \in L(A)$, there is 
a circuit~$u$ of~$A$ that is conformal to~$v$.
\end{lemma}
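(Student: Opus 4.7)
The plan is to argue by strong induction on the support size $|\supp(v)|$. The base case $|\supp(v)| = 1$ is easy: $v = c\,e_i$ for some nonzero integer $c$, and $Av = 0$ forces the $i$-th column of $A$ to vanish, so $e_i$ is itself a circuit of $A$, and the appropriate sign of $e_i$ is conformal to $v$.

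For the inductive step, assume the claim for vectors with strictly smaller support. If $v/\gcd(v_1,\ldots,v_m)$ is itself a circuit of $A$ (equivalently, no nonzero element of $L(A)$ has strictly smaller support than $v$), then this normalized vector is conformal to $v$ and we are done. Otherwise, there is a nonzero $v' \in L(A)$ with $\supp(v') \subsetneq \supp(v)$; by iterating or choosing such $v'$ to have minimal support among lattice vectors contained in $\supp(v)$, we may take $v'$ to itself be a circuit of $A$, so Fact~\ref{fac:matrix-circuit} gives $v' \in \{-1,0,1\}^m$.

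If one of $v'$ or $-v'$ is conformal to $v$ (i.e., the signs of $v'$ on $\supp(v')$ uniformly agree or uniformly disagree with those of $v$), that circuit is the one we want, since $|v'_j| = 1 \leq |v_j|$ on $\supp(v') \subseteq \supp(v)$. The nontrivial case is when $v'$ has genuinely mixed signs relative to $v$ on $\supp(v')$. Here I would produce a nonzero $\tilde v \in L(A) \cap \mathbb{Z}^m$ with $\tilde v \sqsubseteq v$ and $|\supp(\tilde v)| < |\supp(v)|$, and then invoke the induction hypothesis on $\tilde v$ to obtain a circuit $u \sqsubseteq \tilde v$; since $\tilde v \sqsubseteq v$, conformality composes to give $u \sqsubseteq v$. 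A natural candidate is the pivot $\tilde v := v - \alpha\,v'$ with $\alpha := \min\{v_j/v'_j : j \in \supp(v'),\; v'_j v_j > 0\} \in \mathbb{Z}_{\geq 1}$; this vector lies in $L(A) \cap \mathbb{Z}^m$, is nonzero (since $\supp(v') \subsetneq \supp(v)$), and has strictly smaller support than $v$, cancelling at the minimizing coordinate.

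The main obstacle is verifying $\tilde v \sqsubseteq v$: at indices $j \in \supp(v')$ where $v'_j v_j < 0$ the pivot inflates magnitudes to $|v_j| + \alpha$, potentially violating $|\tilde v_j| \leq |v_j|$. Overcoming this is the crux, and is exactly where total unimodularity is essential. Following the approach of Onn's Lemma~3.19, the remedy is to replace the arbitrary circuit $v'$ by one whose signs are already aligned with those of $v$ on its support. Such a circuit exists because a TU matrix represents a regular, hence orientable, matroid; the signed-circuit elimination axiom of oriented matroids allows one to iteratively eliminate sign-conflicting coordinates of $v'$ (combining $v'$ with additional circuits of $M(A)$ contained in $\supp(v)$) while keeping the support inside $\supp(v)$, until one arrives at a circuit $v''$ satisfying $\pm v'' \sqsubseteq v$. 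At that point the easy case of the induction applies and we conclude.
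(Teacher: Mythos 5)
The paper itself does not include a proof of this lemma; it is cited from Onn's Lemma~3.19 on conformal circuit decompositions. Your proposal correctly identifies the obstacle, but then leaves exactly that obstacle unresolved: the step of producing a circuit $v''$ with $\pm v'' \sqsubseteq v$ by ``iteratively eliminating sign-conflicting coordinates of $v'$'' via the oriented-matroid elimination axiom is essentially a restatement of the lemma rather than a reduction to a prior fact, and the one-sentence description is not an argument. Signed-circuit elimination removes one conflicting element between two particular circuits at a time; you would still have to show that the iteration stays inside $\supp(v)$, makes monotone progress toward sign agreement with $v$, and terminates. None of that is established in the proposal, so as written the crux you flag remains unproved.

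There is a short way to close the gap that avoids oriented-matroid machinery entirely. First observe that, thanks to \cref{fac:matrix-circuit}, for a TU matrix conformality of a circuit $u$ to an integral $v$ reduces to the two conditions $\supp(u)\subseteq\supp(v)$ and $u_i v_i\ge 0$ for all $i$: the magnitude bound $|u_i|\le |v_i|$ is then automatic, since $u_i\in\{-1,0,1\}$ while $v_i$ is a nonzero integer wherever $u_i\ne 0$. With that reduction in hand, consider all nonzero $w\in L(A)$ with $\supp(w)\subseteq\supp(v)$ and $w_i v_i\ge 0$ for every $i$ (this set is nonempty, as $v$ itself qualifies), and pick one with inclusion-minimal support. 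If $w$ divided by the gcd of its entries were not a circuit, there would be a nonzero $z\in L(A)$ with $\supp(z)\subsetneq\supp(w)$; after flipping the sign of $z$ if needed, set $t^{*}:=\min\{\,w_j/z_j : j\in\supp(z),\ z_j w_j>0\,\}>0$ and $\tilde w:=w-t^{*}z$. Then $\tilde w\in\ker A$ is nonzero (a coordinate in $\supp(w)\setminus\supp(z)$ survives), $\supp(\tilde w)\subsetneq\supp(w)$ (the minimizing coordinate vanishes), and $\tilde w_i w_i\ge 0$ for all $i$, hence also $\tilde w_i v_i\ge 0$. Clearing denominators gives an integral vector contradicting the minimality of $\supp(w)$. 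So the normalization of $w$ is a circuit, and by the observation above it is conformal to $v$. This extremal-support argument is the concrete content that your appeal to circuit elimination was gesturing at, and is essentially the proof underlying Onn's lemma.
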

\noindent
We use the lemma to argue that any small enough vector in $L(A)$ must be a circuit.

\begin{lemma}
\label{lem:tum-small-circuit}
Let $A$ be a TU matrix and let $\lambda := \lambda(L(A))$.
Then any nonzero vector $v \in L(A)$ with $\norm{v} < 2 \lambda$ is a circuit of $A$.
\end{lemma}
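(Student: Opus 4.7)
My plan is a short conformal-decomposition argument using the two ingredients already assembled: \cref{lem:conformal-circuit} and \cref{obs:conformal}.

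First, given a nonzero $v \in L(A)$ with $\norm{v} < 2\lambda$, I invoke \cref{lem:conformal-circuit} to obtain a circuit $u$ of $A$ with $u \sqsubseteq v$. By definition of a circuit, $u$ is itself a nonzero vector of $L(A)$, and since $u \sqsubseteq v$ the difference $v - u$ also lies in $L(A)$ (it is an integer vector in the kernel of $A$). \cref{obs:conformal} then gives the crucial additive identity
\[
\norm{v} \;=\; \norm{u} + \norm{v-u}.
\]

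The key step is to rule out $v \neq u$. If $v - u$ were nonzero, then both $u$ and $v - u$ would be nonzero lattice vectors in $L(A)$, each of $\ell_1$-norm at least $\lambda(L(A)) = \lambda$. Combining with the identity above yields $\norm{v} \geq 2\lambda$, contradicting the hypothesis $\norm{v} < 2\lambda$. Hence $v = u$, so $v$ is a circuit of $A$.

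I do not anticipate any real obstacle: the whole argument is essentially two inequalities glued by the conformal norm identity. The only subtlety is making sure $v - u$ belongs to $L(A)$ and is an integer vector, which is immediate since both $v$ and $u$ are in $L(A)$ and $L(A)$ is closed under subtraction. Thus the proof is a three-line application of the already-stated facts.
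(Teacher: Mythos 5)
Your proof is correct and follows essentially the same route as the paper's: invoke \cref{lem:conformal-circuit} to extract a conformal circuit $u \sqsubseteq v$, apply the additive identity $\norm{v} = \norm{u} + \norm{v-u}$ from \cref{obs:conformal}, and observe that if $v-u \neq 0$ then both pieces are nonzero vectors of $L(A)$ of norm at least $\lambda$, forcing $\norm{v} \geq 2\lambda$. The only cosmetic difference is that you argue by contradiction while the paper argues by contraposition (assuming $v$ is not a circuit and deriving $\norm{v} \geq 2\lambda$); the substance is identical.
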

\begin{proof}
Suppose $v \in L(A)$ is not a circuit of~$A$. 
We show that $\norm{v} \geq 2 \lambda$.
By \cref{lem:conformal-circuit},
there is a circuit $u$ of $A$ with $u \sqsubseteq v$.
Since $v$ is not a circuit,  $v-u \neq 0$.
Since both~$u$ and~$v-u$ are nonzero vectors in~$L(A)$, we have
$\norm{u}, \norm{v-u} \geq \lambda $.
By \cref{obs:conformal}, we have
 $\norm{v} =\norm{v-u} + \norm{u} $ and thus,
we get that $\norm{v} \geq 2 \lambda$.
\end{proof}

\noindent
Recall that a matroid represented by a TU matrix is a regular matroid (see~\cref{thm:reg}). 
The following lemma shows that
the two definitions of circuits,
1) for TU matrices and  
2) for regular matroids, 
coincide. 

\begin{lemma}
\label{lem:circuits}
Let $M=(E,\I)$ be a regular matroid, represented by a TU matrix~$A$.
Then there is a one to one correspondence between the circuits of~$M$ 
and the circuits of~$A$ (up to change of sign).
\end{lemma}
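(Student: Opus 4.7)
The plan is to exhibit a map from circuits of the matrix $A$ to circuits of the matroid $M$ via the support operation $u \mapsto \supp(u)$, and then construct an inverse that assigns to each matroid circuit a pair $\{u,-u\}$ of matrix circuits. Since a circuit of $A$ is normalized by $\gcd(u_1,\dots,u_m)=1$, the only integer scalar multiples of a circuit of $A$ that are again circuits are $u$ and $-u$, so this is the only ambiguity we need to handle.

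For the forward direction, I would take a circuit $u$ of $A$ and argue that $C := \supp(u)$ is a circuit of $M$. Dependence of $C$ is immediate from $\sum_{e \in C} u_e A_e = 0$ with all $u_e \neq 0$ on $C$. For minimality, suppose some proper subset $C' \subsetneq C$ were dependent in $M$. Then the restriction of $A$ to the columns indexed by $C'$ has a nontrivial real kernel vector; by Gaussian elimination over $\Q$ and clearing denominators this yields a nonzero integer vector $v \in L(A)$ with $\supp(v) \subseteq C' \subsetneq \supp(u)$, contradicting the minimality condition in \cref{def:tum-circuit}.

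For the backward direction, I would start with a circuit $C$ of $M$. By minimality of $C$ as a dependent set, the columns $\{A_e : e \in C\}$ have rank $|C|-1$, so the kernel of the restriction of $A$ to $C$ is one-dimensional. Any rational vector in this kernel, scaled to be integer and then divided by the gcd of its nonzero entries, produces a vector $u$ with $\gcd(u_1,\dots,u_m)=1$; by \cref{fac:matrix-circuit} its entries lie in $\{-1,0,1\}$. Moreover, $\supp(u) = C$ exactly: any $e \in C$ with $u_e = 0$ would make $C \setminus \{e\}$ dependent, contradicting $C$ being a matroid circuit. Minimality of $\supp(u)$ as the support of a nonzero vector in $L(A)$ is inherited from minimality of $C$ as a dependent set of $M$, by the same Gaussian elimination argument used in the forward direction. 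Thus $u$ is a circuit of $A$, and the only other choice at the scaling step was $-u$.

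Finally, I would check that the two constructions are mutually inverse: starting from a matroid circuit $C$, constructing $\pm u$, and then taking the support returns $C$; and starting from a matrix circuit $u$, taking the support and then applying the backward construction recovers $\{u,-u\}$ (using uniqueness of the one-dimensional kernel up to $\pm 1$ integer scaling). The only subtle point I anticipate is the translation between ``minimal dependent set of columns'' and ``minimal support in $L(A)$''; this is where the step from real/rational dependencies to integer kernel vectors via TU-ness of $A$ and \cref{fac:matrix-circuit} does the real work. Everything else is bookkeeping.
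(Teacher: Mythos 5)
Your proof is correct and takes essentially the same approach as the paper's, which simply observes that circuits of $A$ correspond to minimally dependent column sets; you fill in the details the paper leaves implicit (transferring real dependencies to integer ones by clearing denominators, the rank count giving a one-dimensional kernel, and the gcd normalization pinning down $\pm u$). One cosmetic nit: you invoke \cref{fac:matrix-circuit} on $u$ before you have established that $u$ is a circuit of $A$, which is circular as written, but the $\{-1,0,1\}$ bound is not actually needed for your argument, so nothing breaks.
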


\begin{proof}
If $u \in \R^E$ is a circuit of~$A$, 
then the columns in~$A$ corresponding to the set~$\supp(u)$
are minimally dependent. Thus, the set~$\supp(u)$ is a circuit of matroid~$M$.

In the other direction, a circuit $C \subseteq E$ of matroid~$M$ is a minimal dependent set. 
Thus, the set of columns of~$A$ corresponding to~$C$ is minimally linear dependent. 
Hence, there are precisely two circuits $u , -u \in L(A)$ with their support being~$C$.
\end{proof}
\noindent
To prove \cref{thm:tum-LA}, let~$A$ be TU matrix.
By \cref{lem:tum-small-circuit},
it suffices to bound the number of near-shortest circuits of~$A$.
By \cref{lem:circuits}, the circuits of~$A$ and the circuits
of the regular matroid~$M$ represented by~$A$, coincide. 
Moreover, the size of a circuit of~$M$ is same as the $\ell_1$-norm of the corresponding circuit of~$A$,
as a circuit of $A$ has its coordinates in $\{-1,0,1\}$ by \cref{fac:matrix-circuit}.
Now \cref{thm:tum-LA} follows from \cref{thm:regular}
when we define the weight of each element being~$1$.

\section{Proof Overview of \cref{thm:regular}}
\label{sec:overview}
\cref{thm:regular} states that for a regular matroid, the number of near-shortest circuits
-- circuits whose size is at most 3/2 of the shortest circuit size -- is polynomially bounded. 
The starting point of the proof of this theorem is a remarkable result of Seymour~\cite{Sey80} which showed that every regular matroid 
can be decomposed into a set of much simpler matroids. 
Each of these building blocks for regular matroids  either
belongs to the classes of graphic and cographic matroids -- the simplest and well-known examples of regular matroids, 
or is  a special 10-element matroid $R_{10}$ (see Section~\ref{sec:matroids} for the definitions).
One important consequence of Seymour's result is a polynomial time algorithm, the only one known, for testing 
the total unimodularity of a matrix; see \cite{Sch86} (recall that a TU matrix represents a regular matroid).
Our strategy is to leverage Seymour's decomposition theorem in order to bound the number of circuits in a regular matroid. 

\subsubsection*{Seymour's Theorem and a simple inductive approach}

Seymour's decomposition involves a sequence of binary operations on matroids, each of which is either a  $1$-sum, a $2$-sum or a $3$-sum.
Formally, it states that  for every regular matroid $M$, we can build a decomposition tree -- which is a binary rooted tree --  in which the root node is the matroid $M$, 
 every node is a $k$-sum of its two children for $k=1,2$, or $3$, and  at the bottom we have 
 graphic, cographic and the $R_{10}$ matroids as the leaf nodes.
Note that the tree, in general, is not necessarily balanced and can have large depth (linear in the ground set size).

This suggests that to bound the number of near-shortest circuits in a regular matroid, 
perhaps one can use the tree structure of its decomposition, starting from the leaf nodes and arguing, inductively,
all the way up to the root. 
It is known that the number of near-shortest circuits in graphic and cographic matroids
is polynomially bounded. 
This follows from the polynomial bounds on the number of near-shortest cycles of a graph \cite{Sub95}
and on the number of near min-cuts in a graph \cite{Kar93} (\cref{thm:graphic-cographic}). 
The challenge is to show how to combine the information at an internal node.

The $k$-sum $M$ of two matroids $M_1$ and $M_2$ is defined in a way such that 
each circuit of $M$ can be built from a combination of two circuits, one from $M_1$ and another from $M_2$. 
Thus, if we have upper bounds for the number of circuits in $M_1$ and $M_2$,  
their product will give a naive upper bound for number of circuits in $M$.
Since there can be many $k$-sum operations involved, the naive product bound can quickly explode.
Hence, to keep a polynomial bound 
 we need to take a closer look at the $k$-sum operations.

\subsection*{$k$-sum operations}

\paragraph{\textbf{$1$-sum.}} A $1$-sum $M$ of two matroids $M_1$ and $M_2$ is simply their direct sum. 
That is, the ground set of $M$ is the disjoint union of the ground sets of $M_1$ and $M_2$, and  any circuit of $M$
is either a circuit of $M_1$ or a circuit of $M_2$.

The $2$-sum and $3$-sum are a bit more intricate.
It is known that the set of circuits of a matroid completely characterizes the matroid.
The $2$-sum and $3$-sum operations are defined by describing the set of circuits of the matroid obtained by the sum.
To get an intuition for the $2$-sum operation, we first describe it on two graphic matroids. 
A graphic matroid is defined with respect to a graph, where a circuit is a simple cycle in the graph.

\paragraph{\textbf{$2$-sum on graphs.}} For two graphs $G_1$ and $G_2$, their $2$-sum $G = G_1 \twosum G_2$ is any graph obtained by
identifying an edge $(u_1,v_1)$ in $G_1$ with an edge $(u_2,v_2)$ in $G_2$, that is, 
identifying $u_1$ with $u_2$ and $v_1$ with $v_2$ and then, deleting the edge $(u_1,v_1) = (u_2,v_2)$.
It would be instructive to see how a cycle in $G$, i.e., a circuit of the associated graphic matroid, looks like.
A cycle in $G$ is either a cycle in $G_1$ or in $G_2$
that avoids the edge $(u_1,v_1)=(u_2,v_2)$, or it is a union of a path $u_1 \rightsquigarrow v_1$ in $G_1$ and a 
path $v_2 \rightsquigarrow u_2$ in $G_2$.
This last possibility is equivalent to taking a symmetric difference $C_1\triangle C_2$ of two cycles $C_1$ in $G_1$ and $C_2$ in $G_2$ such that
$C_1$ passes through $(u_1,v_1)$ and $C_2$ passes through $(u_2,v_2)$.

\paragraph{\textbf{$2$-sum on matroids.}}
The $2$-sum $M_1\twosum M_2$ of two matroids $M_1$ and $M_2$ is defined analogously.
The grounds sets of $M_1$ and $M_2$, say $E_1$ and $E_2$ respectively,  
have an element in common, say $e$ (this can be achieved by identifying an element from $E_1$ with an element from $E_2$).
The sum $M_1 \twosum M_2$ is defined on the ground set $E =E_1 \Delta E_2$, the symmetric difference of the two given ground sets.
 Any circuit of the sum $M_1\twosum M_2$ is either 
a circuit in $M_1$ or in $M_2$ that avoids the common element $e$, 
or it is the symmetric difference $C_1 \triangle C_2$ of two circuits $C_1$ and $C_2$ of $M_1$ and $M_2$, respectively,
such that both $C_1$ and $C_2$ contain the common element $e$. 
 
\paragraph{\textbf{$3$-sum on matroids.}} A $3$-sum is defined similarly. 
A matroid $M$ is a $3$-sum of two matroids $M_1$ and $M_2$
if their ground sets $E_1$ and $E_2$
have a set $S$ of three elements  in
common such that $S$ is a circuit in both the matroids
and  the ground set of $M$ is the symmetric difference $E_1\triangle E_2$. 
Moreover, a circuit of $M$ is either
a circuit in $M_1$ or in $M_2$ that avoids the common elements $S$, 
or it is the symmetric difference $C_1 \triangle C_2$ of two circuits $C_1$ and $C_2$ of $M_1$ and $M_2$, respectively,
such that both $C_1$ and $C_2$ contain a common element $e$ from $S$ and no other element from $S$.

\subsection*{The inductive bound on the number of circuits}
Our proof is by a strong induction on the ground set size. 
\paragraph{Base case:} For a graphic or cographic matroid with a ground set of size $m$,
if its shortest circuit has size $r$ then the number of its circuits of size less than $\alpha r$ is at most $m^{4\alpha}$.
For the $R_{10}$ matroid, we present a constant upper bound on the number of circuits.

 \paragraph{Induction hypothesis:}
For any regular matroid with a ground set of size $m< m_0$,
 if its shortest circuit has size $r$,
then the number of its circuits of size less than $\alpha r$ is bounded by ${m}^{c\alpha}$ for some sufficiently large constant $c$.

\paragraph{Induction step:}
We prove the induction hypothesis for a regular matroid $M$ with a ground set of size $m_0$.
Let the minimum size of a circuit in $M$ be $r$. 
We want to show a bound of $m_0^{c\alpha}$ on the number of circuits in $M$ of size less than $\alpha r$.
The main strategy here is as follows: 
by Seymour's Theorem, we can write $M$ as a $k$-sum of two smaller regular matroids $M_1$ and $M_2$, with ground sets of size $m_1 <m_0$ and $m_2 <m_0$ respectively.
As the circuits of $M$ can be written as a symmetric differences of circuits of $M_1$ and $M_2$,
we derive an upper bound on the number circuits of $M$
from the corresponding bounds for $M_1$ and $M_2$, which we get from the induction hypothesis.

\vspace{1mm}
\noindent \textbf{The $1$-sum case.} 
In this case, any circuit of $M$ is either a circuit of $M_1$ or a circuit of $M_2$.
Hence, the number of circuits in $M$ of size less than $\alpha r$ is simply the sum of the number of circuits in $M_1$ and $M_2$ of size less than $\alpha r$.
Using the induction hypothesis, this sum is bounded by $m_1^{c\alpha}+ m_2^{c\alpha}$,
which is less than $m_0^{c\alpha}$ since 
 $m_0 = m_1+m_2$.
 
 \vspace{1mm}
\noindent\textbf{The $2$-sum and $3$-sum cases.} 
Let the set of common elements in the ground sets of $M_1$ and $M_2$ be $S$.
Note that $m_0 = m_1 +m_2 -\abs{S}$.
Recall from the definition of a $k$-sum that any circuit $C$ of $M$ is of the form $C_1 \triangle C_2$, where $C_1 $ and $C_2$ are circuits in $M_1$ and $M_2$ respectively, such that either \textbf{(i)} one of them, say $C_1$, has no element from $S$ and the other one $C_2$ is empty
or \textbf{(ii)} they both contain exactly one common element from $S$.
We will refer to $C_1$ and $C_2$ as projections of $C$.
Note that $\abs{C_1},\abs{C_2} \leq \abs{C}$.
In particular, if circuit $C$ is of size less than $\alpha r$, then so are its projections
$C_1$ and $C_2$. 

\vspace{1mm}
\noindent\textbf{An obstacle.}
The first step would be to bound the number of circuits $C_1$ of $M_1$ and $C_2$ of $M_2$ using the induction hypothesis. 
However, we do not have a lower bound on the minimum size of a circuit in $M_1$ or $M_2$, which is required to use the induction hypothesis. 
What we do know is that any circuit in $M_1$ or $M_2$ that does not involve elements from $S$ is also a circuit of $M$,
and thus, must have size at least $r$.
However, a circuit that involves elements from $S$ could be arbitrarily small. 
We give different solutions for this obstacle in case \textbf{(i)} and case \textbf{(ii)} mentioned above.

\vspace{1mm}
\noindent\textbf{Case (i): deleting elements in $S$.}
Let us first
consider the circuits $C_1$ of $M_1$ that do not involve elements from $S$. 
These circuits can be viewed as circuits of a new regular matroid $M_1\setminus S$ obtained by deleting the elements in $S$ from $M_1$. 
Since we know that the minimum size of a circuit in $M_1\setminus S$ is $r$, we can apply the induction hypothesis
to get a bound of $(m_1-\abs{S})^{c\alpha}$ for the number of circuits $C_1$ of $M_1\setminus S$ of size less than $\alpha r$.   
Summing this with a corresponding bound for $M_2\setminus S$ gives us a bound less than $m_0^{c \alpha}$ for the number of circuits of $M$ in case \textbf{(i)}. 

\vspace{1mm}
\noindent\textbf{Case (ii): stronger induction hypothesis.} 
 The case when circuits $C_1$ and $C_2$ contain an element from $S$ turns out to be much harder.
For this case, we actually need to strengthen our induction hypothesis.
 Let us assume that for a regular matroid of ground set size $m <m_0$, if the minimum size of a circuit that avoids a given element $\widetilde{e}$ is $r$,
then the number of circuits containing $\widetilde{e}$ and of size less than $\alpha r$ is bounded by $m^{c\alpha}$. 
This statement will also be proved by induction, but we will come to its proof later. 

Since we know that any circuit in $M_1$ (or $M_2$) that avoids elements from $S$ has size at least $r$, 
 we can use the above stronger inductive hypothesis to get a bound of $m_1^{c \alpha}$
 on the number of circuits $C_1$ in $M_1$ containing a given element from $S$ and of size less than $\alpha r$.
Similarly, we get an analogous bound of $m_2^{c \alpha}$ for  circuits $C_2$ of $M_2$.
Since $C$ can be a symmetric difference of any $C_1$ and $C_2$,  
 the product of these two bounds, that is, $(m_1m_2)^{c\alpha}$
bounds the number of circuits $C$ of $M$ of size less than $\alpha r$.
Unfortunately, this product can be much larger than $m_0^{c\alpha}$. 
Note that this product bound on the number of circuits $C$ is not really tight since $C_1$ and $C_2$ both cannot have their sizes close to $\alpha r$ simultaneously. 
This is because $C = C_1 \triangle C_2$ and thus, $\abs{C} = \abs{C_1} + \abs{C_2}-1$.
Hence, 
a better approach is to consider different cases based on the sizes of $C_1$ and $C_2$.

\vspace{2mm}
\noindent\textbf{Number of circuits $C$ when one of its projections is small.}
We first consider the case when the size of $C_1$ is very small, i.e., close to zero. 
In this case, the size of $C_2$ will be close to $\alpha r$ and we have to take the bound of $m_2^{c \alpha}$ on the number of such circuits $C_2$. 
Now, if number of circuits $C_1$ with small size is $N$ then we get a bound of $N m_2^{c \alpha}$ on the number of circuits $C$ of $M$ of this case.
Note that $N m_2^{c\alpha}$ is  dominated by $m_0^{c\alpha}$ only when $N\leq 1$, as $m_2$ can be comparable to $m$.
While $N\leq 1$ does not always hold, we show something weaker which  is true.
 
{\emph{Uniqueness of $C_1$.}}
We can show that for any element $s$ in the set of common elements $S$,  there is at most one circuit $C_1$ of size less than $r/2$
that contains $s$ and no other element from $S$.
To see this, assume that there are two such circuits $C_1$ and $C'_1$.
It is known that the symmetric difference of two circuits of a matroid is a disjoint union of some circuits of the matroid.
Thus, $C_1 \triangle C'_1$ will be a disjoint union of circuits of $M_1$.
Since $C_1 \triangle C'_1$ does not contain any element from $S$, it is also a disjoint union of circuits of $M$.
This would lead us to a contradiction because the size of $C_1 \triangle C'_1$ is less than $r$ and $M$ does not
have circuits of size less than $r$. This proves the uniqueness of $C_1$.
Our problem is still not solved since the set $S$ can have three elements in case of a $3$-sum,
and thus, there can be three possibilities for $C_1$ (i.e., N=3).

{\emph{Assigning weights to the elements.}}
To get around this problem, we use a new idea of considering matroids elements with weights.
For each element $s$ in $S$,
consider the unique circuit $C_1$ of size at most $r/2$ that contains $s$. 
In the matroid $M_2$, we assign a weight of $\abs{C_1} -1$ to the element $s$.
The elements outside $S$ get weight $1$.
The weight of element $s \in S$ signifies that if a circuit $C_2$ of $M_2$ contains $s$ then it has to be summed up with the unique
circuit $C_1$ containing $s$, which adds a weight of $\abs{C_1}-1$.
Essentially,  the circuits of the weighted matroid $M_2$ that have weight $\gamma$  will have a one-to-one correspondence with circuits $C = C_1 \triangle C_2$ of $M$ that have size $\gamma$ and have $  \abs{C_1} < r/2 $.
Hence, we can assume there are no circuits in the weighted matroid $M_2$ of weight less than $r$.
Thus, we can apply the induction hypothesis on $M_2$, but we need to further strengthen the hypothesis to a weighted version.
By this new induction hypothesis, we will get a bound of $m_2^{c \alpha }$
on the number of circuits of $M_2$ with weight less that $\alpha r$.
As mentioned above, this will bound the number of circuits $C =C_1 \triangle C_2 $ of $M$ with size less than $\alpha r$
and $\abs{C_1} < r/2$.
Note that the bound $m_2^{c \alpha }$  is smaller than the desired bound $m_0^{c\alpha}$.

\vspace{2mm}
\noindent\textbf{Number of circuits $C$ when none of its projections is small.} 
It is relatively easier to handle the other case when $C_1$ has size at least $r/2$ (and less than $\alpha r$).
In this case, $C_2$ has size less than $(\alpha- \nfrac{1}{2} )r$.
The bounds we get by the induction hypothesis for the number of circuits $C_1$ and $C_2$ 
are $m_1^{c \alpha}$ and $m_2^{c(\alpha- \nfrac{1}{2} )}$ respectively.  
Their product $m_1^{c \alpha} m_2^{c(\alpha- \nfrac{1}{2} )}$ bounds the number of circuits $C$ in this case. 
However, this product is not bounded by $m_0^{c\alpha}$.

{\emph{Stronger version of Seymour's Theorem.}}
To get a better bound we need another key idea.
Instead of Seymour's Theorem, we work with a stronger variant given by Truemper~\cite{Tru98}.
It states that any regular matroid can be written as a $k$-sum of two smaller regular matroids $M_1$ and $M_2$ for $k=1,2$ or $3$ 
such that one of them, say $M_1$, is a graphic, cographic or $R_{10}$ matroid.
The advantage of this stronger statement is that we can take a relatively smaller bound on the number of circuits of $M_1$, 
which gives us more room for the inductive argument. 
Formally, we know from above that when $M_1$  is a graphic or cographic matroid, the number of its circuits of size less than $\alpha r$ is  at most $m_1^{4 \alpha}$.
One can choose the constant $c$ in our induction hypothesis to be sufficiently large so that the product  
$m_1^{4\alpha} m_2^{c(\alpha- \nfrac{1}{2} )}$ is bounded by $m_0^{c\alpha}$.

\subsection*{A stronger induction hypothesis}
To summarize, we work with an inductive hypothesis as follows:  
If a regular matroid (with weights) has no circuits of weight less than $r$ that avoid a given set $R$ of elements
then the number of circuits of weight less than $\alpha r$ that contain the set $R$ is bounded by $m^{c\alpha}$.
As the base case, \cref{lem:graphic-set} shows this statement for the graphic and cographic case.

When we rerun the whole inductive argument with weights and with a fixed set $R$, we run into another issue.
It turns out that in the case when the size of $C_1$ is very small, our arguments above do not go through if $C_1$
has some elements from $R$.
To avoid such a situation  we use yet another  
strengthened version of Seymour's Theorem. 
It says that any regular matroid with a given element $\widetilde{e}$ can be written as a $k$-sum of two smaller regular matroids $M_1$ and $M_2$,
such that $M_1$ is a graphic, cographic or $R_{10}$ matroid and $M_2$ is a regular matroid containing $\widetilde{e}$ (\cref{thm:decomp}).
 When our $R$ is a single element set, say $\{\widetilde{e}\}$, we use this theorem to ensure that $M_1$, and thus $C_1$,
 has no elements from $R$.
 This rectifies the problem when $R$ has size $1$. 
However, as we go deeper inside the induction, the set $R$ can grow in size. 
Essentially, whenever $\alpha$ decreases by $ \nfrac{1}{2} $ in the induction, the size of $R$ grows by $1$.
Thus, we take $\alpha$ to be $ \nfrac{3}{2} $, which means that to reach $\alpha =1$ we need only one step of decrement,
and thus, the size of $R$ at most becomes $1$.
This is the reason our main theorem only deals with circuits of size less than $ \nfrac{3}{2} $ times the smallest size. 

In order to generalize this result for an arbitrary constant $\alpha$, a different method is required. This will be the subject of a follow-up work. 

\paragraph{\bf Organization of the rest of the paper.}

The remainder of the paper is dedicated to the formal proof of \cref{thm:regular}.
We first give some matroid preliminaries and Seymour's  decomposition theorem for regular matroids 
in Section~\ref{sec:matroids}.
Finally, in Section~\ref{sec:ShortCircuits}, we prove \cref{thm:regular}.

\section{Matroids}\label{sec:matroids}

In Section~\ref{sec:matroidprelim}, 
we recall some basic definitions and well-known facts about matroids
(see, for example, \cite{Oxl06,Sch03B}).
In Section~\ref{sec:seymour}, we describe Seymour's decomposition theorem for regular matroids.

\subsection{Matroids preliminaries}\label{sec:matroidprelim}

We start with some basic definitions.
\begin{definition}[Matroid]
\label{def:matroid}
A pair $M=(E,\I)$ is a \emph{matroid} if~$E$ is a finite set 
and~$\I$ is a nonempty collection of subsets of~$E$ satisfying

\begin{enumerate}
\item if $I \in \I$ and $J \subseteq I$, then $J \in \I$,
\item if $I,J \in \I$ and  $|I| < |J|$, then $I \cup \{z\} \in \I$, for some $z \in J \setminus I$.
\end{enumerate}
A subset $I$ of~$E$ is said to be {\em independent}, if~$I$ belongs to~$\cI$ and {\em dependent} otherwise.
An inclusionwise maximal independent subset of~$E$ is a {\em base} of~$M$.
An inclusionwise minimal dependent set is a \emph{circuit} of~$M$.
\end{definition}
\noindent 
We define some special classes of matroids.

\begin{definition}[Linear, binary, and regular matroid]
A matroid $M=(E,\I)$ with $m = |E|$ is \emph{linear} or \emph{representable} over some field~$\F$,
if there is a matrix $A \in \mathbb{F}^{n \times m}$, for some~$n$,
such that the collection of subsets of  the columns of $A$ that are linearly independent over $\F$ is identical to $\I$.

A matroid $M$ is \emph{binary}, if $M$ is representable over $\GF(2)$.
A matroid $M$ is \emph{regular}, if $M$ is representable over every field.
\end{definition}

It is well known that regular matroids can be characterized in terms of TU matrices.

\begin{theorem}[See~\cite{Oxl06,Sch03B}]
\label{thm:reg}
A matroid $M$ is regular if, and only if, $M$ can be represented by a TU matrix over~$\R$.
\end{theorem}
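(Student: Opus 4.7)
The plan is to prove the two directions separately. For the easy direction $(\Leftarrow)$, I would argue that a TU representation works uniformly over every field. Concretely, suppose $A \in \Z^{n \times m}$ is TU and represents $M$ over $\R$. Fix any field $\F$ and view $A$ as a matrix over $\F$ via the natural ring map $\Z \to \F$. Then for any subset $S$ of columns, the fact that $A$ is TU means that every square submatrix supported on $S$ has determinant in $\{-1,0,1\}$; moreover, these determinants compute via the Leibniz formula whether we work over $\R$ or over $\F$, since the only integers involved are $0, \pm 1$. Since $1 \neq 0$ in any field, a square submatrix has nonzero determinant over $\F$ iff it has nonzero determinant over $\R$. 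Thus $\rank_\F(A|_S) = \rank_\R(A|_S)$ for every $S$, which means $A$ represents the same matroid over $\F$ as over $\R$, establishing that $M$ is regular.

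For the harder direction $(\Rightarrow)$, the idea is to upgrade a binary representation into a signed $\{-1,0,1\}$ representation that is TU. Since $M$ is regular, it is in particular $\GF(2)$-representable; pick a basis of $M$ and write a standard representation $[I_n \mid B_2]$ over $\GF(2)$, where $B_2$ is a $0/1$ matrix. My goal is to produce a $\{-1,0,1\}$-matrix $B$ with $B \equiv B_2 \pmod 2$ such that $[I_n \mid B]$ is totally unimodular. The approach is to invoke Camion's signing theorem: a $0/1$ matrix can be signed to a TU matrix iff every submatrix with an even number of ones in each row and column has the number of ones divisible by $4$. One then verifies this condition using the hypothesis that $M$ is representable over $\GF(3)$ (which, combined with $\GF(2)$-representability, is equivalent to regularity by Tutte's theorem). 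Once the signing exists, the resulting $B$ gives a TU representation $[I_n \mid B]$, and a final check -- again via the determinant-preservation argument of the easy direction applied between $\GF(2)$ and $\R$ -- shows that $[I_n \mid B]$ indeed represents $M$ over $\R$.

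The main obstacle is the signing step: finding the $\pm 1$-signing of $B_2$ that makes the whole matrix TU, and proving it can always be done when $M$ is regular, is nontrivial. It rests on Tutte's excluded-minor characterization of regular matroids (no $U_{2,4}$, $F_7$, or $F_7^*$ minor) together with Camion's lemma on unique signability. Since this paper only needs the equivalence as a black box and cites the standard references \cite{Oxl06, Sch03B}, I would not reproduce the Tutte--Camion signing argument in detail but rather state it and refer to those sources for the combinatorial core of the $(\Rightarrow)$ direction.
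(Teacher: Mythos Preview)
The paper does not prove \cref{thm:reg} at all: it is stated as a classical characterization with references to \cite{Oxl06,Sch03B} and used as a black box. Your proposal correctly identifies this at the end, and the sketch you give for both directions is the standard textbook argument (the $(\Leftarrow)$ direction via determinant preservation under $\Z \to \F$, and the $(\Rightarrow)$ direction via Tutte's excluded-minor theorem together with Camion's signing lemma), so there is nothing to compare against and your outline is appropriate.

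One small remark on your $(\Rightarrow)$ sketch: you write that $\GF(2)$- plus $\GF(3)$-representability is equivalent to regularity ``by Tutte's theorem'' and then use this to justify the signing step. For the direction at hand you only need the trivial implication (regular $\Rightarrow$ representable over $\GF(2)$ and $\GF(3)$), which follows immediately from the paper's definition of regular as ``representable over every field''; invoking Tutte's equivalence there is unnecessary and risks the appearance of circularity. The genuinely nontrivial ingredient is, as you say, the signability of the binary representation to a TU matrix, and deferring that to \cite{Oxl06,Sch03B} matches exactly what the paper does.
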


Two special classes of regular matroids are graphic matroids
and their duals, cographic matroids.

\begin{definition}[Graphic and cographic matroid]
A matroid $M=(E,\I)$ is said to be a \emph{graphic}, if there is an undirected graph $G=(V,E)$ whose edges correspond to the 
ground set~$E$ of $M$, such that $I \in \I$ if and only if~$I$ forms a forest in~$G$. 
By~$M(G)$ we denote the graphic matroid corresponding to~$G$.

The \emph{dual of}~$M$ is the matroid  $M^*=(E,\I^*)$ over the same ground set
such that a set $I \subseteq E$ is independent in~$M^*$
if and only if $E\setminus I$ contains a base set of~$M$.
A \emph{cographic matroid} is the dual of a graphic matroid.
\end{definition}

\noindent 
For $G=(V,E)$, we can represent~$M(G)$ by the vertex-edge incidence matrix $A_G \in \{0,1\}^{V \times E}$ (over $GF(2)$), 
$$A_G(v,e) = \begin{cases} 
 		1 & \text{if } e \text{ is incident on } v, \\
 		0 & \text{otherwise. } 
		\end{cases} $$
		
\begin{definition}[Graph cut and cut-set]
For a graph~$G=(V,E)$, a \emph{cut} is a partition $(V_1,V_2)$ of $V$ into two disjoint subsets. 
Any cut $(V_1,V_2)$ uniquely determines a \emph{cut-set}, the set of edges that have one endpoint in $V_1$ and the other in~$V_2$. 
The \emph{size of a cut} is the number of edges in the corresponding cut-set.
A \emph{minimum cut} is one of minimum size.
\end{definition}

\begin{fact}
\label{fac:cographic-circuits}
Let $G=(V,E)$ be  a graph.
\begin{enumerate}
\item
The circuits of the graphic matroid~$M(G)$ are exactly the simple cycles of~$G$.
\item
The circuits of the cographic matroid~$M^*(G)$ are exactly the inclusionwise minimal cut-sets of~$G$.
\end{enumerate}
\end{fact}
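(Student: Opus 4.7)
The plan is to derive both statements directly from the definitions of the graphic matroid, its dual, and the notion of a circuit as a minimal dependent set, without appealing to any deeper matroid machinery.

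For part~(1), I will use that a set $I \subseteq E$ is independent in $M(G)$ precisely when $(V, I)$ is a forest, so $I$ is dependent iff $I$ contains a cycle of $G$. First I would check that any simple cycle $C$ is a circuit: $C$ is itself a cycle, hence dependent, while deleting any single edge of $C$ leaves a simple path, which is a forest, so every proper subset of $C$ is independent. Conversely, if $D$ is a circuit, then $D$ is dependent and therefore contains some simple cycle $C \subseteq D$; since $C$ is already dependent by the previous step, minimality of $D$ forces $D = C$, so $D$ is a simple cycle.

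For part~(2), I would unwind the definition of the dual: $I \subseteq E$ is independent in $M^*(G)$ iff $E \setminus I$ contains a base of $M(G)$, i.e., a spanning forest of $G$. Taking the contrapositive, $I$ is dependent in $M^*(G)$ iff $(V, E \setminus I)$ has strictly more connected components than $(V, E)$, which is exactly the condition that $I$ contains a cut-set of $G$. Hence circuits of $M^*(G)$ are inclusionwise minimal edge sets that contain a cut-set, and any such minimal set must itself be a cut-set, giving an inclusionwise minimal cut-set. In the other direction, if $S$ is an inclusionwise minimal cut-set, it is dependent in $M^*(G)$ by the above equivalence, and no proper subset of $S$ can be dependent, since such a subset would contain a strictly smaller cut-set, contradicting the minimality of~$S$.

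The only subtlety I anticipate is handling the case when $G$ is disconnected: a base of $M(G)$ is then a spanning forest consisting of one spanning tree per component, and ``increasing the number of connected components'' has to be interpreted component-wise. Once this bookkeeping is fixed, the argument above carries over uniformly, so this is more of a nuisance than a genuine obstacle.
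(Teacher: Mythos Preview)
Your argument is correct. The paper does not actually prove this statement: it is recorded as a \emph{Fact} in the preliminaries section, with an implicit pointer to standard references such as Oxley and Schrijver, so there is no ``paper's own proof'' to compare against. What you have written is the standard elementary derivation from the definitions, and it is exactly the kind of verification one would expect a reader to supply.

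One small remark on part~(2): when you pass from ``$I$ is dependent in $M^*(G)$'' to ``$I$ contains a cut-set,'' make sure you mean a \emph{non-empty} cut-set (in a disconnected graph the bipartition into two unions of components has empty cut-set, which is not dependent). Your back-and-forth between ``removing $I$ increases the number of components'' and ``$I$ contains a non-empty cut-set'' is sound once this is made explicit, and your anticipated bookkeeping for disconnected $G$ via spanning forests (one spanning tree per component) handles it cleanly. With that caveat, both directions of both parts go through as you describe.
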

\noindent 
The symmetric difference of two cycles in a graph is a disjoint union of cycles. 
The analogous statement is true for binary matroids. 
\begin{fact}
\label{fac:binary}
Let $M$ be binary. 
If~$C_1$ and~$C_2$ are circuits of~$M$, 
then the symmetric difference $C_1 \triangle C_2$ is a disjoint union of circuits.
\end{fact}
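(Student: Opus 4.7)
The plan is to use the binary representation of $M$. Since $M$ is binary, fix a matrix $A$ over $\GF(2)$ representing $M$, with columns indexed by $E$. The key observation is that for any $D \subseteq E$, the set $D$ is a circuit of $M$ if and only if the indicator vector $\chi_D \in \GF(2)^E$ is a nonzero element of $\ker A$ with inclusionwise minimal support among nonzero vectors in $\ker A$. This holds because over $\GF(2)$ the only nonzero scalar is $1$, so a nonzero kernel vector is uniquely determined by its support, and minimality of support corresponds exactly to minimality of the dependent column set.

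With this equivalence in hand, I would first note that $\chi_{C_1} + \chi_{C_2} = \chi_{C_1 \triangle C_2}$ in $\GF(2)^E$, and therefore $\chi_{C_1 \triangle C_2} \in \ker A$ since $\chi_{C_1}, \chi_{C_2} \in \ker A$. The heart of the proof is then the following claim, proved by strong induction on $|D|$: \emph{if $D \subseteq E$ satisfies $\chi_D \in \ker A$, then $D$ is a disjoint union of circuits of $M$.} Applying this claim to $D = C_1 \triangle C_2$ gives the statement of \cref{fac:binary}.

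For the induction, the base case $D = \emptyset$ is the empty union. For the inductive step, $D \neq \emptyset$ together with $\chi_D \in \ker A$ means that the columns of $A$ indexed by $D$ are linearly dependent, so $D$ contains some circuit $C$. Since $C \subseteq D$, we have $\chi_D + \chi_C = \chi_{D \triangle C} = \chi_{D \setminus C}$, which again lies in $\ker A$ and has strictly smaller support than $\chi_D$. By the induction hypothesis $D \setminus C$ is a disjoint union of circuits, and adjoining $C$ produces a disjoint union of circuits equal to $D$.

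I do not anticipate a serious obstacle here: the entire argument hinges on the correspondence between matroid circuits and minimum-support nonzero kernel vectors over $\GF(2)$, which is immediate from definitions but genuinely relies on characteristic~$2$ (over other fields one would have to track signs when forming $\chi_{C_1} + \chi_{C_2}$, and in fact the conclusion of the fact can fail for general linear matroids). The only verification worth doing carefully is that pairwise disjointness of the circuits in the decomposition is preserved at each inductive step, which is immediate because the new circuit $C$ is disjoint from $D \setminus C$ by construction.
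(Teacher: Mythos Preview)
Your proof is correct and is essentially the standard argument for this well-known fact. The paper itself states \cref{fac:binary} without proof, treating it as folklore about binary matroids; your write-up supplies exactly the argument one would expect, via the $\GF(2)$ representation and the identification of circuits with support-minimal nonzero kernel vectors.
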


\noindent
To prove \cref{thm:regular},
we have to bound the number of short circuits in regular matroids.
In \cref{lem:graphic-set},
we start by providing such a bound for graphic and cographic matroids.
The lemma is a variant of the following theorem
that bounds the number of near-shortest cycles~\cite{Sub95}
and the number of near-minimum cuts~\cite{Kar93} in a graph.

\begin{theorem}
\label{thm:graphic-cographic}
Let $G=(V,E)$ be a graph with $m \geq 1$ edges and $\alpha \geq 2$.
\begin{enumerate}
\item If $G$ has no cycles of length at most~$r$, then the number of cycles in $G$ of length at most~$\alpha r/2$ is bounded by $(2m)^{\alpha}$~{\rm\cite{Sub95}}.
\item If $G$ has no cuts of size at most~$r$, then the number of cuts in $G$ of size at most~$\alpha r/2$ is bounded by~$m^{\alpha}$~{\rm\cite{Kar93}}.
\end{enumerate}
\end{theorem}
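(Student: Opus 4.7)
The plan is to prove both parts via randomized contraction/sampling arguments in the style of Karger.

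For part (2), I would run Karger's random edge contraction algorithm on $G$: repeatedly pick a uniformly random edge of the current multigraph and contract its endpoints, stopping when exactly $t_0 := \lceil \alpha \rceil$ supernodes remain. Since every cut of $G$ has size at least $r+1$, at any intermediate stage with $t$ supernodes the number of edges is at least $t(r+1)/2$. For a fixed target cut $C$ of size $k \le \alpha r/2$, the probability that a random edge at this stage lies in $C$ is therefore at most $k/(t(r+1)/2) \le \alpha/t$, so
\[
\Pr[C \text{ survives all contractions}] \;\ge\; \prod_{t=t_0+1}^{n}\Bigl(1-\frac{\alpha}{t}\Bigr) \;=\; \Theta\!\left(\binom{n}{\alpha}^{-1}\right).
\]
The final multigraph on $t_0$ supernodes has at most $2^{t_0-1}$ distinct cuts, so by linearity of expectation the number of cuts of size at most $\alpha r/2$ in $G$ is at most $2^{t_0-1}\binom{n}{\alpha}$, which is $\le m^\alpha$ after using $n \le 2m$ and choosing the combinatorial constants appropriately.

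For part (1), I would adapt Subramanian's cycle-counting argument, proceeding by induction on the number of edges. If $G$ has girth greater than $r$, pick any shortest cycle $C_0$ (of length at least $r+1$). For any near-shortest cycle $C$ of length at most $\alpha r/2$, either $C$ avoids some edge $e$ of $C_0$ -- in which case $C$ is a near-shortest cycle of $G \setminus e$ and is counted by the inductive hypothesis -- or $C$ uses every edge of $C_0$. In the latter case, $C \triangle C_0$ is a disjoint union of cycles by \cref{fac:binary} applied to the graphic matroid, each of length at least $r+1$; the total length of the summands is bounded by $|C| + |C_0| \le \alpha r/2 + |C_0|$, which is a tight enough constraint to enumerate the possibilities for $C$ and feed the count back into the induction.

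The main obstacle, as I see it, is in part (1): calibrating the induction so that the bound $(2m)^\alpha$ emerges cleanly and so that each $C$ is counted at most once across the recursive calls. Unlike the cut case, where random contraction yields a single global argument, cycle overlaps are combinatorially more intricate and require careful bookkeeping in the recursion, especially to ensure the base case (where the girth bound becomes tight) is handled without loss. Once both parts are established with the stated constants, they feed directly into \cref{lem:graphic-set} and, through the weighted refinement, into the base case of the inductive proof of \cref{thm:regular}.
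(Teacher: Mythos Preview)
Your treatment of part~(2) is essentially the argument the paper uses (following Karger): contract random edges until $\lceil\alpha\rceil$ supernodes remain, lower-bound the survival probability of a fixed small cut using the minimum-degree bound, and finish by counting the cuts of the final graph. This part is fine.

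Part~(1), however, has a genuine gap, and the paper's proof proceeds quite differently. Your dichotomy ``$C$ misses some edge of $C_0$'' versus ``$C$ uses every edge of $C_0$'' is not useful: since $C$ and $C_0$ are both simple cycles (circuits of the graphic matroid), $C_0\subseteq C$ forces $C=C_0$, so the second case contributes a single cycle and your analysis of $C\triangle C_0$ as a nontrivial union of cycles never applies. The real work is therefore entirely in the first case, and there your induction does not close. If for each near-shortest $C\neq C_0$ you pick some $e_C\in C_0\setminus C$ and charge $C$ to $G\setminus e_C$, the inductive hypothesis gives at most $(2(m-1))^{\alpha}$ cycles for each fixed $e$, but summing over $e\in C_0$ introduces a factor of $|C_0|\ge r+1$, and $1+|C_0|\cdot(2(m-1))^{\alpha}$ is not bounded by $(2m)^{\alpha}$ in general. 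You anticipated this bookkeeping problem, but it is not a matter of calibration: the edge-deletion induction as stated genuinely overcounts, and there is no evident way to amortize the factor $|C_0|$.

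The paper instead uses a direct encoding argument (essentially Subramanian's). Given a cycle $C=(e_1,\dots,e_q)$ of length less than $\alpha r/2$, one greedily selects indices $1=i_1<i_2<\cdots<i_{\alpha}$ so that each of the $\alpha$ path segments between consecutive selected edges has weight less than $r/2$. The cycle is then encoded by the ordered $\alpha$-tuple of \emph{oriented} edges $(e_{i_1},\dots,e_{i_{\alpha}})$. If two distinct cycles $C,C'$ received the same tuple, some pair of corresponding segments $p_j\neq p_j'$ would share endpoints, and $p_j\cup p_j'$ would contain a cycle of weight less than $r$, contradicting the girth assumption. Injectivity of the encoding immediately yields the bound $(2m)^{\alpha}$. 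This avoids induction entirely and is what you should replace your part~(1) sketch with.
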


\noindent 
We define two operations on matroids.

\begin{definition}[Deletion, contraction, minor]
Let $M=(E,\I)$ be a matroid and $e \in E$. 
The \emph{matroid obtained from $M$ by deleting $e$} is denoted by $M\setminus e $.
Its independent sets are given by the collection $\set{I \in \I}{ e \not\in I}$.

The \emph{matroid  obtained by contracting $e$} is denoted by~$M/e$.
Its independent sets are given by the collection $\set{I \subseteq E\setminus \{e\} }{ I \cup \{e\} \in \I}$.

A matroid obtained after a series of deletion and contraction operations on~$M$ is called a \emph{minor of~$M$}.
\end{definition}

\begin{fact} 
\label{fac:closed}
Let $M=(E,\I)$ be a matroid and $e \in E$. 
\begin{enumerate}
\item
The circuits of $M \setminus e$ are those circuits of $M$ that do not contain $e$. 
\item
The classes of regular matroids, graphic matroids, and cographic matroids are 
minor closed.
\end{enumerate}
\end{fact}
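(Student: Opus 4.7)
The plan is to handle each assertion separately by unpacking the definitions of deletion and contraction and appealing to known representation-theoretic stability properties of each class.

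For part (1), I would argue directly from the definition of $M\setminus e$: its independent sets are $\set{I \in \I}{e \notin I}$, so a subset $C \subseteq E\setminus\{e\}$ is dependent in $M\setminus e$ if and only if it is dependent in $M$. Taking inclusionwise minimal elements on both sides, the circuits of $M\setminus e$ are exactly those circuits of $M$ that avoid $e$.

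For part (2), I would treat the three classes in turn. For the regular class, represent $M$ by a TU matrix $A$ whose columns are indexed by $E$. Then $M\setminus e$ is represented by $A$ with the column for $e$ removed, and any submatrix obtained by column deletion from a TU matrix is TU, so $M\setminus e$ is regular. For contraction: if the column $a_e$ is zero then $e$ is a loop and $M/e = M\setminus e$ is already handled; otherwise, I would pick a $\pm 1$ entry of $a_e$ and pivot on it, scaling and adding rows so the pivot becomes the only nonzero entry of $a_e$, then delete the pivot row and the column of $e$. The resulting matrix represents $M/e$, and the standard Schur-complement identity shows that every square submatrix of the pivoted matrix has determinant equal, up to sign, to a minor of the original matrix, and hence lies in $\{0,\pm 1\}$, so $M/e$ is regular. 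For the graphic class I would use the canonical identifications $M(G)\setminus e = M(G\setminus e)$ and $M(G)/e = M(G/e)$, where $G/e$ is the edge-contraction (identifying the endpoints of $e$ and removing any resulting loops). For the cographic class I would invoke matroid duality: $(M\setminus e)^* = M^*/e$ and $(M/e)^* = M^*\setminus e$, so minor-closedness of the graphic class transfers to its dual.

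The only step that is not purely formal is the preservation of total unimodularity under the pivot in the contraction argument. I would either cite this as the standard fact behind the combinatorial TU-test (see \cite{Sch86}) or give the one-line Schur-complement calculation above; everything else reduces to definitional unraveling or to the well-known behaviour of graphs under edge deletion and contraction.
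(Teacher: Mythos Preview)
Your argument is correct and is the standard textbook route. The paper itself gives no proof of this statement: it is recorded as a \emph{Fact} and tacitly deferred to the references~\cite{Oxl06,Sch03B}, so there is nothing substantive to compare against. One small slip: in describing graph contraction you add ``removing any resulting loops,'' but for the identity $M(G)/e = M(G/e)$ to hold one must \emph{retain} the loops that arise when edges parallel to~$e$ collapse (they are the size-one circuits of $M(G)/e$); this does not, however, affect your conclusion that the graphic class is minor-closed.
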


\noindent
For a characterization of regular matroids, we will need a specific matroid $R_{10}$, first introduced by~\cite{Bix77}.
It is a matroid, with 10 elements in the ground set, represented over $GF(2)$ by the following matrix. 
\[
\begin{pmatrix}
 1 & 1 & 0&0&1& 1&0&0&0&0 \\
1&1&1&0&0 &0&1&0&0&0 \\
0&1&1&1&0 &0&0&1&0&0 \\
0&0&1&1&1& 0&0&0&1&0 \\
1&0&0&1&1& 0&0&0&0&1 
\end{pmatrix}
\]

\begin{fact}[\cite{Sey80}]
\label{fac:R10}
Any matroid obtained by deleting some elements from $R_{10}$ is a graphic matroid.
\end{fact}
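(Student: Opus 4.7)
The plan is to reduce to the case of deleting a single element and then exhibit an explicit graphic realization. Since the class of graphic matroids is closed under deletion (\cref{fac:closed}), for any nonempty subset $S$ of elements of $R_{10}$ and any $e \in S$ we have $R_{10}\setminus S = (R_{10}\setminus e)\setminus(S\setminus\{e\})$, so it suffices to prove that $R_{10}\setminus e$ is graphic for every single element $e$.

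The next step is a symmetry reduction: I would argue that the automorphism group of $R_{10}$ acts transitively on its ground set, so only one choice of $e$ needs to be checked. Transitivity has two sources, both visible in the representing matrix of \cref{fac:R10}. First, the simultaneous cyclic shift of rows $1,\dots,5$ by $+1 \bmod 5$, combined with a cyclic shift of the first five columns and of the last five columns, preserves the matrix; this gives a $\Z/5\Z$-action that is transitive on each five-element block. Second, the left half $B$ of the matrix satisfies $B = B^{\transpose}$, so $R_{10}$ is self-dual, and the swap of the two blocks (suitably composed with the cyclic action) is an automorphism linking the two orbits, yielding a transitive action on all $10$ elements.

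Having reduced to a single deletion, the concrete step is to exhibit a graph $G$ with $M(G)\cong R_{10}\setminus e$. The parameters pin down the candidate: the matroid must have $9$ elements and rank $5$, and $G=K_{3,3}$ fits, since $K_{3,3}$ has $9$ edges and is connected on $6$ vertices, so $M(K_{3,3})$ has rank $|V|-1 = 5$. The verification is matrix-level: I would label the edges of $K_{3,3}$ by the nine remaining columns of $R_{10}$, write down the vertex--edge incidence matrix $A_{K_{3,3}}\in\{0,1\}^{6\times 9}$ viewed over $\GF(2)$, and check that under this labeling its column dependencies coincide with those of the representing matrix of $R_{10}$ with column $e$ removed.

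The hardest part will be carrying out this final verification cleanly, since $R_{10}$ has a slightly richer circuit structure than is immediately apparent. A streamlined approach enumerates the length-$4$ circuits on both sides---the nine $4$-cycles of $K_{3,3}$ formed by picking two vertices from each color class, versus the $4$-element column sets of $R_{10}$ summing to zero over $\GF(2)$, which can be read off directly from the cyclic structure as families of the form $\{c_i, c_{i+1}, e_j, e_k\}$ whenever $c_i + c_{i+1} = e_j + e_k$, together with a handful of ``diagonal'' cases---and exhibits an explicit bijection matching them. A dimension count on the $\GF(2)$-cycle space of each matroid (both equal to $9-5 = 4$) would then confirm that this matching extends uniquely and pins down the matroid, completing the argument.
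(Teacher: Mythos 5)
The paper does not actually prove \cref{fac:R10}; it is quoted directly from Seymour~\cite{Sey80} as a known property of $R_{10}$, so there is no in-paper proof to compare your argument against. Your plan is the standard textbook route: use minor-closure of graphic matroids (\cref{fac:closed}) to reduce to a single deletion, use symmetry to reduce to one element, and then identify $R_{10}\setminus e$ with $M(K_{3,3})$. All three target claims are true, so the approach is sound.

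Two places deserve more care. First, the symmetry step: observing that $B = B^{\transpose}$ shows $R_{10}$ is \emph{self-dual}, i.e., there is an isomorphism $R_{10}\to R_{10}^{*}$, but that is not the same as producing an \emph{automorphism} of $R_{10}$ that maps a column of the $B$-block to a column of the $I$-block. To actually build such a block-swapping automorphism you need a row operation $P$ with $P[B\,|\,I]$ a column permutation of $[B\,|\,I]$; taking $P = BQ$ for a suitable permutation matrix $Q$ works because $B^{3}=I$ over $\GF(2)$ (so $B^{-1}=B^{2}$) and $B^{2}$ is obtained from $B$ by conjugating by the index map $j\mapsto 2j \pmod 5$. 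This extra computation is what lets you compose ``swap the blocks'' with a cyclic relabelling to get a genuine automorphism. Alternatively, you can avoid transitivity entirely: the visible $\Z/5\Z$ action already reduces you to checking just two deletions (one from each block), which is scarcely more work than checking one. Second, the identification $R_{10}\setminus e\cong M(K_{3,3})$ is stated but not carried out; the plan via matching the nine $4$-circuits on each side and noting they span the $4$-dimensional cycle space is fine, but you should make explicit that the bijection you produce sends a \emph{spanning} set of one cycle space into the other, so equality of dimensions forces the cycle spaces (and hence the binary matroids) to agree under that bijection.
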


\subsection{Seymour's Theorem and its variants}
\label{sec:seymour}

The main ingredient for the proof of \cref{thm:regular} is a theorem of 
Seymour~\cite[Theorem 14.3]{Sey80} that shows that every regular matroid can be constructed from piecing together 
three kinds of matroids -- graphic matroids, cographic matroids, and the matroid $R_{10}$.
This piecing together is done via matroid operations called $1$-sum, $2$-sum and $3$-sum. 
These operations are defined for binary matroids.

\begin{definition}[Sum of two matroids \cite{Sey80}, see also \cite{Oxl06}]
\label{def:sum}
Let $M_1 = (E_1, \I_1)$ and $M_2 = (E_2, \I_2)$ be two binary matroids, and let $S = E_1 \cap E_2$. 
The \emph{sum of $M_1$ and $M_2$} is a matroid denoted by $M_1 \triangle M_2$.
It is defined over the ground set $E_1 \triangle E_2$
such that the circuits of $M_1 \triangle M_2$ are minimal non-empty subsets of $E_1 \triangle E_2$ that 
are of the form $C_1 \triangle C_2$,
where  $C_i$ is a (possibly empty) disjoint union of circuits of $M_i$, for $i=1,2$. 
\end{definition}

\noindent 
From the characterization of the circuits of a matroid~\cite[Theorem 1.1.4]{Oxl06}, 
it can be verified that the sum $M_1 \triangle M_2$ is indeed a matroid. 

We are only interested in three special sums:

\begin{definition}[$1,2,3$-sums]
Let $M_1 = (E_1, \I_1)$ and $M_2 = (E_2, \I_2)$ be two binary matroids and $E_1 \cap E_2 = S$.
Let $m_1 = \abs{E_1}$, $m_2 = \abs{E_2}$, and $s = \abs{S}$.
Let furthermore $m_1, m_2  < |E_1 \triangle E_2| = m_1 + m_2 -2s$. 
The sum $M_1 \triangle M_2$ is called  a
\begin{itemize}
\item $1$-sum, if $s=0$,
\item $2$-sum, if $s=1 $ and $S$ is not a circuit of $M_1, M_2, M^*_1$ or $M^*_2$,
\item $3$-sum, if $s=3 $ and $S$ is a circuit of $M_1$ and $M_2$ that
does not contain a circuit of $M^*_1$ or~$M^*_2$.
\end{itemize}
\end{definition}
\noindent 
Note that the condition $m_1, m_2 < m_1 + m_2 -2s$ implies that 
\begin{equation}
m_1,m_2 \geq 2s+1 
\label{eq:2s+1}
\end{equation}

\noindent
From the definition of $M_1\triangle M_2$ the following fact follows easily.
\begin{fact}
\label{cla:disjointCircuits}
Let $C_i$ be a disjoint union of circuits of~$M_i$, for $i=1,2$.
If $C_1 \triangle C_2$ is a subset of $E_1 \triangle E_2$ then it is a disjoint union of circuits of $M_1 \triangle M_2$.
\end{fact}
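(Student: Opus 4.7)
The plan is to induct on $|X|$ where $X := C_1 \triangle C_2$, exploiting that the collection of ``valid sum expressions'' forms a subspace under symmetric difference, just as the cycle space of a binary matroid does.

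Concretely, I would introduce the family
\[
\mathcal{F} := \bigl\{\, D_1 \triangle D_2 \subseteq E_1 \triangle E_2 \;:\; D_i \text{ is a disjoint union of circuits of } M_i \,\bigr\},
\]
so that, by definition, the circuits of $M_1 \triangle M_2$ are exactly the minimal nonempty elements of $\mathcal{F}$, and by hypothesis $X \in \mathcal{F}$. The first substantive step is to verify that $\mathcal{F}$ is closed under symmetric difference. If $D_1 \triangle D_2 \in \mathcal{F}$ avoids $S := E_1 \cap E_2$, then $D_1 \cap S = D_2 \cap S$; combining two such pairs $(D_1, D_2)$ and $(D_1', D_2')$, the contributions on $S$ cancel and $(D_1 \triangle D_1') \triangle (D_2 \triangle D_2')$ still lies in $E_1 \triangle E_2$. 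Moreover, since $M_1$ and $M_2$ are binary, iterated application of \cref{fac:binary} shows that each $D_i \triangle D_i'$ is again a disjoint union of circuits of $M_i$, so the combined set belongs to $\mathcal{F}$.

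With closure in hand, the induction is straightforward. The base case $X = \emptyset$ is the empty disjoint union. Otherwise, I would choose $K \subseteq X$ that is nonempty, lies in $\mathcal{F}$, and is minimal (by inclusion) among subsets of $X$ with these properties. Such a $K$ is automatically a circuit of $M_1 \triangle M_2$: any strictly smaller nonempty $K' \in \mathcal{F}$ would, by $K' \subsetneq K \subseteq X$, also be a subset of $X$ in $\mathcal{F}$, contradicting the minimal choice of $K$. Now set $X' := X \triangle K = X \setminus K$. Closure gives $X' \in \mathcal{F}$, and $|X'| < |X|$, so induction expresses $X'$ as a disjoint union of circuits of $M_1 \triangle M_2$. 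Since $K \subseteq X$ is disjoint from $X' = X \setminus K$, appending $K$ to this decomposition exhibits $X$ itself as a disjoint union of circuits of $M_1 \triangle M_2$, which is the desired conclusion.

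The only delicate point in the argument is the closure of $\mathcal{F}$ under symmetric difference; in particular, the observation that requiring the sum to lie in $E_1 \triangle E_2$ forces the two sides to coincide on $S$ is what makes $\mathcal{F}$ genuinely a $GF(2)$-subspace. Everything else, including extracting a circuit from $X$ and running the induction, is formal once this closure is established.
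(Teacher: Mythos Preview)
Your argument is correct. The paper does not actually prove this fact; it merely asserts that it ``follows easily from the definition of $M_1 \triangle M_2$''. Your proposal supplies precisely the details one would write down: the key observation that $\mathcal{F}$ is closed under symmetric difference (using that each $M_i$ is binary, so disjoint unions of circuits form a $GF(2)$-space), followed by peeling off a minimal member of $\mathcal{F}$ inside $X$ and inducting, is the natural way to make the paper's one-line claim rigorous.
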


\noindent
In particular, it follows that for $i=1,2$, any circuit~$C_i$ of~$M_i$ with $C_i \subseteq E_i \setminus S$
is a circuit of~$M_1 \triangle M_2$.
Further, for $1$-sums, circuits are easy to characterize.
\begin{fact}[Circuits in a $1$-sum]
\label{fac:1sum-circuits}
If $M$ is a $1$-sum of~$M_1$ and~$M_2$ then any circuit of~$M$ is either a circuit of~$M_1$
or a circuit of~$M_2$. 
\end{fact}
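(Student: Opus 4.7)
The plan is to prove this directly from the definition of a $1$-sum in \cref{def:sum}, using only the facts that $E_1$ and $E_2$ are disjoint (since $s=0$) and that circuits of $M_1 \triangle M_2$ are characterized as minimal non-empty sets of the form $C_1 \triangle C_2$ with each $C_i$ a (possibly empty) disjoint union of circuits of $M_i$.

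First I would unpack the $1$-sum: since $S = E_1 \cap E_2 = \emptyset$, for any subsets $C_1 \subseteq E_1$ and $C_2 \subseteq E_2$ we have the symmetric difference collapse $C_1 \triangle C_2 = C_1 \sqcup C_2$. So, letting $C$ be an arbitrary circuit of $M = M_1 \triangle M_2$, we can write $C = C_1 \sqcup C_2$, where $C_i$ is a disjoint union of circuits of $M_i$.

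Next I would argue that exactly one of $C_1, C_2$ is empty. Suppose for contradiction that both are non-empty. Since the $C_i$ are disjoint and both non-empty, $C_1 \subsetneq C$ and $C_1$ is itself non-empty, so $C_1$ contains at least one circuit $C'_1$ of $M_1$. Then $C'_1 = C'_1 \triangle \emptyset$ is a non-empty subset of $E_1 \triangle E_2$ of the required form (take the disjoint union of circuits of $M_2$ to be empty), and $C'_1 \subseteq C_1 \subsetneq C$. This contradicts the minimality of $C$ in the definition of a circuit of $M_1 \triangle M_2$. Hence one of $C_1, C_2$ is empty; by symmetry, say $C_2 = \emptyset$, so $C = C_1$ is a non-empty disjoint union of circuits of $M_1$.

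Finally, I would use minimality one more time to conclude that $C$ is a single circuit of $M_1$ (resp.\ $M_2$). If $C_1$ were a disjoint union of two or more circuits of $M_1$, any one of those circuits $C'_1$ would again be a non-empty subset of $E_1 \triangle E_2$ of the required form and a proper subset of $C$, contradicting minimality. So $C$ is exactly a circuit of $M_1$, which completes the proof. There is no serious obstacle here; the only point requiring care is remembering that in \cref{def:sum} the pieces $C_i$ are allowed to be disjoint unions of circuits and are allowed to be empty, and handling both ends of this flexibility via the minimality condition.
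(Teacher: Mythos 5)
Your proof is correct, and it is exactly the argument the paper has in mind: the paper states this as a fact that ``follows easily'' from the definition of the sum without writing out a proof, and the natural direct argument is precisely the one you give. You correctly note that disjointness of $E_1$ and $E_2$ turns symmetric differences into disjoint unions, and you invoke minimality twice (once to force one of the two pieces to be empty, once to force the remaining piece to be a single circuit rather than a nontrivial disjoint union), which is the right structure. The only micro-shortcut available is that in the first minimality step you need not extract a single circuit $C_1'$ from $C_1$: the set $C_1 = C_1 \triangle \emptyset$ is already non-empty, of the required form, and a proper subset of $C$, so it directly contradicts minimality. But your slightly more careful version is equally valid.
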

\noindent Thus, if one is interested in the number of circuits, one can assume that the given matroid is not a $1$-sum
of two smaller matroids.

\begin{definition}[Connected matroid]
\label{def:connected}
A matroid $M$ is \emph{connected} if it cannot be written as a $1$-sum of two smaller matroids.
\end{definition}

\noindent A characterization of circuits in a 2-sum or 3-sum is not as easy.
Seymour~\cite[Lemma 2.7]{Sey80} provides a unique representation
of the circuits for these cases.

\begin{lemma}[Circuits in a $2$- or $3$-sum, \cite{Sey80}]
\label{lem:3sum-circuits}
Let $\C_1$ and $\C_2$ be the sets of circuits of 
$M_1$ and $M_2$, respectively. 
Let $M$ be a $2$- or $3$-sum of $M_1$ and $M_2$.
For $S = E_1 \cap E_2$, we have $\abs{S} = 1$ or $\abs{S} = 3$, respectively.
Then for any circuit~$C$ of~$M$, one of the following holds: 
\begin{enumerate}
\item $C \in \C_1$ and $S \cap C = \emptyset$, or 
\item $C \in \C_2$ and $S \cap C = \emptyset$, or
\item there exist unique $e \in S$, $C_1 \in \C_1$ and $C_2 \in \C_2$
 such that
$$S \cap C_1 = S \cap C_2 = \{e\}  \mbox{ and } C = C_1 \triangle C_2.$$
\end{enumerate}
\end{lemma}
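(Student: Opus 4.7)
The plan is to unwrap \cref{def:sum}: a circuit $C$ of $M = M_1\triangle M_2$ is, by definition, a minimal nonempty subset of $E_1\triangle E_2$ expressible as $D_1\triangle D_2$, where each $D_i$ is a (possibly empty) disjoint union of circuits of $M_i$. I would fix such a representation $(D_1,D_2)$ and exploit the minimality of $C$. Since $C\subseteq E_1\triangle E_2$ is disjoint from $S = E_1\cap E_2$, the equality $D_1\cap S = D_2\cap S =: T$ must hold.

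The first step is a \emph{peel-off} argument: if some circuit $C^{\star}\subseteq D_1$ satisfies $C^{\star}\cap S = \emptyset$, then $C^{\star}\subseteq E_1\setminus S\subseteq E_1\triangle E_2$ and $C^{\star} = C^{\star}\triangle\emptyset$ is itself of the required form. Minimality of $C$ forces $C = C^{\star}$, producing case 1 (the symmetric argument applied inside $D_2$ gives case 2). Assuming we are not in cases 1 or 2, every circuit of $D_1$ and $D_2$ meets $S$; in particular, $T=\emptyset$ would force $D_1 = D_2 = \emptyset$ and $C=\emptyset$, a contradiction, so $T\neq\emptyset$.

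The main obstacle is to reduce $|T|$ to $1$ in the $3$-sum case (for a $2$-sum, $|T|\le|S|=1$ is automatic). Here the crucial fact is $S\in\C_1\cap\C_2$: I replace $(D_1,D_2)$ by $(D_1\triangle S,\,D_2\triangle S)$. By \cref{fac:binary}, each new piece still lies in the cycle space, hence remains a disjoint union of circuits of the respective matroid, while $D_1\triangle D_2 = C$ is unchanged. The new intersection $T' = T\triangle S$ has size $3-|T|\in\{0,1\}$. The subcase $|T'|=0$ loops back to the peel-off step and lands in case 1 or 2; the subcase $|T'|=1$ gives a representation of $C$ with $|T|=1$. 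Once $|T|=1$, say $T=\{e\}$, all circuits of $D_1$ must contain $e$, and no two disjoint circuits share $e$; hence $D_1$ is a single circuit $C_1\in\C_1$ with $C_1\cap S=\{e\}$, and symmetrically $D_2 = C_2\in\C_2$ with $C_2\cap S=\{e\}$, which is exactly case 3.

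Finally, for uniqueness in case 3: given $C = C_1\triangle C_2 = C_1'\triangle C_2'$ with both representations satisfying the stated property, the set $C_1\triangle C_1' = C_2\triangle C_2'$ is contained in $E_1\cap E_2 = S$ and, by \cref{fac:binary}, is a disjoint union of circuits of $M_1$ sitting inside $S$. In the $2$-sum case, $S$ contains no circuit of $M_1$, so this set must be $\emptyset$. In the $3$-sum case, the only circuit of $M_1$ inside $S$ is $S$ itself (by inclusion-minimality), so the set is $\emptyset$ or $S$; but $(C_1\cap S)\triangle(C_1'\cap S)$ has size $0$ or $2$, never $3$, ruling out $S$. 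Hence $C_1 = C_1'$, $C_2 = C_2'$, and the common element $e$ is uniquely determined.
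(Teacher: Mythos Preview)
The paper does not give its own proof of \cref{lem:3sum-circuits}; it simply cites Seymour~\cite[Lemma~2.7]{Sey80}. So there is nothing to compare against, and the question is only whether your argument stands on its own.

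It does. Your proof unwraps \cref{def:sum} directly and is correct. Two small points worth tightening in exposition:
\begin{itemize}
\item In the peel-off step you use $C^{\star}\subseteq C$, which is true but deserves one line: since $C^{\star}\subseteq D_1\cap(E_1\setminus S)$ and $D_2\subseteq E_2$, we have $C^{\star}\cap D_2=\emptyset$, hence $C^{\star}\subseteq D_1\triangle D_2=C$.
\item After the swap $(D_1,D_2)\mapsto(D_1\triangle S,\,D_2\triangle S)$ in the $3$-sum case, the claim ``all circuits of $D_1$ must contain $e$'' again relies on the peel-off argument applied to the \emph{new} representation. Under your standing assumption that $C$ is not in cases~1 or~2, this is automatic (the peel-off would otherwise force $C$ into case~1 or~2, a contradiction), but it would read more cleanly to state explicitly that the peel-off conclusion ``every constituent circuit meets $S$'' holds for \emph{any} representation of $C$, not just the initially fixed one. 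In particular, your sentence ``the subcase $|T'|=0$ \dots\ lands in case~1 or~2'' is, under the standing assumption, really saying that $|T'|=0$ (equivalently $|T|=3$) is impossible.
\end{itemize}
The uniqueness argument is clean and correct: $C_1\triangle C_1'=C_2\triangle C_2'\subseteq S$ is a disjoint union of circuits of~$M_1$ lying inside~$S$, and the parity observation $|(C_1\cap S)\triangle(C_1'\cap S)|\in\{0,2\}$ rules out $S$ in the $3$-sum case.
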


\noindent 
Seymour proved the following decomposition theorem for regular matroids.
\begin{theorem}[Seymour's Theorem, \cite{Sey80}]
\label{thm:Seymour}
Every regular matroid can be obtained by means of $1$-sums, $2$-sums and $3$-sums,
starting from matroids that are graphic, cographic or $R_{10}$.
\end{theorem}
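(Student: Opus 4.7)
The plan is to prove this by strong induction on $m = \abs{E}$, the size of the ground set. For the base case, when $m$ is small enough, every regular matroid on that many elements is trivially graphic (indeed, all regular matroids on up to $6$ elements are graphic). For the inductive step, given a regular matroid $M$ that is not itself graphic, cographic, or $R_{10}$, I would aim to exhibit a non-trivial $1$-sum, $2$-sum, or $3$-sum decomposition $M = M_1 \triangle M_2$ in which both $M_1$ and $M_2$ are regular matroids strictly smaller than $M$; applying the induction hypothesis to $M_1$ and $M_2$ then completes the argument. The fact that $M_1, M_2$ are regular is easy because the class of regular matroids is minor-closed and both $M_i$ can be realized, up to the added sum element, as minors of $M$.

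The easy side of the argument uses standard matroid connectivity. If $M$ is disconnected in the sense of Definition~\ref{def:connected}, then by definition $M$ is a $1$-sum of two smaller matroids. If $M$ is connected but admits a non-trivial ``$2$-separation'' --- a partition of $E$ into two parts of size at least two with the ranks of the two sides summing to $\rank(M)+1$ --- then Tutte's $2$-sum decomposition of binary matroids yields $M = M_1 \twosum M_2$ with $M_1, M_2$ smaller. These steps are purely structural and use only that $M$ is binary; regularity is not needed.

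The main obstacle is the case when $M$ is $3$-connected and is not graphic, cographic, or $R_{10}$. Here I would invoke Tutte's excluded-minor characterization of regular matroids ($M$ is regular iff it is binary and has no $F_7$ or $F_7^\ast$ minor), together with the excluded-minor characterizations of graphic and cographic matroids ($M^\ast(K_5)$ and $M^\ast(K_{3,3})$, and their duals). A $3$-connected regular matroid that is not graphic, cographic, or $R_{10}$ must therefore contain one of a small list of ``obstruction'' minors --- crucially, an $R_{12}$ minor. The deepest step is to show that the presence of such a minor forces a $3$-separation in $M$ whose separator $S$ is a $3$-element circuit on both sides and contains no cocircuit, so that the resulting decomposition satisfies the defining conditions of a $3$-sum. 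This is exactly the technical core of \cite{Sey80}; carrying it out rigorously requires Seymour's ``splitter'' theorem and a delicate analysis of how $R_{12}$-type minors embed inside $3$-connected regular matroids, which is the hardest and longest portion of Seymour's original proof and which I would not attempt to reconstruct from scratch here.
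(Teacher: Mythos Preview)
The paper does not contain a proof of this statement. \Cref{thm:Seymour} is quoted as Seymour's result with a citation to~\cite{Sey80} and is used as a black box; the paper's own contribution is the application of (a strengthening of) this decomposition, not its proof. So there is no ``paper's own proof'' to compare your proposal against.

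That said, your outline is a faithful high-level sketch of Seymour's original argument: reduce to the $3$-connected case via $1$- and $2$-sums, then in the $3$-connected case use the excluded-minor characterizations together with the splitter theorem to force an $R_{12}$ minor, and finally show that the $3$-separation of $R_{12}$ lifts to a $3$-separation of $M$ yielding a $3$-sum. You correctly identify this last step as the technical heart and correctly decline to reproduce it. One small inaccuracy: the obstruction list is not just $R_{12}$; Seymour first shows that a $3$-connected regular matroid that is neither graphic nor cographic has an $R_{10}$ or $R_{12}$ minor, and then uses the fact that $R_{10}$ is a splitter for the class of regular matroids (so any proper $3$-connected regular extension of $R_{10}$ would already be handled), leaving $R_{12}$ as the case that drives the $3$-sum. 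But since the paper treats the theorem as an imported tool, none of this affects anything downstream.
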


\noindent
However, to prove \cref{thm:regular}, we need a  
refined version of Seymour's Theorem that was proved  by Truemper~\cite{Tru98}.
Seymour's Theorem decomposes a regular matroid into a sum of two smaller regular matroids.
Truemper showed that one of the two smaller regular matroids can be chosen
to be graphic, cographic, or the $R_{10}$ matroid.
The theorem we write here slightly differs from the one by Truemper~\cite[Lemma 11.3.18]{Tru98}. 
A proof of \cref{thm:decomp} is presented in Appendix~\ref{sec:appendix-k-sums}.

\begin{theorem}[Truemper's decomposition for regular matroids, \cite{Tru98}]
\label{thm:decomp}
Let $M$ be a connected regular matroid, that is not graphic or cographic and is not isomorphic to $R_{10}$.
Let $\widetilde{e}$ be a fixed element of the ground set of $M$.
Then $M$ is a $2$-sum or $3$-sum of $M_1$ and $M_2$,
where $M_1$ is a graphic or cographic matroid, or a matroid isomorphic to $R_{10}$
and $M_2$  is a regular matroid that contains~$\widetilde{e}$.
\end{theorem}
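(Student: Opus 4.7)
The plan is to deduce the statement from Lemma 11.3.18 of \cite{Tru98}, which already gives a $2$- or $3$-sum decomposition $M = M_1' \triangle M_2'$ with $M_1'$ graphic, cographic, or isomorphic to $R_{10}$ and $M_2'$ regular, but without any control over which side contains the distinguished element $\widetilde{e}$. I would argue by strong induction on $\abs{E(M)}$, adding the bookkeeping needed to place $\widetilde{e}$ on the correct side.

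First, apply Truemper's original lemma to $M$ to obtain some such decomposition with shared set $S = E(M_1') \cap E(M_2')$. Since $E(M) = E(M_1') \triangle E(M_2')$, the element $\widetilde{e}$ belongs to exactly one of $E(M_1') \setminus S$ or $E(M_2') \setminus S$. If $\widetilde{e} \in E(M_2') \setminus S$, return $(M_1, M_2) = (M_1', M_2')$ and we are done.

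Otherwise $\widetilde{e} \in E(M_1') \setminus S$, and I would split into two sub-cases. If $M_2'$ is itself graphic, cographic, or isomorphic to $R_{10}$, swap the roles and return $(M_1, M_2) = (M_2', M_1')$; since $M_1'$ is of one of the distinguished simple types it is in particular regular, so $M_2 = M_1'$ qualifies as a regular matroid containing $\widetilde{e}$. Otherwise $M_2'$ is a connected regular matroid that is not graphic, cographic, or $R_{10}$, and it is strictly smaller than $M$. Applying the inductive hypothesis to $M_2'$ (with some suitably chosen distinguished element, for example an element of $S$) yields a decomposition $M_2' = N_1 \triangle N_2$ with $N_1$ graphic/cographic/$R_{10}$ and $N_2$ regular. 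One then reorganizes
\[
M \;=\; M_1' \triangle M_2' \;=\; M_1' \triangle (N_1 \triangle N_2) \;=\; N_1 \triangle (M_1' \triangle N_2),
\]
sets $M_1 = N_1$ and $M_2 = M_1' \triangle N_2$, and checks that $\widetilde{e} \in E(M_1') \subseteq E(M_2)$ and that $M_2$ is regular (using that the class of regular matroids is closed under $2$- and $3$-sums, by \cref{fac:closed} applied to the fact that both operands are minors of a regular matroid built from them).

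The main obstacle is this final associativity step: showing that $M_1' \triangle (N_1 \triangle N_2)$ can be rewritten as $N_1 \triangle (M_1' \triangle N_2)$ and that the right-hand side is a legitimate $2$- or $3$-sum in the sense of \cref{def:sum}. This requires carefully choosing the distinguished element used when invoking induction on $M_2'$ so that the two shared sets $S = E(M_1') \cap E(M_2')$ and $E(N_1) \cap E(N_2)$ end up disjoint, i.e., all of $S$ lands inside $E(N_2)$; and then verifying that the new shared set $E(N_1) \cap E(M_1' \triangle N_2)$ still carries the circuit structure prescribed in the definition of $2$- or $3$-sums (a single non-loop non-coloop element, or a triangle that is a circuit on both sides but contains no cocircuit of either). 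The verification relies on the explicit circuit characterization of $k$-sums in \cref{lem:3sum-circuits} together with the minor-closure in \cref{fac:closed}, and this nontrivial shared-set bookkeeping --- not provided by Truemper's original lemma --- is the technical heart of the proof.
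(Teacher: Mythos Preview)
Your high-level plan --- induction on the ground set size together with an associativity-of-sums step --- is exactly the skeleton the paper uses. The genuine gap is in your handling of the case where the outer decomposition $M = M_1' \triangle M_2'$ is a $3$-sum. There $|S| = 3$, and for the rearrangement $M_1' \triangle (N_1 \triangle N_2) = N_1 \triangle (M_1' \triangle N_2)$ to make sense you need \emph{all three} elements of $S$ to land in $E(N_2)$. The inductive hypothesis, however, lets you pin down only a single distinguished element; ``carefully choosing'' one $s \in S$ cannot force the other two to fall on the same side. (Your parenthetical ``i.e., all of $S$ lands inside $E(N_2)$'' is not equivalent to ``$S$ and $E(N_1) \cap E(N_2)$ are disjoint'' --- the latter is automatic since $S \subseteq E(M_2') = E(N_1) \triangle E(N_2)$, while the former is the real obstruction.) A triangle in $M_2'$ can genuinely straddle a $2$- or $3$-sum decomposition of $M_2'$, so this case does occur.

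The paper sidesteps this entirely by bringing in $3$-connectivity. It first invokes Truemper's theorem for $3$-connected regular matroids (\cref{thm:decomp3}), which already places $\widetilde{e}$ on the correct side with no further work. When $M$ is connected but not $3$-connected, \cref{lem:3connected} guarantees that the outer decomposition is a $2$-sum, so $|S| = 1$. One then inducts on the summand \emph{not} containing $\widetilde{e}$, using the single common element $e'$ as the distinguished element for the recursive call; since only one element needs to be placed, the inductive hypothesis suffices, and the associativity lemma (\cref{lem:associative}) is only ever needed with a $2$-sum on the outside. This is precisely the mechanism you are missing: rather than trying to control a triangle through the induction, the paper arranges that a triangle never has to be controlled.
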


\section{A Bound on the Number of near-shortest Circuits in Regular Matroids: Proof of \cref{thm:regular}}
\label{sec:ShortCircuits}

In this section, we prove our main technical tool:
in a regular matroid, 
the number of circuits that have size close to a shortest circuit is polynomially bounded (\cref{thm:regular}).
The proof
 argues along the decomposition provided by \cref{thm:decomp}.
First, we need to show
a bound on the number of circuits for 
the two base cases -- graphic and cographic matroids.

\subsection{Base Case: Graphic and cographic matroids}
\label{sec:co-graphic}

We actually prove a lemma for graphic and cographic matroids that does more -- 
it gives an
upper bound on the number of circuits that contain a fixed element of the ground set.
For a weight function $w\colon E \to \N$ on the ground set, 
the weight of any subset $C \subseteq E$ is defined as $w(C) :=\sum_{e \in C} w(e)$.

\begin{lemma}\label{lem:graphic-set}
Let $M=(E,\cI)$ be a graphic or cographic matroid, where $\abs{E} = m \geq 2$,
and $w\colon E \to \N$ be a weight function.
Let $R \subseteq E$ with $\abs{R} \leq 1$ (possibly empty) and 
$r$ be a positive integer.

If there is no circuit~$C$ in~$M$ such that $w(C)< r$ and $C \cap R = \emptyset$, 
then, for any integer $\alpha \geq 2$, the number of circuits~$C$ such that $R \subseteq C$ and $w(C) < \alpha r/2$ 
is at most $(2(m-\abs{R}))^{\alpha}$.
\end{lemma}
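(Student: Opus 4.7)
The argument naturally breaks into graphic and cographic cases, and within each, into casework on $|R| \in \{0, 1\}$. A preliminary step is to observe that the polynomial bounds of \cref{thm:graphic-cographic} extend to integer edge weights with the same dependence on $m$: Karger's random-contraction argument for near-minimum cuts generalizes to weighted graphs via weight-proportional sampling, and Subramanian's enumeration of near-shortest cycles admits a corresponding adaptation. I will use these weighted versions freely.

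For $R = \emptyset$, applying the weighted version of \cref{thm:graphic-cographic} to the graph underlying $M$ directly produces at most $(2m)^\alpha$ (graphic) or $m^\alpha$ (cographic) circuits of weight $< \alpha r/2$, both of which are at most $(2m)^\alpha = (2(m-|R|))^\alpha$. For $R = \{\tilde e\}$ with $\tilde e = (u,v)$, my plan is to work in the $(m-1)$-edge subgraph obtained by removing $\tilde e$ from the ambient graph: $G \setminus \tilde e$ in the graphic case and $G / \tilde e$ in the cographic case. By hypothesis, each of these satisfies a ``no short circuit'' condition with threshold $r$ (in $G \setminus \tilde e$ every cycle has weight $\geq r$, and in $G/\tilde e$ every cut has weight $\geq r$). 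Circuits of $M$ containing $\tilde e$ then correspond to simple $u$-$v$ paths in $G \setminus \tilde e$ in the graphic case (via $P \leftrightarrow P \cup \{\tilde e\}$), and to minimal $u$-$v$ separating cut-sets through $\tilde e$ in the cographic case. To count these, I invoke a fixed-endpoint analogue of \cref{thm:graphic-cographic}: on the cut side, this is Karger's classical $s$-$t$ cut bound; on the cycle side, it is an analogous bound for near-shortest $u$-$v$ paths in a graph with girth at least $r$, obtained by a straightforward adaptation of Subramanian's enumeration. Applied to the $(m-1)$-edge subgraph, each extension delivers the claimed $(2(m-1))^\alpha$ bound.

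The principal obstacle is establishing the fixed-endpoint extensions with precisely the bound $(2(m-|R|))^\alpha$ rather than something like $(2m)^\alpha$ or a looser polynomial. Conceptually these are routine modifications of the Subramanian and Karger arguments, but care is needed to track edges accurately so that the multiplicative constant of $2$ and the $m - |R|$ dependence come out cleanly; once that is done, the $|R| = 1$ case reduces to a direct application, and the remaining casework on matroid type is straightforward. A subtle point worth flagging is that the hypothesis does not constrain circuits \emph{containing} $R$, so the minimum $u$-$v$ path (or min $u$-$v$ cut) in the subgraph can be arbitrarily small; the fixed-endpoint extension must therefore rely on the girth/min-cut bound in the full subgraph rather than on a lower bound on the minimum $u$-$v$ path or cut.
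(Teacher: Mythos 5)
Your outline correctly identifies the right strategy — adapt Subramanian's greedy cycle-encoding (graphic case) and Karger's random-contraction argument (cographic case) to the setting with a fixed element $\tilde e$ — and this is indeed what the paper does. But there is a genuine gap: the ``fixed-endpoint extensions'' you defer as routine are not black-box citations; they are the main content of the paper's proof, and you have not carried them out.

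In the graphic case the gap is small. Deleting $\tilde e$ and counting $u$-$v$ paths by a greedy $\alpha$-edge encoding is essentially equivalent to the paper's argument, which fixes $\tilde e$ as the first chosen edge of the cycle and greedily selects the remaining $\alpha-1$ edges; the uniqueness argument goes through because the symmetric difference of two candidate paths between consecutive chosen edges is a cycle in $G\setminus\tilde e$ of weight $<r$, contradicting the girth hypothesis, and the count is $(2(m-1))^{\alpha-1}\leq (2(m-1))^\alpha$. You could write this out in a few lines.

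The cographic case is where the real gap lies, and it is at exactly the point you flag. Invoking ``Karger's classical $s$-$t$ cut bound'' does not work, for the reason you yourself note: that bound requires a lower bound on the minimum $u$-$v$ cut weight, whereas the hypothesis only controls cuts \emph{avoiding} $\tilde e$ (equivalently, the min cut of $G/\tilde e$), and the min $u$-$v$ cut of $G$ can be arbitrarily small. Saying the fix is ``a straightforward adaptation'' skips the proof. What is actually required (and what the paper does) is a modified contraction process: run $k=n-\alpha-\abs{R}$ weighted random contractions in $G$ that never contract $\tilde e$ or any edge that becomes parallel to it, and at each step bound the survival probability of the target cut by lower-bounding the total weight $w(E_i\setminus R_i)$. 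This lower bound is obtained not from $G_i$ itself but from the further-contracted graph $G_i'=G_i/R_i$, whose vertex-degree cut-sets avoid $R$ and hence have weight at least $r$ by hypothesis, giving $w(E_i\setminus R_i)\geq r(n-i-\abs{R})/2$. That degree-bounding step — going through $G_i/R_i$ rather than $G_i$ — is the mechanism that lets the argument lean on the min cut of $G/\tilde e$ instead of the (unbounded) min $u$-$v$ cut, and it is precisely what your proposal labels as routine without proving. Until that step is spelled out, the cographic half of the lemma is unproved. There is also a minor confusion in your write-up between working ``in $G/\tilde e$'' and counting cuts that separate $u$ from $v$: after contracting $\tilde e$ there is no $u$-$v$ cut, so the contracted graph can supply the weight lower bound but cannot host the counting argument; the counting must happen in $G$ (or $G\setminus\tilde e$) itself, as the paper's contraction process does.
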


\begin{proof}
\textbf{Part 1: $M$ graphic}.
(See \cite{TK92,Sub95} for a similar argument as in this case.)
Let $G=(V,E)$ be the graph corresponding to the graphic matroid $M$.
By the assumption of the lemma, any cycle~$C$ in~$G$ such that $C \cap R = \emptyset$
has weight  $w(C) \geq r$.
Consider a cycle~$C$ in~$G$ with $R \subseteq C$ and $ w(C) < \alpha r/2$.
Let the edge sequence of the cycle $C$ be $(e_1,e_2,e_3, \ldots, e_{q})$ such that 
if $R$ is nonempty then $R = \{e_1\}$.
We choose $\alpha$ edges of the cycle $C$ as follows:
 Let ${i_1} = 1$ and  for $j =2,3, \dots, \alpha$, 
define $i_j$ to be the least index greater than $i_{j-1}$ (if one exists) such that 
\begin{equation}
\sum_{a=i_{j-1}+1}^{i_j} w(e_{a}) \geq r/2.
\label{eq:ijchoice}
\end{equation}
If such an index does not exists then define $i_j=q$.
Removing the edges $e_{i_1},e_{i_2},\dots,e_{i_{\alpha}}$ from $C$ gives us 
$\alpha$ paths:  for $j=1,2,\dots,\alpha-1$
$$p_j := (e_{i_{j}+1}, e_{i_{j}+2}, \dots, e_{i_{j+1}-1}),$$
and 
$$p_\alpha := (e_{i_{\alpha}+1}, e_{i_{\alpha}+2}, \dots, e_q).$$
Note that some of these paths might be empty.
By the choice of $i_j$ we know that $w(p_j) < r/2$ for $j=1,2,\dots,\alpha-1$.
Combining \eqref{eq:ijchoice} with the fact that $w(C) < \alpha r/2$, we obtain that $w(p_\alpha) < r/2$.
We associate the ordered tuple of oriented edges $(e_{i_1},e_{i_2},\dots,e_{i_\alpha})$ with the cycle $C$.

\begin{claim}
For two distinct cycles $C,C'$ in $G$, such that both contain $R$ and $w(C),w(C') < \alpha r/2$, the two associated tuples (defined as above) are different.
\end{claim}
\begin{proof}
For the sake of contradiction, assume that the associated tuples are same for both the cycles.
Thus, $C$ and $C'$ pass through $(e_{i_1},e_{i_2},\dots, e_{i_\alpha})$ with the same orientation of 
these edges. 
Further, there are  
$\alpha$ paths connecting them, say $p_1,p_2,\dots, p_\alpha$ from $C$ and $p'_1,p'_2,\dots,p'_\alpha$ from $C'$.
Since $C$ and $C'$ are distinct, for at least one $j$, it must be that $p_j\neq p'_j$.
However, since the starting points and the end points of $p_j$ and $p'_j$ are same, $p_j \cup p'_j$
contains a cycle $C''$. 
Moreover, since $w(p_j),w(p'_j) < r/2$, we can deduce that 
$w(C'') < r$. 
Finally, since neither of $p_j$ and $p'_j$ contain $e_1$, we get $C'' \cap R = \emptyset$.
This is a contradiction.
\end{proof}

\noindent
Since, each cycle $C$ with $w(C) < \alpha r/2$ and $R \subseteq C$ is associated with a different tuple, the number of such tuples 
upper bounds the number of such cycles. 
We bound the number of tuples depending on whether $R$ is empty or not. 
\begin{itemize}
\item When $R$ is empty, the number of tuples of $\alpha$ oriented edges is at most $(2m)^{\alpha}$.
\item When $R = \{e_1\}$, the number of choices for the rest of the $\alpha -1$ edges and their orientation is 
a most $(2(m-1))^{\alpha-1}$.
\end{itemize}

\noindent
\textbf{Part 2: $M$ cographic}.
Let $G=(V,E)$ be the graph corresponding to the cographic matroid~$M$ and let $n=\abs{V}$.
Recall from \cref{fac:cographic-circuits} that circuits in cographic matroids are inclusionwise minimal cut-sets in~$G$.
By the assumption of the lemma,
any cut-set~$C$ in~$G$ with $R \cap C = \emptyset$ has weight $w(C) \geq r$.
Note that this implies that~$G$ is connected, and therefore $m \geq n-1$.
We  want to give a bound on the number of cut-sets~$C \subseteq E$ 
such that $w(C) < \alpha r/2$ and $R \subseteq C$.

We argue similar to the  probabilistic construction of a minimum cut of Karger~\cite{Kar93}.
The basic idea is to contract randomly chosen edges.
\emph{Contraction of an edge} $e = (u,v)$ means 
that all edges between~$u$ and~$v$ are deleted and then~$u$ is identified with~$v$. 
Note that we get a multi-graph that way:
if there were two edges $(u,w)$ and $(v,w)$ before the contraction, 
they become two parallel edges after identifying~$u$ and~$v$.
The contracted graph is denoted by~$G/e$. 
The intuition behind contraction is,
that randomly chosen edges are likely to avoid the edges of a minimum cut.

The following algorithm implements the idea.
It does~$k \leq n$ contractions in the first phase and then chooses a random cut
within the remaining nodes of the contracted graph in the second phase that contains the edges of~$R$.
Note that any cut-set of the contracted graph is also a cut-set of the original graph.

\newcommand{\assign}{\leftarrow}

\begin{tabbing}
xxx\=xxx\=xxx\=xxx\=xxx\=xxx\= \kill
{\sc Small Cut} $(G = (V,E),R,\alpha)$ \\[0.1cm]
\emph{Contraction}\\
1 \> {\bf Repeat} $k = n-\alpha-\abs{R}$ times\\
2 \> \> {\bf randomly choose} $e \in E \setminus R$ with probability $w(e)/w(E \setminus R)$\\
3 \> \> $G \assign G/e$\\
4 \> \> $R \assign R \cup \{\text{new parallel edges to the edges in } R\}$\\[1ex]
\emph{Selection}\\
5 \> Among all possible cut-sets $C$ in the obtained graph $G$ with $R \subseteq C$, \\
 \> choose one uniformly at random and return it. 
\end{tabbing}
\noindent 
Let~$C \subseteq E$ be a cut-set with $w(C) < \alpha r/2$ and $R \subseteq C$.
We want to give a lower bound on the probability that {\sc Small Cut} outputs~$C$.

Let $G_0 = G$ and $G_i = (V_i,E_i)$ be the graph after the $i$-th contraction, 
for $i = 1,2, \dots,k$.
Note that~$G_i$ has $n_i = n-i$ nodes
since each contraction decreases the number of nodes by~$1$.
Let~$R_i$ denote the set~$R$ after the $i$-th contraction.
That is,
if $R = \{e_1\}$, 
then~$R_i$ contains all edges parallel to~$e_1$ in~$G_i$.
In case that $R = \emptyset$, also $R_i = \emptyset$.
Note that in either case $R_i \subseteq C$, if no edge of $C$ has been contracted till iteration $i$.

Conditioned on the event that no edge in~$C$ has been contracted in iterations~1 to~$i$, 
the probability that an edge from~$C$ is contracted in the $(i+1)$-th iteration is at most 
$$ w(C\setminus R_i)/w(E_i \setminus R_i).$$
We know that $ w(C\setminus R_i) \leq w(C) < \alpha r/2$. 
For a lower bound on $w(E_i \setminus R_i)$,  
consider the graph~$G'_i$ obtained from~$G_i$ by contracting the edges in~$R_i$.
The number of nodes in~$G'_i$ will be $n'_i = n -i - \abs{R}$ and 
its set of edges will be $E_i \setminus R_i$.
For any node~$v$ in~$G'_i$, consider the set~$\delta(v)$ of edges incident on~$v$ in~$G'_i$.
The set~$\delta(v)$ forms a cut-set in~$G'_i$ and also in~$G$.
Note that $\delta(v) \cap R = \emptyset$, as the edge in~$R$ has been contracted in~$G'_i$.
Thus, we can deduce that $w(\delta(v)) \geq r$. 
By summing this up for all nodes in~$G'_i$, we obtain
$$w(E_i \setminus R_i) \geq r\, n'_i/2.$$
Hence,  
$$w(E_i \setminus R_i) \geq r\, (n-i-\abs{R})/2.$$
Therefore the probability that an edge from~$C$ is contracted in the $(i+1)$-th iteration is 
$$\leq~ \frac{w(C \setminus R_i)}{w(E_i \setminus R_i)} 
~\leq~ \frac{\alpha\, r/2}{r\, (n-i-\abs{R})/2} 
~=~ \frac{\alpha}{n-i-\abs{R}}.$$
This bound becomes greater than~$1$, when $i > n-\alpha-\abs{R}$. 
This is the reason why we stop the contraction process after
$k = n-\alpha-\abs{R}$ iterations.

The probability that no edge from~$C$ is contracted in any of the rounds is
\begin{eqnarray*}
&\geq& \prod_{i=0}^{k-1} \left( 1-\frac{\alpha}{n-i-\abs{R}} \right)\\
&=& \prod_{i=0}^{k-1} \left( 1-\frac{\alpha}{k + \alpha-i} \right)\\
&=& \prod_{i=0}^{k-1} \frac{k-i}{k + \alpha-i}\\
&=& \frac{1}{{{k+\alpha} \choose k}}
 \\
&=& \frac{1}{{{n-\abs{R}} \choose \alpha}}.
\end{eqnarray*}
After $n-\alpha-\abs{R}$ contractions we are left with $\alpha+\abs{R}$ nodes. 
We claim that the number of possible cut-sets on these nodes that contain~$R$ is~$2^{\alpha-1}$.
%
In case when $R = \emptyset$, then the number of partitions of $\alpha$ nodes into two sets is clearly~$2^{\alpha-1}$.
When $R = \{e_1\}$, then the number of partitions of $\alpha+1$ nodes, 
such that the endpoints of~$e_1$ are in different parts, is again~$2^{\alpha-1}$.
 We choose one of these cuts randomly. 
Thus, the probability that $C$ survives the \emph{contraction} process and is also chosen in the 
\emph{selection} phase is 
at least 
$$ \frac{1}{2^{\alpha-1} {{n-\abs{R}} \choose \alpha} } \geq \frac{1}{({n-\abs{R}})^\alpha}.$$
Note that in the end we get exactly one cut-set. 
Thus, the number of cut-sets $C$ of weight $< \alpha r/2$ and $R \subseteq C$
must be at most $(n-\abs{R})^{\alpha}$, which is bounded by $(2(m-\abs{R}))^\alpha$
because $m \geq n-1$.
\end{proof}

\subsection{General regular matroids}
\label{sec:proof-3rby2}
In this section, we prove our main result about regular matroids.

\begin{theorem*}[\cref{thm:regular}]
Let $M=(E,\I)$ be a regular matroid with $m  = \abs{E} \geq 2$ 
and $w\colon E \to \mathbb{N}$ be a weight function.
Suppose~$M$ does not have any circuit~$C$ such that  $w(C)< r$, for some number~$r$.
Then 
\[
 \abs{\set{C}{C \text{ circuit in $M$ and } w(C) < 3r/2}} ~\leq~ 240\, m^{5}.
\]
\end{theorem*}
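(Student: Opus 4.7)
The plan is to prove \cref{thm:regular} by strong induction on $m=\abs{E}$, but to make the induction go through we must first strengthen the hypothesis. The natural strengthening, suggested by the overview, is the following weighted-and-pinned version: for a regular matroid $M=(E,\I)$ with weights $w\colon E\to \N$, a distinguished set $R\subseteq E$ with $\abs{R}\leq 1$, and a threshold $r$ such that every circuit $C$ of $M$ with $C\cap R=\emptyset$ satisfies $w(C)\geq r$, the number of circuits $C$ with $R\subseteq C$ and $w(C)<\nfrac{3r}{2}$ is bounded by $\kappa\,(m-\abs{R})^{5}$ for a suitable constant $\kappa$ (the constant $240$ in the theorem will come out of tracking the arithmetic). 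Taking $R=\emptyset$ recovers \cref{thm:regular}. I keep $\alpha=\nfrac{3}{2}$ fixed throughout; the overview explains why the method does not extend to arbitrary $\alpha$ (the size of $R$ would need to grow past $1$).

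The base cases are handled directly. If $M$ is graphic or cographic, \cref{lem:graphic-set} applied with $\alpha=3$ gives at most $(2(m-\abs{R}))^{3}$ such circuits, which is subsumed by the target bound. If $M$ is isomorphic to $R_{10}$, there are only finitely many circuits to inspect, so an absolute constant suffices. The remaining case is that $M$ is connected, regular, not graphic, not cographic, and not $R_{10}$; here I invoke Truemper's refinement \cref{thm:decomp}, choosing $\widetilde e$ to be the unique element of $R$ if $R\neq\emptyset$ and arbitrary otherwise. This yields $M = M_1 \triangle M_2$ (a $2$- or $3$-sum), with $M_1$ graphic, cographic, or $R_{10}$, and with $\widetilde e\in M_2$. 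Crucially, because $\widetilde e\in M_2\setminus M_1$, we have $R\cap E_1=\emptyset$, which will let us freely use \cref{lem:graphic-set} on $M_1$. Let $S=E_1\cap E_2$ (size $1$ or $3$), $m_1=\abs{E_1}$, $m_2=\abs{E_2}$, so $m=m_1+m_2-2\abs{S}$ and $m_1,m_2\geq 2\abs{S}+1$.

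By \cref{lem:3sum-circuits}, every circuit $C$ of $M$ either (a) is a circuit of $M_1$ or $M_2$ avoiding $S$, or (b) equals $C_1\triangle C_2$ where $C_i$ is a circuit of $M_i$ and $C_1\cap S=C_2\cap S=\{e\}$ for some unique $e\in S$. Type (a) circuits in $M_i$ are circuits of $M_i\setminus S$, a regular matroid on fewer elements whose minimum circuit weight is at least $r$; applying the induction hypothesis to $M_1\setminus S$ and $M_2\setminus S$ (with the inherited $R$ on $M_2$) controls this part. For type (b), I fix $e\in S$ and split further according to the size of $C_1$. If $w(C_1)<\nfrac{r}{2}$, I show that \emph{$C_1$ is unique}: two such circuits $C_1,C_1'$ would give, via \cref{fac:binary}, a disjoint union of circuits $C_1\triangle C_1'$ avoiding $S$ (hence a disjoint union of circuits of $M$ by \cref{cla:disjointCircuits}) of total weight less than $r$, contradicting the hypothesis on $M$; moreover $R\subseteq E_2$, so the uniqueness argument is not blocked by $R$-constraints on $M_1$. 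Using this unique $C_1$, I reweight $M_2$ by setting $w'(e):=w(C_1)$ for the chosen $e\in S$ (keeping other weights of $M_2$), so that circuits $C_2$ of the reweighted $M_2$ of weight less than $\nfrac{3r}{2}$ correspond bijectively to the circuits of $M$ of type (b) with this small $C_1$. The hypothesis that $M_2$'s circuits avoiding $S\cup R$ (which are circuits of $M$) have weight $\geq r$ transfers, and induction on $M_2$ with the enlarged required set $R\cup\{e\}$ (size at most $2$, but since $e\in S$ and $\widetilde e\in E_2\setminus S$, this enlargement must be absorbed carefully—see below) yields the desired count. If instead $w(C_1)\geq\nfrac{r}{2}$, then $w(C_2)<r$, so $C_2$ is a circuit of $M_2$ of weight below the hypothesized minimum; such $C_2$ must then contain an element of $R$ or $S$. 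For $C_2$ containing $e\in S$ (and no other element of $S$), bound the number of such $C_2$ by applying induction on $M_2/S$ style arguments; pair with the bound on $C_1$ in $M_1$ (at most $(2m_1)^{3}$ from \cref{lem:graphic-set}, which is crucially \emph{polynomial in $m_1$ with a small exponent}, leaving room).

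The main obstacle is the last case—when neither projection is small—because the naïve product $m_1^{c}\cdot m_2^{c}$ explodes past $m^{c}$. The whole point of using Truemper's stronger decomposition is that $M_1$ is graphic/cographic/$R_{10}$, for which \cref{lem:graphic-set} gives an exponent $3$ rather than $5$; combined with the reduced exponent on $M_2$ coming from the fact that $w(C_2)<(\alpha-\nfrac{1}{2})r$ forces a weaker inductive invocation, the product closes. A second subtlety is that the overview suggests $R$ can grow when we pass to $M_2$, but because I only need $\alpha=\nfrac{3}{2}$, the depth of enlargement of $R$ is at most one, which is why Truemper's theorem (guaranteeing $\widetilde e\in M_2$) is tight for the argument. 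The final bookkeeping step is to choose the constant $\kappa$ large enough so that the sums $\kappa(m_1-\abs{R})^{5}+\kappa(m_2-\abs{R})^{5}$ plus the cross terms stay below $\kappa(m-\abs{R})^{5}$; using $m=m_1+m_2-2\abs{S}$ and the fact that both $m_1,m_2<m$ strictly, standard convexity inequalities plus the slack from the $m_1^3$ factor yield the constant $240$.
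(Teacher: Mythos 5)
Your overall strategy --- Truemper's refinement so that $M_1$ is graphic, cographic, or $R_{10}$; a case split on the weight of the projection $C_1$; uniqueness of a small $C_1$ via \cref{fac:binary}; and reweighting $M_2$ to absorb the small $C_1$ --- matches the paper's argument. But there is a genuine gap at the very point you flag as the ``main obstacle.''

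Your single strengthened induction hypothesis (pin $R$ with $\abs{R}\leq 1$, threshold $\nfrac{3r}{2}$, bound $\kappa(m-\abs{R})^5$) cannot close the case where $w(C_1)\geq\nfrac{r}{2}$. There you need to count circuits $C_2$ of $M_2$ with $w(C_2)<r$ containing a pin $e\in S$, and multiply by the count of $C_1$'s. Applying \emph{your} IH to $M_2$ (say at threshold $r'=\nfrac{2r}{3}$, so $\nfrac{3r'}{2}=r$) gives a bound of order $m_2^5$. Paired with the $m_1^3$ bound from \cref{lem:graphic-set}, the product is $\sim m_1^3\,m_2^5$, which is not dominated by $m^5$: take $m_1$ constant and $m_2 \sim m$ to see the constant blows up. The paper resolves this by introducing a \emph{second} inductive statement (\cref{lem:circuitsR}): with a single pinned element, no circuit avoiding the pin of weight $<r$, the number of circuits through the pin of weight $<r$ is at most $48(m-1)^2$. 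The exponent $2$ is what makes $3+2=5$ work in the binomial expansion $(m_1-2s)^5 + m_2^5 + \binom{5}{2}(m_1-2s)^3(m_2-s)^2 \leq m^5$. Your phrase ``forces a weaker inductive invocation'' gestures at this, but a fixed-$\alpha$ statement with exponent $5$ has no such weakening available; you need to explicitly state and separately prove the companion bound at the lower threshold $r$ with the strictly smaller exponent. (\cref{lem:circuitsR} is proved by its own induction, structurally parallel but with a symmetric twist: when $C_1$ is large, $C_2$ is the one that is unique, so only $C_1$'s need counting and \cref{lem:graphic-set} at $\alpha=2$ suffices.)

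A secondary issue: you worry about $R\cup\{e\}$ growing to size $2$ when you pass the pin into $M_2$. In the paper this never arises. In Case 2 of \cref{thm:regular} ($R=\emptyset$), the reweighting makes $C\mapsto C_2$ injective \emph{without any pin}, so the unpinned IH is applied to $M_2$ directly; and in the proof of \cref{lem:circuitsR} the pin $\widetilde{e}$ already lives in $M_2$ and is the only pin ever needed there, while the element $e\in S$ enters only as a pin on $M_{1,e}$, to which \cref{lem:graphic-set} applies. Decoupling the pinned threshold-$r$ statement from the unpinned threshold-$\nfrac{3r}{2}$ statement is precisely what prevents the pin from accumulating, and is the piece your proposal leaves unresolved.
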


\begin{proof}
The proof
is by an induction on~$m$,
the size of the ground set.
For the base case, let $m \leq 10$. 
There are at most $2^{m}$ circuits in $M$.
This number is bounded by $240\, m^{5}$, for any $2 \leq m \leq 10$.

For the inductive step,
let~$M = (E,{\mathcal I})$ be a regular matroid with $\abs{E} = m > 10$
and assume that the theorem holds for all smaller regular matroids.
Note that~$M$ cannot be~$R_{10}$ since $m>10$.
We can also assume that matroid~$M$ is neither graphic nor cographic,
otherwise the bound follows from \cref{lem:graphic-set}.
By \cref{thm:Seymour}, 
matroid~$M$ can be written as a 1-, 2-, or 3-sum of 
two regular matroids~$M_1 = (E_1,{\mathcal I}_1)$ and~$M_2 = (E_2,{\mathcal I}_2)$.
We define
\begin{eqnarray*}
S &:=& E_1 \cap E_2,\\
s &:=& \abs{S},\\
m_i &:=& \abs{E_i}, \text{ for } i =1,2,\\
\C_i &:=& \set{C}{C \text{ is a circuit of } M_i}.
\end{eqnarray*}
In case that $M$ is the 1-sum of~$M_1$ and~$M_2$,
we have $S = \emptyset$, and therefore $m = m_1 + m_2$.
By \cref{fac:1sum-circuits}, 
the set of circuits of~$M$ is the union of the sets of circuits of~$M_1$ and~$M_2$.
From the induction hypothesis, we have that $M_i$ has at most $240\, m_i^5$ circuits of weight less than $3r/2$, for $i=1,2$.
For the number of such circuits in~$M$ we get the bound of
\[
240\, m_1^5 + 240\, m_2^5 \leq 240\, m^5 .
\]
This proves the theorem in case of a 1-sum.
Hence, in the following it remains to consider the case that~$M$ cannot be written as a 1-sum.
In other words,
we may assume that~$M$ is connected (\cref{def:connected}).

Now we can apply \cref{thm:decomp} and assume 
that~$M$ is a $2$- or $3$-sum of~$M_1$ and~$M_2$,
where~$M_1$ is a graphic, cographic or the~$R_{10}$ matroid,
and~$M_2$ is a regular matroid.

We define for $i=1,2$ and $e \in S$
\begin{eqnarray*}
\C_{i,e} &:=& \set{C}{C \in \C_i \text{ and } C \cap S = \{e\}},\\
M'_i &:=& M_i \setminus S,\\
\C'_i &:=& \set{C}{C \text{ is a circuit of } M'_i}.
\end{eqnarray*}
By \cref{fac:closed,fac:R10},
matroid~$M'_1$ is graphic or cographic, and $M'_2$ is regular.
Recall from \cref{lem:3sum-circuits} that 
any circuit~$C$ of~$M$ can be uniquely written as $C_1 \triangle C_2$ such that one of 
the following holds: 
\begin{itemize}
\item $C_1 =\emptyset$ and $C_2 \in \C'_2$.
\item $C_2 =\emptyset$ and $C_1 \in \C'_1$.
\item $C_1 \in \C_{1,e}$, and  $C_2 \in \C_{2,e}$,  for some  $e \in S$.
\end{itemize}
Thus, we will view each circuit~$C$ of~$M$ as $C_1 \triangle C_2$
and consider cases based on 
how the weight of~$C$ is distributed among~$C_1$ and~$C_2$.
Recall that the weight function~$w$ is defined on $E = E_1 \triangle E_2$.
We extend~$w$ to a function on $E_1 \cup E_2$ by defining
\[
w(e) = 0, \text{ for } e \in S.
\]
Now, for the desired upper bound, 
we will divide the set of circuits of~$M$ of weight less than~$3r/2$ into three cases.

\begin{description}
\item[{\bf Case 1.}] $C_1 \in \C'_{1}$.
\item[{\bf Case 2.}] $\wprime(C_1) < r/2$. This includes the case that $C_1 = \emptyset$.
\item[{\bf Case 3.}] $\wprime(C_1) \geq r/2$ and $C_2 \neq \emptyset$.
\end{description}

In the following,
we will derive an upper bound for the number of circuits in each of the three cases.
Then the sum of these bounds will be an upper bound on the number of circuits in~$M$.
We will show that the sum is less than $240\, m^5$.

\subsection*{Case 1: $C_1 \in \C'_{1}$} 
We have $C_2 = \emptyset$ and $C = C_1 \in \C'_1$. 
That is, we need to bound the number of circuits of $M'_1$.
Recall that any circuit of $M'_1$ is also a circuit of $M$.
Hence, we know there is no circuit $C_1$ in $M'_1$ with $w(C_1) < r$.
Since $M'_1$ is graphic or cographic, from \cref{lem:graphic-set}, 
the number of circuits $C_1$ of $M'_1$ with $\wprime(C_1) < 3r/2$
is at most 
$(2(m_1-s))^{3}.$
Recall from (\ref{eq:2s+1}) that $m_1 \geq 2s+1$. 
For any $m_1 \geq 2s+2$, one can verify that 
$$(2(m_1-s))^{3} \leq  240\,(m_1-2s)^{5} =: T_0.$$
On the other hand, when $m_1 = 2s+1$, 
the number of circuits can be at most $2^{m_1-s} \leq 2^4$, 
which is again bounded by~$ T_0 $.

\subsection*{Case 2: $\wprime(C_1) < r/2$}

The main point why we distinguish case~2 is that here~$C_1$ is uniquely determined.

\begin{claim}
\label{cla:unique}
For any $e \in S$,
there is at most one circuit $C_1 \in \C_{1,e}$ with 
$\wprime(C_1) < r/2$.
\end{claim}
\begin{proof}
For the sake of contradiction, assume that there are two circuits~$C_1, C'_1 \in \C_{1,e}$,
with $\wprime(C_1), \wprime(C'_1) < r/2$.
By \cref{fac:binary}, 
we know that $C_1 \triangle C'_1$ is a disjoint union of circuits in~$M_1$. 
Note that $C_1 \cap S = C'_1 \cap S = \{e\}$,
and hence $(C_1 \triangle C'_1) \cap S = \emptyset$.
Thus, $C_1 \triangle C'_1$ is in fact a disjoint union of circuits in~$M$.
Let $\widetilde{C}$ be a subset of $C_1 \triangle C'_1$ that is a circuit.
For the weight of~$\widetilde{C}$ we have
$$
\wprime(\widetilde{C}) \leq \wprime(C_1 \triangle C'_1) 
\leq \wprime(C_1) + \wprime(C'_1) < r/2+r/2 =r.$$
This is a contradiction because~$M$ has no circuit of weight less than~$r$.
\end{proof}
\noindent 
Thus, as we will see, it suffices to bound the number of circuits $C_2$ in $M_2$.
Let $C^*_e$ be the unique choice of a circuit provided by \cref{cla:unique}
(if one exists) for element~$e \in S$.
For the ease of notation,
we assume in the following that there is a~$C^*_e$ for every $e \in S$.
Otherwise we would delete any element $e\in S$ from~$M_2$ for which no~$ C^*_e$ exists,
and then would consider the resulting smaller matroid.
It might actually be that we thereby delete all of~$S$ from~$M_2$.

We define a weight function~$w'$ on~$E_2$ as follows: 
$$
w'(e) := \begin{cases}
         \wprime(C^*_e), & \text{ if } e \in S,  \\
         \wprime(e), & \text{ otherwise}.
	\end{cases}
$$
We now have that any circuit~$C$ of Case~2 can be written as
$C^*_e \triangle C_2$, for some $e \in S$, or $C = C_2$ when $C_1 = \emptyset$.
Because~$C^*_e$ is unique,
the mapping $C \mapsto C_2$ is injective for circuits~$C$ of Case~2.
Moreover,
we have $w(C) = w'(C_2)$.
This follows from the definition in case that $C = C_2$.
In the other case, we have
\begin{equation}
w(C) = w({C^*_e \triangle C_2}) = \wprime(C^*_e) + \wprime(C_2) = w'(C_2).
\label{eq:ww'}
\end{equation}
For the equalities, recall that $w(e) = 0$ for $e\in S$.

We conclude that the number of circuits~$C_2$ in $M_2$ with $w'(C_2) < 3r/2$ 
is an upper bound on the number of Case 2 circuits~$C$ of~$M$ with $w(C) < 3r/2$.
Now, to get an upper bound on the number of circuits in $M_2$, we want to apply induction hypothesis.
We need the following claim.

\begin{claim}
\label{cla:nocircuit}
There is no circuit $C_2$ in $M_2$ with $w'(C_2) < r$.
\end{claim}

\begin{proof}
For the sake of contradiction let $C_2$ be such a circuit.
We show that there exists a circuit~$C'$ in~$M$ 
with $w(C') < r$. 
This would contradict the assumption of the lemma.

Case(i):  $C_2 \cap S = \emptyset$.
Then $C_2 \in \C'_2$ itself yields the contradiction
because it is a circuit of~$M$ and $w(C_2) = w'(C_2) < r$.

Case(ii): $C_2 \cap S = \{e\}$.
By \cref{cla:disjointCircuits},  
the set $C_2 \triangle C^*_e$ is a disjoint union of circuits of~$M$.
Let $C' \subseteq C_2 \triangle C^*_e$ be a circuit of $M$. 
Then, because $\wprime(e)=0$, we have
$$
w(C') \leq w({C^*_e \triangle C_2}) = \wprime(C^*_e) + \wprime(C_2) = w'(C_2) < r.
$$

Case(iii): $C_2 \cap S = \{e_1,e_2\}$. 
By  \cref{cla:disjointCircuits}, 
similar as in case~(ii),
there is a set $C' \subseteq C_2 \triangle C^*_{e_1} \triangle C^*_{e_2}$ that is a circuit of~$M$.
Then, because $\wprime(e_1)=\wprime(e_2)=0$, we have
$$
w(C') \leq w(C_2 \triangle C^*_{e_1} \triangle C^*_{e_2}) \leq  \wprime(C_2) + \wprime(C^*_{e_1})+ \wprime(C^*_{e_2}) = w'(C_2) < r.
$$

Case(iv): $C_2 \cap S = \{e_1,e_2,e_3\}$.
 Since $S$ is a circuit, it must be the case that $C_2 =S$.
Since $C^*_{e_1},C^*_{e_2},C^*_{e_3}$ and $S$ constitute all the circuits of~$M_1$,
the set $ C^*_{e_1} \triangle C^*_{e_2} \triangle C^*_{e_3} \triangle S$ 
contains a circuit~$C'$ of~$M_1$.
Since $\{e_i\} =  C^*_{e_i} \cap S$, for $i=1,2,3$, we know that $S \cap C' = \emptyset$.
Thus, $C' \in \C'_1$ is a circuit of~$M$.
Since $\wprime(e_1) = \wprime(e_2) = \wprime(e_3) = 0$, we obtain that
$$w(C') \leq \wprime(C^*_{e_1}) + \wprime(C^*_{e_2})+ \wprime(C^*_{e_3}) = w'(S) =w'(C_2) < r.$$
This proves the claim.
\end{proof}

\noindent
By \cref{cla:nocircuit}, 
we can apply the induction hypothesis for $M_2$ with the weight function~$w'$.
We get that the number of circuits~$C_2$ in $M_2$ 
with $w'(C_2) < 3 r/2$  is bounded by 
$$
T_1 := 240 \, m_2^{5}.
$$ 
As mentioned above, 
this is an upper bound on the number of circuits~$C$ in~$M$ with $w(C) < 3r/2$ in Case~2.

\subsection*{Case 3: $\wprime(C_1) \geq r/2$}

Since $w(C) = \wprime(C_1) + \wprime(C_2) < 3r/2$,
we have
$\wprime(C_2) < r$ in this case.
We also assume that $C_2 \neq \emptyset$.
Hence, 
there is an $e \in S$ such that $C_1 \in \C_{1,e}$ and $C_2 \in \C_{2,e}$.

Let $T_{2}$ be an upper bound on the number of circuits $C_1 \in \C_{1,e}$ with 
$\wprime(C_1) < 3r/2$, for each $e \in S$.
Let $T_{3}$ be an upper bound on the number of circuits $C_2 \in \C_{2,e}$ with
$\wprime(C_2) < r$, for each $e \in S$.
Because there are $s$ choices for the element $e \in S$,
the number of  circuits $C = C_1 \triangle C_2$ with $w(C) < 3r/2$ in Case~3 
will be at most 
	
\begin{equation}
\label{eq:type2}
s \, T_{2} \, T_{3}.
\end{equation}
To get an upper bound on the number of circuits in $\C_{1,e}$ and $\C_{2,e}$,
consider two matroids $M_{1,e}$ and $M_{2,e}$.
These are obtained from $M_1$ and $M_2$, respectively, by deleting the elements in $S\setminus \{e\}$.
The ground set cardinalities of these two matroids are $m_1-s+1$ and $m_2-s+1$.

We know that for $i=1,2$, 
any circuit~$C_i$ of~$M_{i,e}$ with $e \not\in C_i$ is in~$\C'_{i}$ and 
hence, is a circuit of~$M$.
Therefore, there is no circuit~$C_i$ of~$M_{i,e}$ with $e \not\in C_i$ and $\wprime(C_i) < r$.
Using this fact, we want to bound the number of circuits~$C_i$ of~$M_{i,e}$ with $e \in C_i$.
We start with $M_{1,e}$.

\begin{claim}\label{cla:T1}
An upper bound on the number of circuits~$C_1$ in~$M_{1,e}$ with $e \in C_1$ and
$\wprime(C_1) < 3r/2$ is
\begin{equation}
\label{eq:T1}
T_2  := \min \{8(m_1-s)^{3}, 2^{m_1-s} \}
\end{equation}
\end{claim}

\begin{proof}
Recall that the decomposition of~$M$ was such that~$M_1$ is graphic, cographic or the~$R_{10}$ matroid.

Case(i). When $M_1$ is graphic or cographic, 
the matroid~$M_{1,e}$ falls into the same class by \cref{fac:closed}.
Recall that the ground set of~$M_{1,e}$ has cardinality  $m_1-s+1$. 
In this case, we apply \cref{lem:graphic-set} to~$M_{1,e}$
with $R = \{e\}$ and $\alpha = 3$ and get a bound of
$ 8(m_1-s)^{3}.$
The number of circuits containing~$e$ is also trivially bounded by the number of all subsets
that contain~$e$, 
which is $2^{m_1-s}$.
Thus, we get Equation~(\ref{eq:T1}).

Case(ii). When $M_1$ is the $R_{10}$ matroid, then the cardinality of $M_{1,e}$, that is $m_1-s+1$, is at most~10.
In this case again, we use the trivial upper bound of $2^{m_1-s}$. One can verify that when  $m_1-s+1 \leq 10$ then 
$2^{m_1-s} \leq 8(m_1-s)^{3}$. Thus, we get Equation~(\ref{eq:T1}).
\end{proof}
\noindent 
Next, 
we want to bound the number of circuits~$C_2$ in~$M_{2,e}$ 
with $e \in C_2$ and $\wprime(C_2) < r$. 
This is done in  \cref{lem:circuitsR} below, 
where we get a bound of $T_3 := 48(m_2-s)^2$.

To finish Case~3,
we now have 
\begin{eqnarray*}
T_2  &=& \min \{8(m_1-s)^{3}, 2^{m_1-s} \},\\
T_3 &=& 48(m_2-s)^2.
\end{eqnarray*}
By Equation (\ref{eq:type2}), 
the number of  circuits in Case~3 is bounded by~$s \, T_2 \, T_3$.

\begin{claim}
\label{cla:sT1}
For $s=1,3$ and $m_1 \geq 2s+1$,
$$ s  \, T_2 \, T_3 \leq 2400\,  (m_1-2s)^{3} \, (m_2-s)^{2}.$$
\end{claim}

\begin{proof}
We consider $s \, T_2$.
For $m_1-2s \geq 12$, we have
$$s \cdot 8(m_1-s)^{3} \leq 50 (m_1-2s)^{3}.$$
On the other hand, when $m_1-2s \leq 11 $,
$$s \cdot 2^{m_1-s} \leq 50 (m_1-2s)^{3}.$$
This proves the claim.
\end{proof}

\subsection*{Summing up Cases 1, 2 and 3}

Finally we add the  bounds on the number of circuits of Case~1,~2 and~3.
The total upper bound we get is 
\begin{eqnarray*}
T_0 + T_1 + s \, T_2 \, T_3 
&\leq& 240\, (m_1-2s)^{5}  + 240 \, m_2^{5} +  240\, {5 \choose 2} (m_1-2s)^{3} (m_2-s)^{2} \\
&\leq& 240\, (m_2 + m_1-2s)^{5} \\
&\leq& 240 \, m^{5} 
\end{eqnarray*}
This completes the proof of \cref{thm:regular}, 
except for the bound on~$T_3$ that we show in \cref{lem:circuitsR}.
\end{proof}

\noindent 
Now we move on to prove \cref{lem:circuitsR}, which completes the proof of \cref{thm:regular}.
The lemma is similar to \cref{thm:regular}, but differs in two aspects: 
(i) 
we want to count circuits up to a smaller weight bound, that is, $r$, and 
(ii)
we have a weaker assumption that there is no circuit of weight less than~$r$ 
that does not contain a fixed element~$e$.

\begin{lemma}
Let $M=(E,\I)$ be a connected, regular matroid with ground set size $m \geq 2$ 
and $w\colon E \to \N$ be a weight function on $E$.
Let $r$ be a positive integer and let $\widetilde{e} \in E$ be any fixed element of the ground set. 
Assume that there is no circuit $C$ in $M$ such that $\widetilde{e} \not\in C$ and $w(C)<  r$.
Then,
 the number of circuits $C$ in $M$ such that $\widetilde{e} \in C$ and $w(C) < r$ 
is bounded by $48(m-1)^2$.
\label{lem:circuitsR}
\end{lemma}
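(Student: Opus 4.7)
My plan is to prove \cref{lem:circuitsR} by strong induction on~$m$, closely paralleling the proof of \cref{thm:regular} but aiming at the tighter quadratic bound $48(m-1)^2$. For base cases: when $m$ is below a small constant, $2^{m-1} \leq 48(m-1)^2$ already suffices; when $M$ is graphic or cographic, I invoke \cref{lem:graphic-set} with $R = \{\widetilde{e}\}$ and $\alpha = 2$ to get at most $(2(m-1))^2 = 4(m-1)^2$ circuits; and when $M \cong R_{10}$, the trivial bound $2^{9}$ is well below $48\cdot 9^2$. In the inductive step I apply \cref{thm:decomp} to write $M = M_1 \triangle M_2$ as a $2$- or $3$-sum in which $M_1$ is graphic, cographic, or $R_{10}$, and $M_2$ is a regular matroid containing~$\widetilde{e}$; in particular $\widetilde{e}\in E_2\setminus S$ with $S:=E_1\cap E_2$. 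By \cref{lem:3sum-circuits}, every circuit $C$ of~$M$ with $\widetilde{e} \in C$ is of Type~(a), $C\in\C'_2$, or of Type~(b), $C=C_1\triangle C_2$ with $C_1\in\C_{1,e}$, $C_2\in\C_{2,e}$ for some $e\in S$ and $\widetilde{e}\in C_2$ (the option $C\in\C'_1$ being excluded because $\widetilde{e}\notin E_1$).

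After extending~$w$ to $E_1\cup E_2$ by $w(e):=0$ on~$S$, I split Type~(b) into Sub-case~(b1) with $w(C_1)<r/2$ and Sub-case~(b2) with $w(C_2)<r/2$. A uniqueness claim analogous to \cref{cla:unique} -- which crucially uses $\widetilde{e}\notin E_1$ so that the symmetric difference of two candidates lies in $E_1\setminus S$ and decomposes into circuits of~$M$ avoiding~$\widetilde{e}$ -- shows that for each $e\in S$ at most one $C_1\in\C_{1,e}$ has weight below~$r/2$; call it $C^*_e$ when it exists. To handle Type~(a) and Sub-case~(b1) in one stroke, I define a modified weight $w'$ on $M_2$ that agrees with~$w$ off~$S$ and sets $w'(e):=w(C^*_e)$ if $C^*_e$ exists and $w'(e):=r$ otherwise. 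Using iterated gluing of any offending $M_2$-circuit with the available $C^*_{e_i}$'s (and with the oversized-weight default otherwise), I verify that every circuit of~$M_2$ avoiding~$\widetilde{e}$ has $w'$-weight at least~$r$, including the $3$-sum configurations where $|C\cap S|\in\{2,3\}$. Applying the inductive hypothesis to the connected component of $(M_2,w')$ containing~$\widetilde{e}$ then yields at most $48(m_2-1)^2$ circuits of~$M_2$ containing~$\widetilde{e}$ with $w'$-weight below~$r$; since $w'(C_2)=w(C)$ for each Type~(a) and Type~(b1) circuit $C$ of~$M$, this upper-bounds the sum of Type~(a) and Type~(b1) contributions. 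For Sub-case~(b2), a symmetric uniqueness argument (using that $M'_2$-circuits avoiding~$\widetilde{e}$ are heavy) pins down $C_2$ per $e\in S$, and applying \cref{lem:graphic-set} to $M_{1,e}:=M_1\setminus(S\setminus\{e\})$ with $R=\{e\}$ and $\alpha=2$ -- reducing the $R_{10}$ case to the graphic one via \cref{fac:R10} -- bounds the companion $C_1$'s by $4(m_1-s)^2$ per~$e$, totalling at most $12(m_1-s)^2$.

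The remaining task is the algebraic verification that
\[
48(m_2-1)^2 + 12(m_1-s)^2 \;\leq\; 48(m-1)^2
\]
whenever $m=m_1+m_2-2s$ and $m_1,m_2\geq 2s+1$; with $a=m_1-s$ and $b=m_2-s$, this reduces to $4(a-s)(a+2b+s-2)\geq a^2$ for $a,b\geq s+1$, which I expect to verify case-by-case for the admissible $s\in\{1,3\}$. The main technical obstacle will be the correctness of the modified-weight construction -- in particular ensuring that the hypothesis on $(M_2,w')$ is preserved in every configuration of $C\cap S$ in the $3$-sum case (requiring iterated gluings through $C^*_{e_1}\triangle C^*_{e_2}$ and even $C^*_{e_1}\triangle C^*_{e_2}\triangle C^*_{e_3}\triangle S$) and handling the degenerate cases where $M_2$ fails to be connected or some $C^*_e$ does not exist. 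The uniqueness arguments are themselves the reason we invoke \cref{thm:decomp} rather than Seymour's original theorem: it is precisely the property $\widetilde{e}\notin E_1$ that makes all $M'_1$-circuits safely avoid~$\widetilde{e}$ and thus forces the symmetric-difference contradictions to go through.
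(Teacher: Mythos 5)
Your proposal is correct and follows essentially the same route as the paper's proof of \cref{lem:circuitsR}: induction on~$m$, decomposition via \cref{thm:decomp} with $\widetilde e$ forced into~$M_2$, the split into ``$w(C_1)<r/2$'' versus ``$w(C_2)<r/2$'', the per-element uniqueness of $C^*_e$, the lifted weight~$w'$ on~$M_2$, an application of \cref{lem:graphic-set} with $R=\{e\}$ and $\alpha=2$ to bound the companion $C_1$'s, and a final algebraic check. The only differences are cosmetic --- you set $w'(e):=r$ when $C^*_e$ does not exist instead of deleting~$e$ from~$M_2$, and you close the arithmetic via $12(m_1-s)^2$ and the inequality $4(a-s)(a+2b+s-2)\geq a^2$ rather than the paper's $\min\{s\,2^{m_1-s},4s(m_1-s)^2\}\leq 48(m_1-2s)^2$; both routes are valid.
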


\begin{proof}
We closely follow the proof of \cref{thm:regular}.
We proceed again by an induction on~$m$, the size of the ground set~$E$.

For the base case, let $m \leq 10$. 
There are at most $2^{m-1}$ circuits that contain $\widetilde{e}$.
This number is bounded by $48(m-1)^2$, for any $2 \leq m \leq 10$.

For the inductive step,
let~$M = (E,{\mathcal I})$ be a regular matroid with $\abs{E} = m > 10$
and assume that the theorem holds for all smaller regular matroids.
Since $m>10$, matroid~$M$ cannot be~$R_{10}$.
If~$M$ is graphic or cographic, 
then the bound of the lemma follows from \cref{lem:graphic-set}.
Thus, we may assume that~$M$ is neither graphic nor cographic.

By \cref{thm:Seymour}, 
matroid~$M$ can be written as a 1-, 2-, or 3-sum of 
two regular matroids~$M_1 = (E_1,{\mathcal I}_1)$ and~$M_2 = (E_2,{\mathcal I}_2)$.
We use the same notation as \cref{thm:regular},
\begin{eqnarray*}
S &=& E_1 \cap E_2,\\
s &=& \abs{S},\\
m_i &=& \abs{E_i}, \text{ for } i =1,2,\\
\C_i &=& \set{C}{C \text{ is a circuit of } M_i}.
\end{eqnarray*}
The case that~$M$ is a 1-sum of $M_1$ and~$M_2$ is  again trivial.
Hence, we may assume that~$M$ is connected.
By \cref{thm:decomp},
$M$ is a $2$-sum or a $3$-sum of~$M_1$ and~$M_2$,
where~$M_1$ is  a graphic, cographic or the~$R_{10}$ matroid,
and~$M_2$ is a regular matroid containing~$\widetilde{e}$.
For $i=1,2$ and $e \in S$, define
\begin{eqnarray*}
\C_{i,e} &:=& \set{C}{C \in \C_i \text{ and } C \cap S = \{e\}}.
\end{eqnarray*}
Also the weight function~$w$ is extended on~$S$ by $w(e) = 0$, for any $e \in S$.

We again view each circuit $C$ of $M$ as $C_1 \triangle C_2$ and consider cases based on 
how the weight of~$C$ is distributed among~$C_1$ and~$C_2$.
Note that~$\widetilde{e}$ is in~$M_2$ and 
we are only interested in circuits~$C$ that contain~$\widetilde{e}$.
Hence, we have $\widetilde{e} \in C_2$.
Therefore we do not have the case where $C_2 = \emptyset$.
We consider the following two cases.

\begin{description}
\item[{\bf Case (i).}] $\wprime(C_1) < r/2$.
\item[{\bf Case (ii).}]  $\wprime(C_1) \geq r/2$.
\end{description}

\noindent
We will give an upper bound for the number of circuits in each of the two cases.

\subsubsection*{Case (i): $\wprime(C_1) < r/2$} 

Since~$\widetilde{e} \not\in C_1$,
we can literally follow the proof for Case~2 from \cref{thm:regular} for this case.
We have again \cref{cla:unique},
that~$C_1$ is uniquely determined as $C_1 = C_e^*$, for $e \in S$,
or $C_1= \emptyset$.
Therefore the mapping $C \mapsto C_2$ is injective.
The only point to notice now is that the mapping maintains that~$\widetilde{e} \in C$ if and only if $\widetilde{e} \in C_2$.
With the same definition of~$w'$, we also have $w(C) = w'(C_2)$.
Therefore it suffices to get an upper bound on
the number of circuits~$C_2$ in~$M_2$ with $w'(C_2) < r$ and 
$\widetilde{e} \in C_2$.

To apply the induction hypothesis, 
we need the following variant of  \cref{cla:nocircuit}.
It has a similar proof.

\begin{claim}
\label{cla:nocircuit1}
There is no circuit~$C_2$ in~$M_2$ such that $w'(C_2) < r$ and $\widetilde{e} \not\in  C_2$.
\end{claim}

\noindent 
By the induction hypothesis applied to~$M_2$,
the number of circuits~$C_2$ in~$M_2$ with $w'(C_2) < r$ and 
$\widetilde{e} \in C_2$ is bounded by 
$$T_0:=48(m_2-1)^{2}.$$

\subsubsection*{Case (ii): $\wprime(C_1) \geq r/2$}

Since $w(C) = \wprime(C_1) + \wprime(C_2) < r$,
we have $\wprime(C_2) < r/2$ in this case.
This is the major difference to Case~3 from \cref{thm:regular} 
where the weight of~$C_2$ was only bounded by~$r$.
Hence, 
now we have again a uniqueness property similar as in \cref{cla:unique}, 
but for~$C_2$ this time.
A difference comes with~$\widetilde{e}$.
But the proof remains the same.

\begin{claim}
\label{cla:unique2}
For any $e \in S$,
 there is at most one circuit $C_2 \in \C_{2,e}$ with 
$\wprime(C_2) < r/2$ and $\widetilde{e} \in C_2$.
\end{claim}

\noindent 
We conclude that any circuit~$C$ in case~(ii) can be written as
$C = C_1 \triangle C_e^*$, for a $e \in S$ and the unique circuit $C_e^* \in \C_{2,e}$.
Therefore the mapping $C \mapsto C_1$ is injective for the circuits~$C$ of case~(ii).
Thus, it suffices to count circuits $C_1 \in \C_{1,e}$ with $\wprime(C_1) < r$,
for every $e \in S$.

Let $e \in S$ and consider the matroid~$M_{1,e}$
obtained from~$M_1$ by deleting the elements in $S\setminus \{e\}$.
It has $m_1-s+1$ elements.
Since~$M_1$ is a graphic, cographic or~$R_{10}$, 
the matroid~$M_{1,e}$ is graphic or cographic by \cref{fac:closed,fac:R10}.
The circuits in~$\C_{1,e}$ are also circuits of~$M_{1,e}$.

Any circuit~$C_1$ of~$M_{1,e}$ with $e \not\in C_1$ is also a circuit of~$M$.
Thus, there is no circuit $C_1$ of $M_{1,e}$ with $e \not\in C_1$ and $\wprime(C_1) < r$.
Therefore we can apply \cref{lem:graphic-set} to~$M_{1,e}$ with $R = \{e\}$.
We conclude that the number of circuits $C_1 \in \C_{1,e}$ with  $\wprime(C_1) < r$ is at most 
$$T_1:= 4(m_1-s)^{2}.$$
Since there are~$s$ choices for $ e \in S$,
we obtain a bound of~$s \, T_1$.

There is also a trivial bound of $s \, 2^{m_1-s}$ on the number of such circuits. 
We take the minimum of the two bounds. 
Recall from the definition of $2$-sum and $3$-sum that $m_1 \geq 2s+1$.

\begin{claim}
For $s = 1$ or $3$ and $m_1 \geq 2s+1$,
$$\min\{ s \, 2^{m_1-s} , 4s \, (m_1-s)^{2}\} \leq 48(m_1-2s)^2.$$
\end{claim}

\begin{proof}
One can verify that when $m_1 -2s \leq 4 $ then 
$$s \, 2^{m_1-s} \leq 48(m_1-2s)^2.$$
On the other hand, when $m_1 -2s \geq 5 $ then
$$4s \, (m_1-s)^{2} \leq 48(m_1-2s)^2.$$
This proves the claim.
\end{proof}
\noindent 
Hence, we get a bound of $48(m_1-2s)^2$ on the number circuits in case~(ii).
Now we add the number of circuits of case~(i) and~(ii) 
and get a total upper bound of 
\begin{eqnarray*}
48(m_2-1)^{2} + 48(m_1-2s)^2 
		&\leq & 48(m_2-1+m_1-2s)^2 \\
		&\leq & 48(m-1)^2. 
\end{eqnarray*}
This gives us the desired bound and completes the proof of \cref{lem:circuitsR}.
\end{proof}

\bibliographystyle{plain}
\bibliography{tum-postICALP}

\appendix

\section{Proof of \cref{thm:decomp}}
\label{sec:appendix-k-sums}

We show some properties of the sum operation on matroids.
First note that the $k$-sum operations are commutative
because their definition is based on symmetric set differences 
which is commutative.
Further,
it is known that the $k$-sum operations are also associative in some cases. 
We give a proof here for completeness. 
The $2$-sum operation is denoted by~$\twosum$. 
\begin{lemma}[Associativity]
\label{lem:associative}
Let $M = M_1 \twosum M_2$ with $e$ being the common element in $M_1$ and $M_2$.
Let $M_2 = M_3 \triangle M_4$ be a $k$-sum for $k= 2$ or $3$ with the common set $S$. 
Further, let $e \in M_3$.
Then 
\begin{equation}
\label{eq:associative}
M = M_1 \twosum (M_3 \triangle M_4) = (M_1 \twosum M_3) \triangle M_4,
\end{equation}
where $M_1 \twosum M_3$ is defined via the common element $e$ and 
$(M_1 \twosum M_3) \triangle M_4$ is defined via the common set $S$.
\end{lemma}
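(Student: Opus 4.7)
The plan is to establish the equality (\ref{eq:associative}) by showing that both matroids appearing on the right have the same ground set and the same family of circuits. I would split the argument into three steps: reconciling the ground sets, verifying that the intermediate sum $M_1 \twosum M_3$ is a valid 2-sum across $e$, and then matching the circuits via a general characterization of disjoint unions of circuits in $M_1 \triangle M_2$.

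First, on the ground sets: $M_1 \twosum (M_3 \triangle M_4)$ has ground set $E_1 \triangle (E_3 \triangle E_4) = E_1 \triangle E_2$, and $(M_1 \twosum M_3) \triangle M_4$ has ground set $(E_1 \triangle E_3) \triangle E_4$; these coincide by associativity of symmetric difference. For the intermediate sums to be well-defined, I would verify two bookkeeping identities: $E_1 \cap E_3 = \{e\}$ and $S \cap E_1 = \emptyset$. The first holds because $e \in E_1 \cap E_2 \cap E_3$ and any other element of $E_3$ is in either $E_2 \setminus E_1$ or in $S$, hence not in $E_1$; the second because $S$ is disjoint from $E_2$ and $E_1 \cap E_2 = \{e\} \not\subseteq S$.

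Second, I would check that $M_1 \twosum M_3$ is a legitimate 2-sum at the common element~$e$, i.e., that $\{e\}$ is not a circuit of any of $M_1, M_3, M_1^*, M_3^*$. For $M_1$ this is immediate from the hypothesis on $M_1 \twosum M_2$. For $M_3$ I would argue that a loop or coloop at $e$ in $M_3$ would propagate to the same property at $e$ in $M_2 = M_3 \triangle M_4$ via \cref{def:sum} (using $e \in E_3 \setminus S$), contradicting the 2-sum assumption on $M_1 \twosum M_2$.

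Third, I would match the circuit families using the characterization: a subset $C \subseteq E_1 \triangle E_2$ is a disjoint union of circuits of $M_1 \triangle M_2$ if and only if $C = C_1 \triangle C_2$ for some disjoint unions of circuits $C_i$ of $M_i$. One direction is \cref{cla:disjointCircuits}; the reverse follows from \cref{def:sum} together with the fact that every element of $M_1 \triangle M_2$ lies in some circuit (\cref{fac:binary} yields closure of the family of disjoint unions of circuits under symmetric difference). Iterating this characterization, a disjoint union of circuits of $M_1 \twosum (M_3 \triangle M_4)$ is precisely a subset of $E_1 \triangle E_3 \triangle E_4$ of the form $C_1 \triangle C_3 \triangle C_4$ with each $C_i$ a cycle of $M_i$, and exactly the same description applies to $(M_1 \twosum M_3) \triangle M_4$. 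Since symmetric difference of sets is associative, the two cycle families agree, so the circuit families agree, giving (\ref{eq:associative}). I expect the main obstacle to be the second step, since it requires carefully tracing how the role of $e$ in the $k$-sum $M_2 = M_3 \triangle M_4$ is inherited from its role in the original 2-sum $M_1 \twosum M_2$; once this is in hand the remaining manipulations are purely formal.
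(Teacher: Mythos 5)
Your proposal is correct, but it takes a genuinely different route from the paper's. The paper works at the level of individual circuits: it picks a circuit $C$ of $M$, invokes the unique-representation lemma (\cref{lem:3sum-circuits}) twice to write $C = C_1 \triangle C_3 \triangle C_4$, shows this yields a disjoint union of circuits of the reparenthesized matroid via \cref{cla:disjointCircuits}, repeats in the other direction, and finishes by a minimality argument. You instead work at the level of the entire cycle space: you establish the ``iff'' characterization of cycles of $M_1 \triangle M_2$ as the sets $C_1 \triangle C_2$ with $C_i$ cycles of $M_i$ lying inside $E_1 \triangle E_2$, iterate it to express the cycle space of either parenthesization as $\{\,C_1 \triangle C_3 \triangle C_4\,\}$ subject to the same two compatibility constraints ($C_1 \cap \{e\} = C_3 \cap \{e\}$ and $C_3 \cap S = C_4 \cap S$, since $e \notin E_4$ and $S \cap E_1 = \emptyset$), and then conclude equality from associativity of symmetric difference. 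This is cleaner because it sidesteps any case analysis on whether a circuit meets $S$ or $\{e\}$, and it makes the role of associativity explicit; what the paper's argument buys is that it needs only the one-directional \cref{cla:disjointCircuits}, not the full cycle-space characterization.

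Two small points worth tightening. First, in your third step the phrase ``every element of $M_1 \triangle M_2$ lies in some circuit'' is not what you need (and is not even true when coloops are present); the fact you actually use is that the family of cycles of a binary matroid is a $\GF(2)$-vector space, i.e.\ closed under symmetric difference (this is what lets you recombine the per-circuit decompositions $D^{(j)} = C_1^{(j)} \triangle C_2^{(j)}$ into $C_1 = \triangle_j C_1^{(j)}$ and $C_2 = \triangle_j C_2^{(j)}$). Second, your step two (verifying that $M_1 \twosum M_3$ is a legitimate $2$-sum, i.e.\ that $e$ is neither a loop nor a coloop of $M_3$) is more scrupulous than the paper, which silently assumes this. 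The loop case propagates directly via \cref{cla:disjointCircuits}, but the coloop case needs a statement about duals of sums, e.g.\ $(M_3 \triangle M_4)^* = M_3^* \triangle M_4^*$, which is standard but not recorded in the paper, so if you keep that step you would need to cite it.
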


\begin{proof}
We show that the matroids in Equation~(\ref{eq:associative}) have the same circuits.
This implies the equality.
Let $E_i$ denote the ground set of~$M_i$, for $i=1,2,3,4$.

Let $C$ be a circuit of $M = M_1 \twosum M_2$.
We consider the nontrivial case in \cref{lem:3sum-circuits}:
we have $C = C_1 \triangle C_2$ and $e \in C_1 \cap C_2$, 
where $C_1$ and $C_2$ are circuits in $M_1$ and $M_2 = M_3 \triangle M_4$, respectively.
Similarly,
we have $C_2 = C_3 \triangle C_4$,
for circuits $C_3$ and $C_4$ of $M_3$ and $M_4$, respectively.
By our assumption, we have $e \in C_3$.
It follows that $C_1 \triangle C_3 \subseteq E_1 \triangle E_3$ is a circuit of $M_1 \twosum M_3$.
Since~$C_4$ is a circuit of~$M_4$, 
we get from \cref{cla:disjointCircuits} 
that $(C_1 \triangle C_3) \triangle C_4 $
is a disjoint union of circuits in $(M_1 \twosum M_3) \triangle M_4$.

For the reverse direction, 
consider a circuit $C$ of $(M_1 \twosum M_3) \triangle M_4$. 
Similarly as above by \cref{lem:3sum-circuits},
we can write $C = C' \triangle C_4$,
where $C'$ and $C_4$ are circuits of $M_1 \twosum M_3$ and $M_4$, respectively,
with $S \cap C' = S \cap C_4$.
Further, 
$C' = C_1 \triangle C_3$, where $C_1$ and $C_3$ are circuits in $M_1$ and $M_3$, respectively.
Since $S$ is disjoint from $E_1$, it must be that $S\cap C' = S \cap C_3$.
Thus, $C_3 \triangle C_4 \subseteq E_3 \triangle E_4$ is a union of disjoint circuits in $M_3 \triangle M_4$.
Since, $C_1$ is a circuit in $M_1$, it follows that $C_1 \triangle (C_3 \triangle C_4)$
is a disjoint union of circuits in $M_1 \twosum (M_3 \triangle M_4)$.

Thus, we have shown that a circuit of one matroid in Equation~(\ref{eq:associative})
is a disjoint union of circuits in the other matroid and vice-versa.
Consequently, by the minimality of circuits, it follows that their sets of circuits must be the same. 
\end{proof}

\noindent 
Truemper proves the statement of \cref{thm:decomp} for  $3$-connected matroids.

\begin{definition}[$3$-connected matroid~\cite{Tru98}]
A matroid $M=(E,\I)$ is said to be $3$-connected if for $\ell = 1,2$, and for any partition $E = E_1 \cup E_2 $
with $\abs{E_1},\abs{E_2} \geq \ell$ we have
$$\rank(E_1) + \rank(E_2) \geq \rank(E) +\ell. $$
\end{definition}

\begin{lemma}[Decomposition of a matroid~\cite{Tru98}] 
\label{lem:3connected}
If a binary matroid is not $3$-connected then it can be written as a $2$-sum or $1$-sum of two smaller binary matroids.
\end{lemma}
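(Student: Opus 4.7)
The plan is to argue by cases on which value of the connectivity parameter $\ell\in\{1,2\}$ witnesses the failure of $3$-connectivity.

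In the case $\ell=1$ we have a partition $E=E_1\cup E_2$ with $|E_1|,|E_2|\ge 1$ and $\rank(E_1)+\rank(E_2)<\rank(E)+1$. Submodularity of the rank, together with $\rank(\emptyset)=0$, forces equality $\rank(E_1)+\rank(E_2)=\rank(E)$. Set $M_i:=M|E_i$. Geometrically the column spans of $E_1$ and $E_2$ in any binary representation of $M$ are complementary subspaces, so independent sets of $M$ are precisely unions $I_1\cup I_2$ with $I_i$ independent in $M_i$. Equivalently, the circuits of $M$ are exactly the circuits of $M_1$ together with the circuits of $M_2$, which is the definition of the $1$-sum, giving $M=M_1\onesum M_2$. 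Both summands have ground sets strictly smaller than $E$ since both $E_i$ are nonempty proper subsets.

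In the case $\ell=2$ I may assume (by the previous case) that $M$ is connected, so every partition satisfies $\rank(E_1)+\rank(E_2)\ge \rank(E)+1$. Failure of $3$-connectivity at $\ell=2$ then yields a partition with $|E_1|,|E_2|\ge 2$ and $\rank(E_1)+\rank(E_2)=\rank(E)+1$. I would construct the $2$-sum as follows. Fix a binary matrix $A$ representing $M$ and let $V_i\subseteq \GF(2)^{n}$ be the span of the columns indexed by $E_i$. The rank identity forces $\dim(V_1\cap V_2)=1$; let $v$ denote the unique nonzero element of this intersection. Build a binary matrix $A_i$ by appending to the columns of $A$ indexed by $E_i$ an additional column $v$ labeled by a fresh element $e$, and let $M_i$ be the binary matroid on $E_i\cup\{e\}$ represented by $A_i$. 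To finish I would verify: (a) the sum is a valid $2$-sum, namely $\{e\}$ is neither a loop nor a coloop of either $M_i$ (loop is ruled out by $v\neq 0$; coloop is ruled out because $v\in V_i$ is a nontrivial $\GF(2)$-combination of the columns indexed by $E_i$, using $|E_i|\ge 2$ and connectivity); (b) each circuit $C$ of $M$ lies in the circuit family of $M_1\twosum M_2$, where circuits of $M$ contained in one side coincide with the corresponding circuits in $M_i$ that avoid $e$, while a circuit $C$ meeting both sides decomposes uniquely as $C=C_1\triangle C_2$ with $C_i$ a circuit of $M_i$ containing $e$, since the $\GF(2)$-dependence among the columns of $C$ projects on each side to an element of $V_1\cap V_2$, which must be $v$; (c) conversely, any such symmetric difference $C_1\triangle C_2$ yields a disjoint union of circuits of $M$, because the two projected column sums both equal $v$ and hence cancel. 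Finally the sizes satisfy $|E_i\cup\{e\}|=|E_i|+1\le m-1<m$ as $|E_{3-i}|\ge 2$.

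The main obstacle is the $\ell=2$ case, specifically the circuit correspondence in item (b). The crucial feature being exploited is that over $\GF(2)$ the one-dimensional intersection $V_1\cap V_2$ singles out a canonical common vector $v$; this is precisely what permits a crossing dependence in $M$ to be split into two smaller dependences, one on each side, each passing through the single added element $e$. Over a non-binary field there would be a continuum of choices for the shared vector and the two sides could not in general be reconciled through one element, which is why the lemma is restricted to binary matroids. The bookkeeping required to see that the projections $C\cap E_i$ together with $\{e\}$ are themselves circuits of $M_i$ (rather than merely dependent sets) is where the minimality of $C$ in $M$ is essential, and this is the most delicate verification in the proof.
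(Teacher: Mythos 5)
The paper states \cref{lem:3connected} only as a citation to Truemper and does not give its own proof, so there is no in-paper argument for you to be compared against; your job here was genuinely to supply a proof. Your proof is correct and follows the standard route for producing $k$-sum decompositions from $k$-separations of binary matroids. In the $\ell=1$ case the rank equality makes the column spans $V_1$ and $V_2$ complementary, so $M$ is the direct ($1$-)sum of the two restrictions. In the $\ell=2$ case the modular identity $\dim V_1+\dim V_2=\dim(V_1+V_2)+\dim(V_1\cap V_2)$ with $\rank(E_1)+\rank(E_2)=\rank(E)+1$ forces $\dim(V_1\cap V_2)=1$, and over $\GF(2)$ this space has a unique nonzero vector $v$; appending $v$ as the new common column on each side gives $M_1,M_2$, and $\{e\}$ is neither a loop ($v\neq 0$) nor a coloop ($v\in V_i$, so deleting $e$ does not drop the rank of $M_i$). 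Your step (b) does close as you anticipate: for a crossing circuit $C$ of $M$, the column sum over $C\cap E_i$ lies in $V_1\cap V_2$ and is nonzero (else $C\cap E_i\subsetneq C$ would be dependent), hence equals $v$; and $(C\cap E_i)\cup\{e\}$ is minimal, since any proper circuit $D$ of $M_i$ inside it containing $e$ would make $D\setminus\{e\}$ sum to $v$, so $(C\cap E_i)\setminus(D\setminus\{e\})$ would sum to zero, giving a proper dependent subset of $C$ in $M$. Two small remarks: the invocation of ``connectivity'' in your step (a) is not needed --- $v\in V_i$ already follows from the construction and suffices; and your closing observation that the uniqueness of $v$ is the $\GF(2)$-specific ingredient is exactly the right thing to highlight, since it is what breaks over other fields.
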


\begin{theorem}[Truemper's decomposition for $3$-connected matroids, \cite{Tru98}]
\label{thm:decomp3}
Let $M$ be a $3$-connected, regular matroid, that is not graphic or cographic and is not isomorphic to $R_{10}$.
Let $\widetilde{e}$ be a fixed element of the ground set of $M$.
Then $M$ is a $3$-sum  of $M_1$ and $M_2$, where $M_1$  is a graphic or a cographic matroid
and $M_2$ is a regular matroid that contains $\widetilde{e}$.
\end{theorem}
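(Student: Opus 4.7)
\textbf{Proof plan for \cref{thm:decomp3}.} The plan is to reduce the statement to Truemper's original decomposition result and then massage the outcome so that the marked element $\widetilde{e}$ lies in the regular summand. The starting point is Seymour's theorem (\cref{thm:Seymour}), which gives a decomposition tree for $M$ whose internal nodes are $1$-, $2$-, or $3$-sums and whose leaves are graphic, cographic, or $R_{10}$ matroids. Since $M$ is $3$-connected, \cref{lem:3connected} rules out $1$-sums and $2$-sums at the root. Using the associativity of $k$-sums established in \cref{lem:associative}, I can re-associate the tree so that $M$ is expressed as a single $3$-sum $M = L \oplus_3 M'$, where $L$ is one of the leaves of the tree and $M'$ corresponds to the remainder of the tree (which is regular by closure of regularity under $3$-sums and by \cref{thm:reg}).

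First I would choose the leaf $L$ so that $\widetilde{e}$ lives in the subtree that becomes $M'$; this is always possible because $\widetilde{e}$ is a single element and Seymour's tree has at least two leaves (as $M$ is neither graphic, cographic, nor $R_{10}$). This settles the placement requirement $\widetilde{e} \in M_2 := M'$ as long as one admissible leaf exists. Regularity of $M_2$ then follows inductively by applying \cref{thm:Seymour} to the remaining subtree and reconstructing $M_2$ via the same $k$-sum operations, each of which preserves regularity.

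The remaining issue is to guarantee that the leaf $L$ we pick is graphic or cographic, and not $R_{10}$. This is the main obstacle. If the tree has at least one graphic or cographic leaf disjoint from $\widetilde{e}$, we are done. Otherwise every candidate leaf is $R_{10}$; in that case I would appeal to \cref{fac:R10}: deleting any element of the three-element common set $S$ from $R_{10}$ yields a graphic matroid. Combined with the associativity of $k$-sums, this lets me ``expand'' an $R_{10}$ leaf into a graphic matroid glued back through a $3$-sum, at the cost of possibly introducing a further $3$-sum; after at most one such expansion the top-level leaf becomes graphic while $\widetilde{e}$ remains on the other side. This expansion step is exactly the content of Truemper's refinement \cite[Lemma 11.3.18]{Tru98} of Seymour's theorem, which I would invoke as a black box if a self-contained argument becomes unwieldy.

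The main technical difficulty I anticipate lies in the $R_{10}$ replacement step: one has to verify that the $3$-element set $S$ along which the decomposition takes place is indeed a circuit in the newly produced graphic matroid and is not a cocircuit, so that the resulting $\oplus_3$ operation is well-defined in the sense of \cref{def:sum}, and that regularity of $M_2$ is preserved throughout. The associativity lemma (\cref{lem:associative}) is what makes the re-bracketing rigorous, and \cref{thm:reg} together with the minor-closure fact (\cref{fac:closed}) is what keeps regularity intact across the rearrangement. Once these are in place the theorem follows, matching the statement of Truemper's lemma up to the explicit element-placement condition.
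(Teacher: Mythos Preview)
The paper does not prove \cref{thm:decomp3}; it is quoted as a black-box result from Truemper's book \cite[Lemma~11.3.18]{Tru98}. What the appendix actually proves is \cref{thm:decomp}, the version for \emph{connected} (rather than $3$-connected) matroids, and that proof takes \cref{thm:decomp3} as given and combines it with \cref{lem:3connected} and the associativity \cref{lem:associative}. So there is no ``paper's own proof'' of this statement to compare against.

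As for your plan itself, there is a genuine gap. The associativity available to you, \cref{lem:associative}, only re-brackets when the \emph{outer} operation is a $2$-sum: it shows $M_1 \twosum (M_3 \triangle M_4) = (M_1 \twosum M_3) \triangle M_4$. Your argument needs to re-associate a Seymour tree whose root (by $3$-connectedness) is a $3$-sum, pushing a chosen leaf up to the top; that requires $3$-sum/$3$-sum associativity, which is not established here and is genuinely more delicate, since for a $3$-sum the common $3$-element set must be a circuit in \emph{both} summands and must avoid cocircuits --- conditions that do not automatically survive re-bracketing. Your $R_{10}$ ``expansion'' step is also not justified: \cref{fac:R10} only says that deleting elements from $R_{10}$ yields a graphic matroid, which is a statement about minors, not a $3$-sum decomposition of $R_{10}$ into a graphic piece and a remainder. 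In fact $R_{10}$ is $3$-connected and does not split nontrivially in the way your argument needs. Truemper's actual proof relies on his own, finer decomposition theory for regular matroids and is not a short corollary of Seymour's theorem plus bookkeeping; treating it as a citation, as the paper does, is the right move.
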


\noindent
\cref{thm:decomp} can be seen as the extension of \cref{thm:decomp3} to connected regular matroids.

\begin{proof}[Proof of \cref{thm:decomp}]
The proof is by induction on the ground set size of $M$.
If $M$ is $3$-connected then the statement is true by \cref{thm:decomp3}. 
If $M$ is not $3$-connected, then we invoke \cref{lem:3connected}.
Since $M$ is connected,
it can be written as $2$-sum of two matroids $M = M_1 \twosum M_2$. 
From the definition of a $2$-sum, it follows that $M_1$ and $M_2$ 
are minors of $M$ (see~\cite[Lemma 2.6]{Sey80}), and thus are regular matroids by \cref{fac:closed}. 
Without loss of generality, let the fixed element $\widetilde{e}$ be in $M_2$.
If $M_1$ is graphic, cographic or $R_{10}$ then we are done. 

Suppose,  $M_1$ is neither of these.  
Let $e'$ be the element common in the ground sets of $M_1$ and $M_2$.
By induction, $M_1$ is a $2$-sum or a $3$-sum $M_1 = M_{11} \triangle M_{12}$,
where $M_{12}$ is a regular matroid that contains $e'$ 
and $M_{11}$ is a graphic or cographic matroid, or a matroid isomorphic to $R_{10}$.
Since $M_{12}$ and $M_2$ share $e'$, we can take the $2$-sum of these two matroids using $e'$.
By \cref{lem:associative}, the matroid $M$ is the same as $M_{11} \triangle (M_{12} \twosum M_{2} )$.
The matroid $M_{12} \twosum M_{2}$ contains $\widetilde{e}$ and is regular because 
both $M_{12}$ and $M_2$ are regular (see \cite[Theorem 11.3.14]{Tru98}).
Thus, the two matroids $M_{11}$ and $M_{12} \twosum M_{2}$ satisfy the desired properties.
\end{proof}

\end{document}